\newtheorem{proposition}{Proposition}
\newtheorem*{lemma*}{Lemma}
\theoremstyle{definition}
\newtheorem{definition}{Definition}
\newtheorem{remark}{Remark}
\newtheorem{example}{Example}
\newcommand{\inprod}[2]{\ensuremath{\langle #1 , \, #2 \rangle}}
\newcommand{\mset}[1]{\ensuremath{ \{\hspace{-2.5pt}\{ #1 \}\hspace{-2.5pt}\} }}
\xdef\csname vec\x \endcsname{\noexpand\ensuremath{\noexpand\bm{\x}}}
\xdef\csname vec\x \endcsname{\noexpand\ensuremath{\noexpand\bm{\x}}}
\xdef\csname c\x \endcsname{\noexpand\ensuremath{\noexpand\mathcal{\x}}}
\xdef\csname bb\x \endcsname{\noexpand\ensuremath{\noexpand\mathbb{\x}}}
\begin{document}

\title{Managing Correlations in Data and Privacy Demand}

\author{Syomantak Chaudhuri}
\email{syomantak@berkeley.edu}
\orcid{1234-5678-9012}
\affiliation{%
  \institution{University of California, Berkeley}
  \city{Berkeley}
  \state{CA}
  \country{USA}
}

\author{Thomas A. Courtade}
\email{courtade@berkeley.edu}
\orcid{0000-0001-7106-7358}
\affiliation{%
  \institution{University of California, Berkeley}
  \city{Berkeley}
  \state{CA}
  \country{USA}
}

\begin{abstract}
		Previous works in the differential privacy literature that allow users to choose their privacy levels typically operate under the heterogeneous differential privacy (HDP) framework with the simplifying assumption that user data and privacy levels are not correlated. Firstly, we demonstrate that the standard HDP framework falls short when user data and privacy demands are allowed to be correlated. Secondly, to address this shortcoming, we propose an alternate  framework, Add‐remove Heterogeneous Differential Privacy (AHDP), that jointly accounts for user data  and privacy preference.
        We show that AHDP is robust to possible correlations between data and privacy. 
        Thirdly, we formalize the guarantees of the proposed AHDP framework through an operational hypothesis testing perspective. 
	The hypothesis testing setup may be of independent interest in analyzing other privacy frameworks as well.
	Fourthly, we show that there exists non-trivial AHDP mechanisms that notably do not require prior knowledge of the data-privacy correlations.
    We propose some such mechanisms and apply them to core statistical tasks such as mean estimation, frequency estimation, and linear regression.
	The proposed mechanisms are simple to implement with minimal assumptions and modeling requirements, making them attractive for real-world use.
	 Finally, we empirically evaluate proposed AHDP mechanisms, highlighting their trade-offs using LLM-generated synthetic datasets, which we release for future research.
\end{abstract}

\keywords{Differential Privacy, Personalized Privacy, Correlations in Privacy and Data}

\maketitle

\section{Introduction}
\label{sec:intro}

Widespread and cheap availability of internet, compute and storage, has led to incessant user data collection by various services.
Data is a valuable commodity for uses such as targeted advertisements and determining insurance premiums, and thus, a market for data has been established \cite{Data17,Data21}.
A rise in public awareness has led to a growing demand for privacy, as reflected in legislative actions like the GDPR in Europe \cite{GDPR} and the California Consumer Privacy Act (CCPA) \cite{CCPA}.
Over the past couple of decades the framework of Differential Privacy (DP) \cite{DW06,Dwork06} has become the gold standard for analysis in academia.
DP has been deployed in several real-world tasks by organizations like the U.S. Census \cite{Erlin14}, Apple \cite{2017LearningWP}, and Google \cite{Abowd18}.
See \citet{Desfontaines2020RealWorldDP} for a selected list of major real-world deployments of DP.

A significant portion of the research on DP has focused on the case where every user is provided with the same level of privacy.
However, studies have found that different people tend to have different privacy `sensitivities' \cite{Ackerman99}.
For example. In \citet{Acquisti2005privacy}, the authors demonstrate a heterogeneity in privacy concerns among survey users.
Sometimes, this heterogeneity of privacy concern is naturally observed when users of a service can choose to opt-into a feature at some privacy cost.
For example, in an old Facebook feature, users could share their current location in order to be notified when their friends are nearby \cite{Li12,Chaudhuri24}.
Thus, heterogeneity in privacy demand arises naturally in real-world applications.
The line of work on mechanism design for data auctions graciously embraces the idea of providing different levels of privacy to different users \cite{Asu22,Pandey23,Anjarlekar23,Sen24,Kang23,Fallah24,Cummings23}.
However, a rather uncomfortable aspect that researchers have shied away from when dealing with heterogeneous privacy demand is the question of dealing with correlations of user data and user privacy demand.
For example, if users are sharing their body weights with a service provider, then it is conceivable that a user with extremely high or low body weight may demand more privacy than others.
For research areas like mechanism design, this critical question has remained unaddressed.

We focus on this regime of heterogeneous DP where user data and privacy demand can be correlated. 
We consider the central-DP setup (see \cref{fig:CentralDP}) where users send their true data and privacy preferences to the central server, and expect the server to publish data that adhere to their proposed privacy constraint under an agreed upon privacy framework.
We highlight some fundamental limitations of the heterogeneous DP framework used in the literature under data-privacy correlations and show that an alternate framework resolves some of these issues.

\begin{figure}[H]
    \centering
    \includegraphics[width=0.7\linewidth]{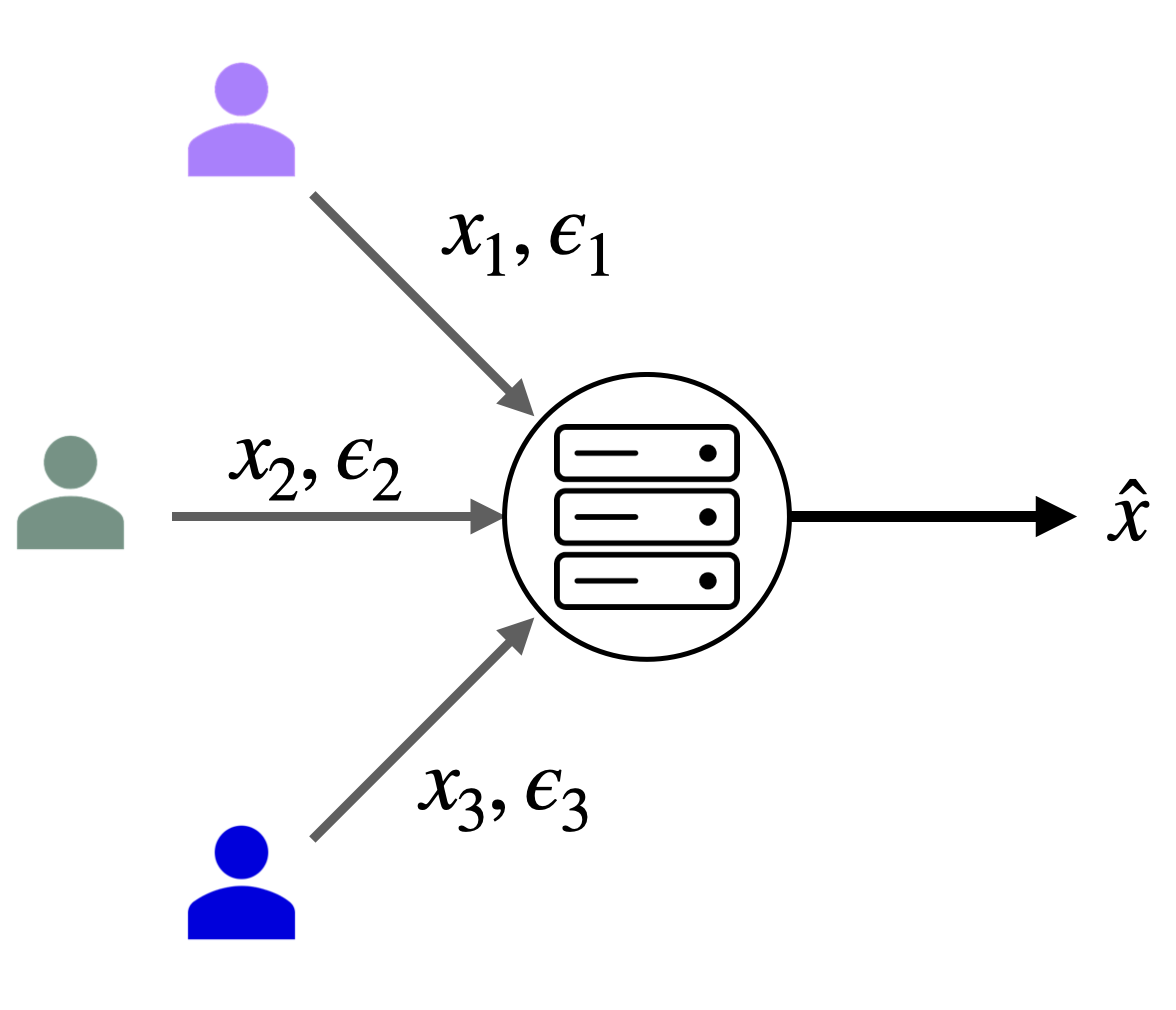}
    \caption{Central-DP: users send their true data, $x$, and privacy demand, $\epsilon$, to a central server.}
    \label{fig:CentralDP}
\end{figure}

\subsection{Our Contributions}

We focus on central-DP  in this work; in central-DP \cite{DW06}, users send their  data to a trusted server which promises to use a private mechanism on the user data.

Recall that in homogeneous $\epsilon$-DP,   `neighboring' datasets are only distinguishable up to a level determined by $\epsilon$, where smaller $\epsilon$ corresponds to more privacy.
Under the add-remove model of neighbors, two datasets are considered neighbors if one of the datasets can be  obtained by removing some datapoint from the other.
In the swap model of neighbors, two datasets are considered neighbors if they have exactly one differing datapoint.
In Heterogeneous DP (HDP), under data-privacy correlations, we show that mechanisms satisfying the standard definition of HDP used in the literature fall short of what one might expect from a private mechanism.
This pitfall is ultimately due to an unsuitable notion of neighbors implied by the standard HDP definition; that is, two datasets are considered neighbors if one user's data is changed but their privacy demand remains the same.

To remedy this, we consider an alternate definition of HDP in conjunction with a definition of `correlation' of data and privacy.
We term the proposed framework as Add-remove Heterogeneous Differential Privacy (AHDP). 
Crucially, the AHDP definition depends on the set of possible values that a tuple -- user data and user privacy demand -- can take, similar to how the homogeneous DP definition depends on the domain of data.
The domain of this tuple defines the the correlation  between the user data and user privacy demand.
Roughly speaking, the AHDP framework ensures that a mechanism's output doesn't help distinguish a particular dataset from another one obtained by appending to it  a new user's data and privacy demand.
The AHDP framework mitigates the  problem raised with the standard HDP definition.

Since AHDP is a new privacy definition, it is important to understand the what the `privacy' guarantees really mean, and we believe this is best understood through the lens of hypothesis testing.
We introduce a framework for analyzing the privacy of any privacy definition via hypothesis testing in \cref{sec:HT}.
In a nutshell, the framework characterizes the `identifiability' of a dataset, from a set of possible datasets, based on a mechanism output that satisfies the required privacy definition.
This is multiple hypothesis testing problem and the set of possible datasets, which form the set of hypotheses, can be thought of as the threat model.
We analyze AHDP and standard DP in this hypothesis testing framework for comparison.
In addition, we demonstrate how this framework can be used to demonstrate that standard HDP fails to provide meaningful privacy guarantees under correlations.

Although the AHDP framework relies on the set of possible tuples over user data and privacy demand, we demonstrate that mechanisms can be designed without requiring knowledge of these `correlations'. 
These mechanisms, which we call universal AHDP mechanisms, satisfy the AHDP property for any underlying domain of user data and privacy demand.
This key property allows universal AHDP mechanisms to be readily deployed in practical settings without the need for explicit correlation modeling.
We demonstrate some universal AHDP mechanisms for common tasks like sum, count, mean estimation, and linear regression that are simple to implement.
Some experiments  for  universal AHDP mechanisms are also presented to investigate which scheme works best in practice. \\

To summarize our contributions, we 
\begin{enumerate}
    \item Show that the standard HDP definition does not provide meaningful privacy guarantees when data and privacy can be correlated.
    \item Introduce a more suitable privacy definition, termed AHDP, that can alleviate some of the aforementioned issues. %
    \item Propose a general hypothesis test-based method to evaluate the privacy guarantees associated with different frameworks and present the AHDP framework's guarantees under this method.
    \item Show that there exist mechanisms that are AHDP without requiring prior knowledge of  possible data-privacy correlations, making them useful in practice. We propose such mechanisms for practically relevant problems like mean estimation,  frequency estimation, and linear regression.
    \item We experimentally compare some of these methods to demonstrate their utility.
\end{enumerate}

We emphasize that this work is to be seen as a beginning step in dealing with heterogeneity in DP under data-privacy correlations -- a question long ignored by the academic community.
When considering heterogeneous privacy, such correlations must be taken into account and the corresponding challenges that stem from it.
A better grasp of these challenges can help better understanding the true trade-offs associated with homogeneous and heterogeneous DP frameworks.
We envision this line to work to be most relevant to the mechanism design community due to the inherent nature of heterogeneity rooted in the modeling assumptions present in most works.

\paragraph{Organization:} \cref{sec:PD} introduces the necessary notations, background on homogeneous DP, heterogeneous DP, and the issues associated with heterogeneous DP used in literature.
\cref{sec:AHDP} introduces our proposed AHDP framework and some basic properties of AHDP.
We present a study of threat model and operational understanding of the privacy afforded by AHDP in \cref{sec:HT}.
We present some AHDP conforming mechanisms in \cref{sec:algo} and \cref{sec:exp} contains experiments on comparing different AHDP mechanisms.
We offer some concluding thoughts and future directions in \cref{sec:con}.

\subsection{Related Work}

In differential privacy (DP), there are two parallel regimes known as central-DP \cite{DW06} and local-DP \cite{Kasiviswanathan11}.
We focus on the central-DP regime in this work.
Several modifications of DP have been proposed, see \citet{desfontaines2019sok} for a selected list of frameworks.

A form of heterogeneity in privacy demand considered in recent literature is the case of a (homogeneous) private dataset supplemented with some publicly available dataset  \cite{Bassily20,Bassily20-learn,Liu21,Alon19,Nandi20,Kairouz21,Amid22,Wang19,Kamath22}.
While employed in deep learning training and alignment, a critique of such paradigm presented by \citet{Tramer22} is that publicly available is not necessarily data that requires no privacy.

More generally, some early works \cite{Alaggan17,Ghosh11,Jorg15}  consider the question of Heterogeneous DP (HDP), also known as Personalized DP, and use the standard HDP definition (\cref{def:hetDP}) \footnote{\citet{Jorg15} employ a definition similar to the one we use, see \cref{sec:AHDP}.}.
The works of \citet{Alaggan17} and \citet{Jorg15} present versatile algorithms for HDP but they are task agnostic so they do not necessarily perform at par with other techniques tailored to specific tasks \cite{journal23}.
HDP has been considered for tasks like mean estimation \cite{Asu22,isit-paper,journal23,Cummings23,Ferrando21}, frequency estimation \cite{Chaudhuri24}, ridge regression \cite{Acharya24}, and logistic regression \cite{Anjarlekar23}.
A popular assumption in these works is that the user data does not influence the user privacy demand, i.e., for a given privacy demand from users, the user data is i.i.d. from some unknown distribution.
However, this assumption is not justified by any particular work and can be seen as an approximation to bypass technical challenges.

HDP is a natural topic of research in mechanism design literature where the focus is on coming up with and analyzing markets for data \cite{Asu22,Pandey23,Anjarlekar23,Sen24,Kang23,Fallah24,Cummings23}.
Users are modeled to have some inherent and differing `privacy sensitivity' and the server aims to maximize some accuracy metric in estimation task.
The server comes up with a choice of privacy to be provided to each user and a corresponding payment.
Usually, the privacy sensitivity and the payments are not assumed to be requiring privacy.

There exists a separate line of research that focuses on the effect of correlations between different features or datapoints on the privacy loss \cite{aliakbarpourenhancing,zhang2022correlated,chong2024may}.
In the past, \citet{Chaudhuri24} and \citet{Ghosh11} have considered the question of correlation between user data and privacy.
\citet{Chaudhuri24} model correlations between data and privacy in a worst-case setting for statistical estimation tasks like mean and frequency estimation.
However, they assume that the privacy demanded by the users is public so they can use the standard HDP definition.
\citet{Ghosh11} show a negative result that a dominant strategy incentive compatible mechanism for selling privacy at an auction is not possible in a meaningful way where privacy is provided to both user's data and privacy sensitivity, however, as we point out, the standard HDP definition employed by them is not suitable under correlations.

Hypothesis testing under differential privacy has been studied before in literature \cite{Wasserman10,Kairouz15}.
In fact, the $f$-DP framework proposed by \citet{dong2022gaussian} is defined in terms of hypothesis testing; they argue that differential privacy is most naturally defined in terms of hypothesis testing.
Past works in membership inference attack have also used DP as a solution \cite{ye2022one} but these are predominantly homogeneous DP.
Other notions of `information leakage' have also been considered \cite{alvim2012differential, asoodeh2014notes} but these works typically do not focus on the hypothesis testing perspective.

\citet{Jorg15} also consider the add-remove notion of neighbors in HDP but they do not focus on correlations that may exist between data and privacy demand.
Considering correlations between data and privacy demand requires us to first define what these correlations mean.
We model these correlations as domain of possible values the user data and privacy demand can take, similar to how traditional DP definition depends on the domain of the data (see \cref{def:newDP}).
We show that the Sampling Mechanism proposed by \citet{Jorg15} satisfies our proposed AHDP privacy framework and does not need to be aware of the domain of correlations that exist in the data, i.e., it is an universal-AHDP mechanism.

\section{Background and Problem Definition} \label{sec:PD}

\subsection{Notation}
With some abuse of notation, we represent $\bbR \cup \{ \infty \}$ by $\bbR$.
We denote positive real numbers by $\bbR_{> 0}$ and non-negative reals by $\bbR_{\geq 0}$, similarly define $\bbZ_{> 0}$ and $\bbZ_{\geq 0}$ for integers.
The notation $[n]$ refers to the set $\{1,2,\ldots,n\}$.
The probability simplex in $n$-dimensions is represented by $\Delta_n$.
For a set $\cX$, if $\vecx$ denotes a sequence of $n$ elements such that $\vecx \in \cX^n$, we refer to the $i$-th element of $\vecx$ by $x_i$.
For a set $\cX$, we represent a multiset on $\cX$ as the order pair $D = (\cX,h)$ where $h: \cX \to \bbZ_{\geq 0}$ counts the number of occurrences of each element of $\cX$ in $D$.
Denote the set of all multisets with countable elements by $\cS(\cX) := \{(\cX,h)| h:\cX \to \bbZ_{\geq 0}, \exists n \geq 0, \sum_{x \in \cX}h(x) = n \}$.
We occasionally use the notation $\mset{\cdot}$ to represent in multiset in an unordered manner.
Standard multiset notations are used \cite{Costa21}, see \cref{A:multisets} for definitions of operators on multisets used in this work.

With some abuse of notation, for a randomized algorithm $M$ mapping inputs to a probability distribution on output space, $M(\cdot)$ will interchangeably be used to refer to the output distribution or a sample from it.
 $|\log \frac{0}{0}|$ should be interpreted as $0$ and $|\log \frac{c}{0}| = \infty$ for any $c>0$.
 The notations $\lesssim$ and $\simeq$ denote inequality and equality that hold up to a universal multiplicative constant.
The notation $\mathsf{Clip}_{[a,b]}(x)$ for $a \leq b$ is defined as  $\min\{b,\max\{a,x\}\}$.

\subsection{Background on Differential Privacy}

In this work, we restrict ourselves to the Central-DP regime where users send their true data to the server and trust the server to adhere to a private algorithm.

We begin with the well-known definition of Differential Privacy (DP) in \cref{def:pureDP} \cite{DW06,Dwork06}, adapted from \cite{Kulesza24}.
In this work, we shall have inputs to the mechanism as multisets, i.e., an order pair of elements from a domain $\cX$ and a corresponding count of each element.

\begin{definition}[Differential Privacy] \label{def:pureDP}
A randomized algorithm $M: \cS(\cX) \to \cY$ is said to be $\epsilon$-DP for $\epsilon \in \bbR_{\geq 0}$ if for all measurable sets $S \subseteq \cY$,
\begin{equation} \label{eq:pureDP-def}
    \left| \log  \frac{\Pr\{M(D) \in S\}}{\Pr\{M(D') \in S\}} \right| \leq \epsilon,
\end{equation}
where  $D,D'$ are `neighboring' datasets.
\end{definition}

There are two main types of neighbors that are considered in literature, namely, the swap model of neighbors and the add-remove model of neighbors.
We outline the two notions next.
\begin{enumerate}
    \item \textbf{Swap model:} In the swap model, datasets $D$ and $D'$  are neighbors if they are of the same size with at most one element being different , i.e., $|D| = |D'| > 0$ and $|D - D'| = |D' - D| = 1$.
    \item \textbf{Add-remove model:} In the add-remove model, datasets $D$ and $D'$ are neighbors if one of them can be obtained from the other by deleting or adding an element. In other words, $|D - D'| + |D' -  D| = 1$.
\end{enumerate}

An issue with considering the input to the mechanism as a multiset is that it restricts the mechanism to be order invariant for the presented dataset.
This treatment is similar to the one presented in \cite{Dwork14Alg} but other definitions of DP that need not be order invariant have been considered in literature.
In fact, order invariant mechanisms have been shown to boost privacy in an appropriate sense, see \citet{Erlingsson19} for reference.

In the framework of Differential Privacy, lower $\epsilon$ means higher privacy since a lower value of $\epsilon$ constrains the algorithm from changing the output distribution as the inputs are varied.
The add-remove model protects the size of the dataset as well as the values, while the swap model only protects the values present in the dataset and not the size of it.
Further, an add-remove $\epsilon$-DP algorithm can be shown to be a swap $2\epsilon$-DP algorithm but a transformation from swap model to add-remove model does not hold in general. 
Readers can refer to \citet{Kulesza24} for more discussion.

We take a moment to note that by applying \cref{def:pureDP} multiple times (known as Basic Composition), \cref{def:pureDP} implies
\begin{equation} \label{eq:altDP-def}
    \left| \log  \frac{\Pr\{M(D) \in S\}}{\Pr\{M(D') \in S\}} \right| \leq \epsilon d(D,D'),
\end{equation}
where $d(D,D')$ `measures' the distance between datasets $D$ and $D'$.
In the swap model, we have
\begin{equation}
d_{\mathsf{swap}}(D,D') = \begin{cases}
    |D - D'| & \text{ if $|D| = |D'|$}, \\
    \infty & \text{else,}
\end{cases}    
\end{equation}
and in the add-remove model, we have
\begin{equation}
d_{\mathsf{ar}}(D,D') = 
    |D - D'| + |D'-D|.
\end{equation}
It is straightforward to see that \cref{eq:altDP-def} also implies \cref{def:pureDP}, making them equivalent.
$\epsilon$-DP ensures that two datasets that are neighbors can only be `distinguished' up to a level decided by $\epsilon$.

One way to view this DP condition is to construct a graph $\cG = (V,E)$ with the vertices $V = \cS(\cX)$ being the set of datasets and edges between vertices $D$ and $D'$ such that $d(D,D') = 1$.
In this interpretation, DP-based privacy between two datasets degrade as the datasets get further apart in the graph, and DP provides no guarantees for datasets that are not connected in the graph $\cG$.
The latter point does not arise in the add-remove model where any two datasets are always connected in the graph, unlike the swap model where datasets of different sizes are disconnected.
Other modifications of DP consider different model of neighbors and edge weights, see \citet{desfontaines2019sok} for more details.

\subsection{Background on Heterogeneous Differential Privacy}

Heterogeneous Differential Privacy (HDP) permits users to have different  privacy requirements.
The standard definition for HDP considered in literature uses the swap model for neighbors \cite{Alaggan17,Asu22,Chaudhuri24}.
Unlike the previous section, we present this definition in \cref{def:hetDP} with the input to the algorithm having domain $\cX^n$ instead of a multiset for ease of comparison with previous works.
We use the notation $\vecx \sim_i \vecx'$ to denote $x_j = x'_j \ \forall j \in [n] \setminus \{i\}$. \\

\begin{definition}[Heterogeneous Differential Privacy - Conventional Definition] \label{def:hetDP}
A randomized algorithm $M: \cX^n \to \cY$ is said to be $\bm\epsilon$-DP for $\bm\epsilon \in \bbR_{\geq 0}^n$ if $\forall i \in [n]$, for all measurable sets $S \subseteq \cY$,
\begin{equation} \label{eq:hetDP-def}
    \left| \log  \frac{\Pr\{M(\vecx) \in S\}}{\Pr\{M(\vecx') \in S\}} \right| \leq \epsilon_i,
\end{equation}
where  $\vecx,\vecx' \in \cX^n$ satisfy $\vecx \sim_i \vecx'$.
\end{definition}

In this model, the users communicate their privacy preferences to the server as well as their data.
For example, user $i$ demands for an $\epsilon_i$ level of privacy and has data $x_i$.
In the few works on HDP in literature, a central assumption in this line of work is to assume that the data $x_i$ and privacy demand $\epsilon_i$ are `independent' in an appropriate sense.
For example, some works consider the setting that for a given privacy demand $\bm\epsilon$, the data $\{ x_i\}_{i=1}^n$ is i.i.d. from some unknown distribution \cite{Asu22,journal23,isit-paper}.

\subsection{Issues with Previous HDP Formulation Under Data and Privacy Correlation}
\label{sec:Issues}

We claim that \cref{def:hetDP} does not adequately protect privacy when we allow the possibility of correlations between data and privacy.
Assume that an adversary who tries to infer private data from the output of a mechanism is aware of the correlations between the data and privacy demand.
The exact threat model is discussed later but for now, assume that adversary's knowledge of correlations means that it knows the joint data and privacy distribution $P(x,\epsilon)$.

In such scenario, a private mechanism must also protect the privacy demand of the users to the extent of correlations present in data and privacy.
In \cref{ex:swap-dp}, we provide an example to outline an issue with \cref{def:hetDP}.

Let $\cL(\lambda)$ denote a sample from the Laplace distribution with pdf given by $p(x;\lambda) = \frac{1}{2\lambda} e^{-|x|/\lambda}$.

\begin{example}[Issue with \cref{def:hetDP}] \label{ex:swap-dp}
    Let there be $n$ users with $\cX = \{0,1\}$, indicating some sensitive information and let $\vecx$ be the dataset.
    As outlined in \citet{Asu22}, all mechanisms of the form $\inprod{\vecw}{\vecx} + \cL(\|\frac{\vecw}{\bm\epsilon}\|_{\infty})$ satisfy $\bm\epsilon$-DP in the sense of \cref{def:hetDP} for $\vecw \in \Delta_n$, where $\frac{\vecw}{\bm\epsilon}$ denotes element-wise division.
    For illustration, consider the particular mechanism with $\vecw \propto \bm\epsilon$, i.e., $M(\vecx) = \mathsf{Clip}_{[-1,3]}\left(\frac{\inprod{\bm\epsilon}{\vecx}}{\|\bm\epsilon\|_1} + \cL(\frac{1}{\|\bm\epsilon\|_1})\right)$.
    $M(\cdot)$ remains $\bm\epsilon$-DP by post-processing property.
    
    Consider the coupling (joint-distribution) between data and privacy such that $(x,\epsilon) \in \{(1,0),(0,\infty) \}$.
    Let $\vecx$ be the dataset with all $n$ users having data $x=1$ and privacy $\epsilon=0$.
    Then $M(\vecx)$ essentially follows a $4\times$Bern$(0.5) - 1$ distribution. 
    Under mechanism $M$, any other dataset of size $n$ will have a deterministic output of $1$.
    Thus, by observing a sample from $M(\vecx)$, an adversary can figure out the dataset $\vecx$, even though all the users asked for full privacy in the considered dataset $\vecx$.
\end{example}

Thus, the HDP definition of \cref{def:hetDP} does not intuitively serve its supposed purposed of providing privacy when there are correlations between the data and privacy and an adversary is aware of the correlations.
While \cref{ex:swap-dp} may seem contrived, it does highlight a valid shortcoming of the HDP definition.

A less contrived example is the following - the mechanism that deterministically outputs the privacy demand vector $\bm\epsilon$ satisfies the $\bm\epsilon$-DP definition in \cref{def:hetDP} since it does not output the data at all.
However, an adversary may be able to infer about the data of users asking extremely high-level of privacy from this mechanism, highlighting a privacy-leakage when data and privacy demand can be correlated.

Recall that the nature of DP under the notion of \cref{def:hetDP}, any neighbor of a dataset has the exact same privacy demand, but with possibly different user data.
Under the correlation model of \cref{ex:swap-dp}, the dataset considered has no neighbors, i.e., it is a disconnected node in the graph since no user can have different data but the same privacy demand.
Thus, it is not surprising that we can distinguish it from other datasets in the graph $\cG$, and this leads us to consider a better notion of neighbors instead of \cref{def:hetDP}.
\section{Add-remove Heterogeneous Differential Privacy (AHDP)} \label{sec:AHDP}

Recall that we operate in the Central-DP regime where the users send their true data and privacy demand to the central server.
Since we focus on the setting where data and privacy demand are allowed to be correlated, we emphasize that the {\em privacy demand is not public}.
Therefore, the mechanism cannot arbitrarily leak the privacy demand of the users.

We shall represent the elements of a dataset by tuples of the user data and privacy demand, and we shall switch back to representing datasets by multisets.
Let $\cW \subseteq \cX \times \bbR_{\geq 0}$ denote the set of all possible data and privacy demand in the problem considered.
$\cW$ can be thought of as all possible pairs of data and privacy that can be present in the problem under consideration - i.e., the correlations.
For example, $\cW = \{(1,0),(0,\infty)\}$ in \cref{ex:swap-dp}.
Thus, the set of all possible datasets is denoted by $\cS(\cW)$. 
For simplicity, we shall assume that $\cW$ is a finite set in the rest of this work.

We consider an alternate definition for HDP that alleviates the shortcomings of \cref{def:hetDP}.
In order to distinguish it from \cref{def:hetDP}, we shall refer to it as Add-remove Heterogeneous Differential Privacy (AHDP) from here on.

Consider a function $\alpha: \cW \to \bbZ_{\geq 0}$ and let $h_D$ denote the count function associated with a multiset $D$.
For $D,D' \in \cS(\cW)$, we define
\begin{equation}
d_{\alpha}(D,D') = \sum_{(x,\epsilon) \in \cW} \alpha(x,\epsilon) | h_{D}(x,\epsilon) - h_{D'}(x,\epsilon) |. 
\end{equation}
Note that $d_{\alpha}(D,D') = d_{\alpha}(D',D)$. 
The distance $d_{\alpha}(D,D')$ measures how far part the datasets $D$ and $D'$ are on the graph obtained by the add-remove model of neighbors while weighing the edges by the $\alpha$ function.
For clarity, we repeat the assumptions of our problem setup:
\begin{enumerate}
    \item we operate in the central-DP regime where users send their data and privacy demand tuple to the server,
    \item the user data and privacy demand tuple is only known to the server,
    \item there exists a domain $\cW \subseteq \cX \times \bbR_{\geq 0}$ of possible data and privacy demand, which serves as our model for correlations, and
    \item the adversary may be aware of joint distributions of user data-privacy tuple.
\end{enumerate}
\cref{def:newDP} defines the notion of $\alpha$-AHDP and $\cW$-AHDP.

\begin{definition}[AHDP] \label{def:newDP}
(A) A randomized algorithm $M: \cS(\cW) \to \cY$ is said to be $\alpha$-AHDP if for all measurable sets $S \subseteq \cY$,
\begin{equation} \label{eq:AHDP-def}
    \left| \log  \frac{\Pr\{M(D) \in S\}}{\Pr\{M(D') \in S\}} \right| \leq d_{\alpha}(D,D') ,
\end{equation}
where  $D,D' \in \cS(\cW)$. \\
(B) A mechanism $M:\cS(\cW) \times \cT \to \cY$ is said to be $\alpha$-AHDP if $M(\cdot,t)$ is $\alpha$-AHDP $\forall t \in \cT$. \\
(C) An $\alpha$-AHDP algorithm is said to be $\cW$-AHDP if $ \forall (x,\epsilon) \in \cW, \ \alpha(x,\epsilon) \leq \epsilon$. 
\end{definition}

$\alpha$-AHDP is equivalent to ensuring that $\forall D \in \cS(\cW)$ and $D' = D + \mset{(x,\epsilon)}$, we have for any measurable set $S$
$$\left|\log \frac{\Pr\{M(D) \in S\}}{\Pr\{M(D') \in S\}}\right| \leq \alpha(x,\epsilon),$$ 
Thus, $\alpha$ can be understood as the mapping from the data-privacy tuple to the level of privacy offered by the mechanism; this is inherently heterogeneous and can differ for each input tuple due to heterogeneity of privacy demands. 
In \cref{sec:HT}, we present an interpretation of the privacy guarantee in this context. 
Roughly speaking, AHDP prevents inference of individual user data demand to the extent allowed by the privacy level specified. Unlike HDP, AHDP prevents leakage due to correlations between user data and privacy demand. A more expanded exposition on interpreting the privacy afforded by AHDP can be found at the end of \cref{sec:HT}.

\cref{def:newDP}(B) defines the property of $\alpha$-AHDP when the mechanism can have auxiliary  input parameter $t \in \cT$.
Part (C) is a Boolean classification - the $\alpha$-AHDP mechanism is meaningful only if it respects the user privacy demand, i.e., if $\alpha(x,\epsilon) \leq \epsilon$ $\forall (x,\epsilon) \in \cW$.
\citet{Jorg15} also consider the add-remove model of neighbors in their HDP definition similar to \cref{def:newDP}, however their focus was on the regime that data and privacy are not correlated.
We demonstrate in \cref{sec:HT} that AHDP can provide meaningful privacy guarantees even under correlation between data and privacy.

\cref{ex:ahdp} demonstrates how homogeneous DP can be recovered as a special case of AHDP and \cref{ex:ahdp-via-homo} demonstrates how homogeneous DP can be used to achieve AHDP.
We show \cref{sec:algo} that we can construct better mechanisms than \cref{ex:ahdp-via-homo}.

\begin{example}[Reduction to homogeneous DP] \label{ex:ahdp}
As an example, if $\cW_h = \cX \times \{\epsilon_h\}$, then a homogeneous $\epsilon'$-DP mechanism (in add-remove model) satisfies $\alpha_h$-AHDP where $\alpha_h(x,\epsilon_h) = \epsilon'$.
We can certify this mechanism to be $\cW_h$-AHDP iff $\epsilon' \leq \epsilon_h$.
\end{example}

\begin{example}[AHDP via homogeneous DP] \label{ex:ahdp-via-homo}
Suppose $\epsilon_l(\cW) = \min_{(x,\epsilon) \in \cW} \epsilon$ such that $\epsilon_l > 0$, then the mechanism providing homogeneous $\epsilon_l$-DP privacy is $\cW$-AHDP.
In other words, the mechanism providing the uniform level of privacy corresponding to the highest privacy demand to all the users is $\cW$-AHDP. However, this mechanism is only meaningful if $\epsilon_l(\cW)>0$.
\end{example}

\begin{remark}[Local-DP Implementation]
	One might be tempted to move to the local-DP regime where users perturb their data based on their privacy requirement and send just the perturbed data to the server, without revealing their privacy parameter.
	However, this implementation raises the question of how the server decides on which user data is useful (low privacy)?
	It is possible that a lot of users have high privacy preference and the data with the server is effectively just noise, but the server may not realize it.
    In other words, such implementations make it difficult to judge the quality and utility of the data for downstream tasks.
\end{remark}

As we show next, we can compose different mechanisms $\alpha_1$-AHDP and $\alpha_2$-AHDP to get an $(\alpha_1 + \alpha_2)$-AHDP mechanism, which can then be used to know if the composition is $\cW$-AHDP. 
We cannot directly compose two $\cW$-AHDP mechanisms to obtain a $\cW$-AHDP mechanism.
In other words, the $\cW$-AHDP property is certificate to be used on the final deployed mechanism while the $\alpha$-AHDP property is useful for design and analysis of algorithms.

\subsection{Composition and Post-Processing for AHDP} 

We show that basic composition and post-processing holds for AHDP; both are powerful tools used in design and analysis of privacy of algorithms.
\cref{prop:BC} states that if there are $k$ mechanisms $M_1,\ldots,M_k$ which are $\alpha_1,\ldots,\alpha_k$-AHDP respectively, then sequentially composing them is going to be $\sum_{i=1}^k\alpha_i$-AHDP.

\begin{restatable}{proposition}{BasicCom} \label{prop:BC}
    If mechanisms $M_i:\cS(\cW) \bigtimes_{j < i} \cY_{j} \to \cY_i$ are $\alpha_i$-AHDP in the sense of \cref{def:newDP}(B) and independent, then
    $M: \cS(\cW) \to \bigtimes_{i=1}^k\cY_i$ given by $M(D) = (M_1(D_1),M_2(D_2),\ldots,M_k(D_k))$ is $\left(\sum_{i=1}^k\alpha_i\right)$-AHDP, where $D_1 = D$ and $D_i = (D_{i-1},M_{i-1}(D_{i-1}))$ for $i > 1$.
\end{restatable}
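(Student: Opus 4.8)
The plan is to establish a pointwise density-ratio bound for the composed mechanism and then integrate it over the output set. Fix arbitrary $D,D' \in \cS(\cW)$ and a measurable set $S \subseteq \bigtimes_{i=1}^{k} \cY_i$; the goal is to show $\left|\log \frac{\Pr\{M(D) \in S\}}{\Pr\{M(D') \in S\}}\right| \leq d_{\sum_i \alpha_i}(D,D')$. The first fact I would record is that $d_\alpha$ is linear in its weight function, i.e. $d_{\sum_{i=1}^{k}\alpha_i}(D,D') = \sum_{i=1}^{k} d_{\alpha_i}(D,D')$, which follows directly from the definition of $d_\alpha$ and is precisely what will let the per-mechanism guarantees add.

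Next I would unpack the recursion defining the inputs. Since $D_1 = D$ and $D_i = (D_{i-1}, M_{i-1}(D_{i-1}))$, unrolling shows that once we condition on realized earlier outputs $y_1,\ldots,y_{i-1}$, the input to $M_i$ is $D_i = (D, y_1, \ldots, y_{i-1})$. Thus each $M_i$ is run on the dataset $D$ with the earlier outputs playing the role of the auxiliary parameter $t \in \cT$ of \cref{def:newDP}(B). Using the independence of the mechanisms' internal randomness, I would write the joint law of $(y_1,\ldots,y_k)$ under input $D$ as a chain-rule product, where the conditional law of $y_i$ given $(y_1,\ldots,y_{i-1})$ is exactly the output law of $M_i(D, y_1, \ldots, y_{i-1})$.

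With the factorization in hand, the core step bounds the joint log-density ratio term by term. For each fixed tuple $(y_1,\ldots,y_{i-1})$, \cref{def:newDP}(B) asserts that $M_i(\cdot, y_1, \ldots, y_{i-1})$ is $\alpha_i$-AHDP, so by \cref{def:newDP}(A) the $i$-th conditional log-density ratio between inputs $D$ and $D'$ is bounded in absolute value by $d_{\alpha_i}(D,D')$. Summing the logarithms of the chain-rule factors and applying the triangle inequality yields, pointwise in $y$, the bound $\sum_{i=1}^{k} d_{\alpha_i}(D,D') = d_{\sum_i \alpha_i}(D,D')$ on the absolute log-ratio. Integrating the resulting pointwise multiplicative bound over $S$ gives $\Pr\{M(D) \in S\} \leq e^{\,d_{\sum_i\alpha_i}(D,D')}\,\Pr\{M(D') \in S\}$, and the symmetric inequality gives the matching lower bound, completing the argument.

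The main obstacle I anticipate is not the algebra but making the factorization and the pointwise density argument measure-theoretically clean: one needs a common dominating (product) reference measure on $\bigtimes_{i=1}^{k}\cY_i$ so that the conditional densities and their ratios are well defined, together with the existence of the regular conditional distributions used in the chain rule. The independence hypothesis is exactly what is needed here, since it guarantees that, conditioned on $(y_1,\ldots,y_{i-1})$, the output $y_i$ is distributed as $M_i(D, y_1, \ldots, y_{i-1})$ with no residual dependence on how those earlier outputs were generated; this is what legitimizes both the factorization and the term-by-term invocation of each $\alpha_i$-AHDP guarantee.
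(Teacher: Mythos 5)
Your proposal is correct and follows essentially the same route as the paper's proof: condition on the earlier outputs, use independence to identify the conditional law of the $i$-th output with that of $M_i(D, y_1,\ldots,y_{i-1})$, invoke each $\alpha_i$-AHDP guarantee term by term, and add the exponents via the linearity $d_{\sum_i \alpha_i} = \sum_i d_{\alpha_i}$. The only difference is packaging: the paper works with probabilities of sets and a change of the outer integrating measure (which sidesteps the dominating-measure issue you flag), rather than a pointwise density-ratio bound, but the underlying chain-rule decomposition is identical.
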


The proof can be found in \cref{A:comp}.
\cref{prop:pp} shows that applying arbitrary transformations on the output of an $\alpha$-AHDP mechanism cannot degrade privacy. The proof is presented in \cref{A:pp}.

\begin{restatable}{proposition}{PostProc} \label{prop:pp}
    Let $M:\cS(\cW) \to \cY$ be an $\alpha$-AHDP mechanism and $K:\mathcal{Y} \to \mathcal{Z}$ be a measurable function. Then, $K\circ M$ is also $\alpha$-AHDP.  
\end{restatable}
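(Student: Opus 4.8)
The plan is to reduce the claim to a single application of the $\alpha$-AHDP property of $M$ via the standard preimage argument. Fix arbitrary datasets $D,D' \in \cS(\cW)$ and an arbitrary measurable set $T \subseteq \cZ$. Because $K$ is measurable, its preimage $S := K^{-1}(T) = \{ y \in \cY : K(y) \in T \}$ is a measurable subset of $\cY$. The first step is to record the event identity $(K \circ M)(D) \in T \iff M(D) \in S$, which gives $\Pr\{(K\circ M)(D) \in T\} = \Pr\{M(D) \in S\}$, and likewise with $D'$ in place of $D$.

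With this identity in hand, the second and essentially final step is to substitute into the log-likelihood ratio defining $\alpha$-AHDP for $K \circ M$ and invoke \cref{def:newDP}(A) for $M$ applied to the measurable set $S$:
\[
\left| \log \frac{\Pr\{(K\circ M)(D) \in T\}}{\Pr\{(K\circ M)(D') \in T\}} \right| = \left| \log \frac{\Pr\{M(D) \in S\}}{\Pr\{M(D') \in S\}} \right| \leq d_{\alpha}(D,D').
\]
Since $T$ was arbitrary, this establishes that $K \circ M$ is $\alpha$-AHDP, as required.

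There is no genuine obstacle here: the only point requiring attention is the measurability of $S = K^{-1}(T)$, which is exactly what the hypothesis that $K$ is measurable guarantees, and without which the probabilities in the likelihood ratio need not be well-defined. I would also remark, although it is not needed for the present statement, that the argument extends to a \emph{randomized} post-processing map given by a Markov kernel $\kappa(\cdot \mid y)$ independent of the input dataset: writing $\Pr\{(K\circ M)(D) \in T\} = \int_{\cY} \kappa(T \mid y)\, \mathrm{d}P_{M(D)}(y)$ and using that the pushforward laws $P_{M(D)}$ and $P_{M(D')}$ dominate one another up to the factor $e^{d_{\alpha}(D,D')}$, monotonicity of the integral yields the same bound. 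For the deterministic case stated in the proposition, however, the preimage identity above suffices.
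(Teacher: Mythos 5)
Your proof is correct and follows essentially the same route as the paper's: define the preimage $S = K^{-1}(T)$, use the event identity $\Pr\{(K\circ M)(D)\in T\} = \Pr\{M(D)\in S\}$, and apply the $\alpha$-AHDP bound for $M$ on $S$. The added remark on randomized post-processing via Markov kernels is a correct and harmless extension beyond what the paper states.
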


\section{Characterizing the Privacy Guarantee} \label{sec:HT}

We present an operational perspective on the privacy guarantees afforded by AHDP and the associated threat model in this section.
We begin by a simple hypothesis testing setup.
Suppose an adversary observes the output $Y$ of an AHDP algorithm and aims to identify the underlying dataset from two possible hypotheses.
Let $M$ be an $\alpha$-AHDP algorithm in the sense of \cref{def:newDP} and let $D,D' \in \cS(\cW)$ be the two hypothesis that the adversary wants to test against using the output $Y$. 
We set up the hypothesis testing problem with
\begin{align}
    H_0 &: Y \sim M(D), \\
    H_1 &: Y \sim M(D').
\end{align}
For any test, let the rejection region for the null be represented by measurable set $R \subseteq \cY$ and let the corresponding type I and type II error rates be represented by $e_1(R,D,D')$ and $e_2(R,D,D')$ respectively.
Note that $e_2(R,D,D') = e_1(\cY \setminus R,D',D)$.
Then, similar to the results known in homogeneous DP \citep{Kairouz15,Wasserman10}, we characterize the type I and type II error rates in \cref{thm:HT}.
The proof can be found in \cref{A:HT}.

\begin{restatable}{proposition}{thmHT} \label{thm:HT}
    A mechanism $M$ is $\alpha$-AHDP iff for any datasets $D,D' \in \cS(\cW)$, for any measurable rejection region $R \subseteq \cY$, we have
    \begin{equation}
        e_1(R,D,D') + e^{d_{\alpha}(D,D')}e_2(R,D,D') \geq 1.
    \end{equation}
\end{restatable}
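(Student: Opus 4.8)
The plan is to unwind both the $\alpha$-AHDP condition and the two error quantities into statements about the output distributions $P := M(D)$ and $Q := M(D')$, and then to observe that the claimed inequality is a one-line rearrangement of the privacy bound. Write $\epsilon := d_\alpha(D,D')$. Recall that \cref{def:newDP}(A) states $\alpha$-AHDP as the two-sided bound $e^{-\epsilon} \le P(S)/Q(S) \le e^{\epsilon}$ for every measurable $S$; since $d_\alpha$ is symmetric in its arguments, this is equivalent to the single one-sided family $P(S) \le e^{\epsilon} Q(S)$ holding for all measurable $S$ and all ordered pairs $(D,D')$ (swapping $D$ and $D'$ recovers the lower bound). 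First I would record the elementary identifications $e_1(R,D,D') = P(R)$ and, using the stated identity $e_2(R,D,D') = e_1(\cY\setminus R, D',D)$, that $e_2(R,D,D') = Q(\cY \setminus R) = 1 - Q(R)$.

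For the forward direction, assume $M$ is $\alpha$-AHDP and fix $D, D', R$. Applying the one-sided bound to the set $S = \cY \setminus R$ gives $P(\cY\setminus R) \le e^{\epsilon} Q(\cY \setminus R)$. Since $P(\cY \setminus R) = 1 - P(R) = 1 - e_1(R,D,D')$ and $Q(\cY\setminus R) = e_2(R,D,D')$, this rearranges directly to $e_1(R,D,D') + e^{\epsilon} e_2(R,D,D') \ge 1$, as desired.

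For the converse, assume the error inequality holds for all $D,D',R$. Fix $D,D'$ and an arbitrary measurable $S$, and apply the hypothesis-testing inequality with rejection region $R = \cY \setminus S$. Substituting $e_1(\cY\setminus S, D,D') = P(\cY \setminus S) = 1 - P(S)$ and $e_2(\cY\setminus S, D,D') = Q(S)$ and rearranging yields $P(S) \le e^{\epsilon} Q(S)$. As $S$ and the pair $(D,D')$ were arbitrary, and because $\epsilon = d_\alpha(D,D') = d_\alpha(D',D)$, swapping $D$ and $D'$ supplies the matching bound $Q(S) \le e^{\epsilon} P(S)$; together these give the two-sided ratio bound, i.e.\ $\alpha$-AHDP.

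The one genuine subtlety, rather than the algebra, is the degenerate bookkeeping. When $\epsilon = \infty$ (e.g.\ some tuple carries infinite $\alpha$-weight, or $D,D'$ differ on such a tuple) the inequality is trivially satisfied, and when output probabilities vanish one must invoke the paper's conventions that $|\log \tfrac{0}{0}| = 0$ and $|\log \tfrac{c}{0}| = \infty$ to ensure the ratio formulation and the one-sided bound $P(S) \le e^{\epsilon} Q(S)$ coincide exactly. I would therefore state the one-sided reformulation of AHDP carefully, treating $Q(S) = 0$ and $Q(S) > 0$ separately, before running the two substitution arguments above; everything else is immediate.
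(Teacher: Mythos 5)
Your proof is correct and follows essentially the same route as the paper's: both reduce the claim to the one-sided reformulation of $\alpha$-AHDP (valid because $d_\alpha$ is symmetric and complements of measurable sets are measurable), identify $e_1(R,D,D') = \Pr\{M(D)\in R\}$ and $e_2(R,D,D') = \Pr\{M(D')\in \cY\setminus R\}$, and rearrange. The paper presents this as a single chain of equivalences where you split it into two directions and add some care about degenerate cases, but the substance is identical.
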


A reasonable point to ponder upon is on how to interpret the hypothesis testing result of \cref{thm:HT}.
\textit{Presented set up inherently assumes that one of the two hypotheses is true but how does the adversary come up with the two hypotheses and what does \cref{thm:HT} mean for the adversary?}
We a framework for understanding the privacy guarantees next, which can be applied more generally to the privacy literature.

\subsection{A Framework for Privacy Loss}

We begin by focusing on homogeneous DP in the swap and add-remove model of neighbors.
In order to interpret the privacy guarantee provided by differentially private mechanisms, we introduce a notion of `adversarial power' to capture leakage of information.
Adversarial power is defined in \cref{def:capHomo} for homogeneous DP mechanisms.

\begin{definition}[Power for homogeneous DP] \label{def:capHomo}
    For a set of hypotheses $\cH = \{D_1,\ldots,D_n\} \subseteq \cS(\cX)$, called threat model, and a mechanism $M:\cS(\cX) \to \cY$, define the adversarial power to be
    \begin{equation}
        \cP(M,\cH) =   \max_{\psi:\cY \to \Delta_{\cH}} \min_{D \in \cH}  \Pr\{\psi(M(D)) = D\}.
    \end{equation}
\end{definition}

In other words, $\cP(M,\cH)$ measures how much an adversary can infer about underlying dataset given the output of a mechanism $M$.
The set of hypotheses $\cH$ can be understood as the threat model - the smaller this class is the more side-information the adversary had to reduce the possible datasets from $\cS(\cX)$ to $\cH$.
In fact, the power for any mechanism $M$ can be lower bounded by $\frac{1}{|\cH|}$ using uniformly random test $\psi$, showing power increases with a smaller set of hypothesis.
Subsequently, we  consider some natural threat models that can arise for the different DP frameworks. 

The minimum over $D \in \cH$ ensures that the power captures the worst-case situation for the adversary, i.e., the adversary cannot increase the power by using a test $\psi$ that always outputs a particular hypothesis.
It can equivalently viewed as the Bayes error over the worst-case prior.
Power is only high if the adversary can detect each hypothesis with high probability.

To understand the privacy afforded by $\epsilon$-DP, we need to consider two terms
\begin{itemize}
	\item upper bound: an upper bound on the power of an adversary for any $\epsilon$-DP mechanism,
	\item lower bound: existence of a mechanism $M'$ with suitable output space $\cY$ such that the power is at least some quantity.
\end{itemize}
In the following, we provide upper bounds on the power for any mechanism using \cref{thm:HT}-like results and provide lower bounds on power using exponential-based mechanism \cite{McSherry07} for different DP frameworks.

\paragraph{\textbf{Homogeneous DP (swap model):}}
In homogeneous DP, the dataset belongs to the set $\cS(\cX)$ and does not contain privacy demand.
Recall that in the swap DP model, the size of the dataset is fixed.
Consider the threat model where an adversary sees a subset of the data.
Let $D_o$ be the part of dataset observed by the adversary and let $D$ be the actual dataset.
In the swap model, the size of $D$ is not protected and hence, assume that $|D| = |D_o| + 1$ in the worst-case.
Equivalently, one can also think of all the other users to be colluding and sharing data with each other, and they are together trying to infer about the last user's data.
Concretely, assume
\begin{enumerate}
	\item the adversary sees a part of the dataset $D_o \in \cS(\cX)$, 
	\item $|D| - |D_o| = 1$,
	\item the adversary knows the mechanism $M$ used.
\end{enumerate} 
Adversary knowing the mechanism $M$ can be understood as the server announcing the mechanism it uses or the mechanism employed by the server being leaked to the adversary.
This assumption is similar in spirit to that used in cryptography where an eavesdropper is assumed to know the encryption-decryption scheme used.
Thus, the above interpretation naturally leads to a multiple hypothesis testing problem defined in \cref{def:capHomo}.
In the above context, threat model is $\cH =\bigcup_{x \in \cX} \{ D_o +\mset{x} \}$.
 \cref{claim:1} demonstrates the diminishing inference capability of the adversary as the domain size and the privacy increases.
 The proof can be found in \cref{sec:cap}.
 \begin{restatable}{proposition}{homoCap} \label{claim:1} Power under homogeneous $\epsilon$-DP (swap model) : 
 	 Suppose $|\cX|= k $, \\
    (A) for any $\epsilon$-DP mechanism $M$, $\cP(M,\cH) \leq \frac{1}{1+(k-1)e^{ - \epsilon}}$; \\
    (B) conversely, there exists an  $\epsilon$-DP mechanism $M'$ such that  $\cP(M',\cH) \geq  \frac{1}{1+(k-1)e^{ - \epsilon}}$.
\end{restatable}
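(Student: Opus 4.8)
The plan is to treat \cref{claim:1} as a $k$-ary identification problem among the datasets $D_x := D_o + \mset{x}$, $x \in \cX$, which are precisely the hypotheses in $\cH$. The first thing I would observe is that any two of these hypotheses are swap-neighbors: each has size $|D_o|+1$, and $D_x,D_{x'}$ differ in exactly one element, so $d_{\mathsf{swap}}(D_x,D_{x'}) = 1$. Consequently, for any $\epsilon$-DP mechanism $M$ and any measurable $S$ we have $\Pr\{M(D_x) \in S\} \geq e^{-\epsilon}\Pr\{M(D_{x'}) \in S\}$ for all $x,x'$; this single pairwise inequality drives both directions of the claim.

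For the upper bound (A), I would fix an arbitrary randomized decoder $\psi:\cY \to \Delta_{\cH}$, write $\psi_x(y) := \Pr\{\psi(y) = D_x\}$ so that $\sum_x \psi_x(y) = 1$, and set $a := \min_x \Pr\{\psi(M(D_x)) = D_x\}$. Picking any reference index $x_0$ and expanding the total output mass under $M(D_{x_0})$ gives $1 = \sum_x \Pr\{\psi(M(D_{x_0})) = D_x\}$. The diagonal term is $\geq a$, while for each off-diagonal $x \neq x_0$ the DP inequality yields $\Pr\{\psi(M(D_{x_0})) = D_x\} = \int \psi_x\, dM(D_{x_0}) \geq e^{-\epsilon}\int \psi_x\, dM(D_x) = e^{-\epsilon}\Pr\{\psi(M(D_x)) = D_x\} \geq e^{-\epsilon} a$. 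Summing the $k-1$ off-diagonal terms together with the diagonal gives $1 \geq a\bigl(1+(k-1)e^{-\epsilon}\bigr)$, i.e.\ $\cP(M,\cH) \leq \tfrac{1}{1+(k-1)e^{-\epsilon}}$; since the bound is uniform in $\psi$, taking the maximum over $\psi$ finishes (A).

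For the lower bound (B), I would exhibit a $k$-ary randomized-response mechanism realized as an instance of the exponential mechanism. Define a deterministic summary $r:\cS(\cX) \to \cX$ that on each threat-model dataset returns the appended element: on $D_x$ this is the unique $z$ with $h_{D_x}(z) > h_{D_o}(z)$, namely $z=x$ (on all other inputs $r$ may be defined arbitrarily, which does not affect the power computed on $\cH$). Let $M'$ output $z \in \cX$ with probability $\tfrac{e^\epsilon}{e^\epsilon+k-1}$ if $z = r(D)$ and $\tfrac{1}{e^\epsilon+k-1}$ otherwise, i.e.\ the exponential mechanism with utility $\mathbb{1}[z = r(D)]$. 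The decoder $\psi(z) = D_z$ then recovers $D_x$ with probability $\Pr\{M'(D_x) = x\} = \tfrac{e^\epsilon}{e^\epsilon+k-1} = \tfrac{1}{1+(k-1)e^{-\epsilon}}$ for every $x$, so $\cP(M',\cH) \geq \tfrac{1}{1+(k-1)e^{-\epsilon}}$.

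The main obstacle is verifying that $M'$ is genuinely $\epsilon$-DP rather than $2\epsilon$-DP: the generic sensitivity-based guarantee for the exponential mechanism with a sensitivity-$1$ utility would only give $2\epsilon$, hence the strictly weaker rate $\tfrac{1}{1+(k-1)e^{-\epsilon/2}}$. I would instead check $\epsilon$-DP directly and globally: for any datasets $D,D'$, if $r(D)=r(D')$ the two output distributions coincide, and if $r(D)\neq r(D')$ the likelihood ratio equals $e^{\pm\epsilon}$ at $z \in \{r(D),r(D')\}$ and $1$ elsewhere; in all cases every per-output log-ratio is bounded by $\epsilon$, so $M'$ is $\epsilon$-DP (in particular under the swap model). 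A secondary technical point is the measure-theoretic step $\int \psi_x\, dM(D_{x_0}) \geq e^{-\epsilon}\int \psi_x\, dM(D_x)$ used in (A), which I would justify by applying the pointwise DP bound through the layer-cake representation of the integral of the bounded nonnegative function $\psi_x$.
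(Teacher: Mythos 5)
Your proposal is correct and follows essentially the same route as the paper: part (A) is the same pairwise-DP packing bound (the paper phrases it as a Bayes bound under a worst-case prior $\pi_o$, you decompose the total output mass under $M(D_{x_0})$, but the computation is identical), and part (B) is the same $k$-ary randomized-response/exponential mechanism with the identity decoder. Your extra care in extending $M'$ to all of $\cS(\cX)$ via the summary $r$ and verifying the likelihood ratio globally is a welcome refinement of a detail the paper's proof glosses over, but it does not change the argument.
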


\paragraph{\textbf{Homogeneous DP (add-remove model):}}
Now consider homogeneous DP under add-remove model. 
Suppose the adversary sees a part of the dataset $D_o $, this can be due to collusion of users or data leakage.
Let the actual dataset be $D$.
Assume the adversary knows that $|D| - |D_o| \leq t$, i.e., there are at most $t$ more datapoints in the dataset. 
In particular, the threat model assumes that adversary
\begin{enumerate}
	\item  sees a part of the dataset $D_o \in \cS(\cX)$,
	\item  has side-information that the actual dataset $D$ satisfies $|D| - |D_o| \leq t$ for $t \geq 1$,
	\item adversary knows the mechanism $M$ used.
\end{enumerate} 
Thus, the set of hypotheses is given by  $\cH_t = \bigcup_{i=0}^t \{D_o + D' | D' \in \cS(\cX), |D'| = i\}$.
In \cref{claim:2}, we provide upper and lower bounds on the power for the cases $t \to \infty$ and $t=1$.

 \begin{restatable}{proposition}{homoSwapCap} \label{claim:2} Power under homogeneous $\epsilon$-DP (add-remove model): Suppose $|\cX|= k $ then for any $\epsilon$-DP mechanism $M$ (add-remove model), \\
 	(A1) $\lim_{t \to \infty} \cP(M,\cH_t) \leq (1-e^{-\epsilon})^k$, \\
 	(A2) $\cP(M,\cH_1) \leq \frac{1}{1+ k e^{-\epsilon}}$.  \\
 	 Conversely, there exists  $\epsilon$-DP mechanisms $M', M''$ such that  \\
 	 	(B1) $\lim_{t \to \infty} \cP(M',\cH_t) \geq \frac{(1-e^{-\epsilon})^k}{(1+e^{-\epsilon})^k}$, \\
 	 (B2) $\cP(M'',\cH_1) \geq \frac{1}{1+ k e^{-\epsilon}}$. 
\end{restatable}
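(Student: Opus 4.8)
The plan is to treat homogeneous $\epsilon$-DP in the add-remove model as the special case $\alpha\equiv\epsilon$ of \cref{def:newDP}, so that the group-privacy form of the definition is available: for datasets at add-remove distance $\delta=d_{\mathsf{ar}}(D,D')$ and any measurable $g:\cY\to[0,1]$ we have $\mathbb{E}_{M(D)}[g]\ge e^{-\epsilon\delta}\,\mathbb{E}_{M(D')}[g]$. For the upper bounds (A1) and (A2) I would anchor a single argument at the observed dataset $D_o$. Fix any (possibly randomized) test $\psi$ and let $p=\min_{H\in\cH}\Pr\{\psi(M(H))=H\}$ be its power. Every hypothesis is $H=D_o+D'$ with $d_{\mathsf{ar}}(D_o,H)=|D'|$, so applying group privacy to the soft decision $g_H(y)=\psi(y)(H)$ gives $\Pr\{\psi(M(D_o))=H\}\ge e^{-\epsilon|D'|}\Pr\{\psi(M(H))=H\}\ge e^{-\epsilon|D'|}p$. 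Since the events decoded to distinct hypotheses are disjoint, $\sum_{H\in\cH}\Pr\{\psi(M(D_o))=H\}\le 1$, whence $p\sum_{H\in\cH}e^{-\epsilon d_{\mathsf{ar}}(D_o,H)}\le 1$. For $\cH_1$ this sum is $1+ke^{-\epsilon}$, giving (A2); for $\cH_t$ it is $\sum_{\vecv\in\bbZ_{\geq0}^k,\,\|\vecv\|_1\le t}e^{-\epsilon\|\vecv\|_1}$, which increases to $\prod_{j=1}^{k}\sum_{v\ge0}e^{-\epsilon v}=(1-e^{-\epsilon})^{-k}$ as $t\to\infty$, giving (A1). (Anchoring the same sum at an interior dataset rather than $D_o$ would sharpen (A1) to $\big(\tfrac{1-e^{-\epsilon}}{1+e^{-\epsilon}}\big)^k$, matching (B1).)

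For the lower bound (B1) I would exhibit a two-sided geometric mechanism. Writing $\vecn(D)\in\bbZ_{\geq0}^k$ for the count vector of $D$, let $M'$ output $\vecm\in\bbZ^k$ with probability $Z^{-1}e^{-\epsilon\|\vecn(D)-\vecm\|_1}$. The key point is that enlarging the output range to all of $\bbZ^k$ makes the normalizer $Z=\sum_{\vecm\in\bbZ^k}e^{-\epsilon\|\vecn(D)-\vecm\|_1}=\big(\tfrac{1+e^{-\epsilon}}{1-e^{-\epsilon}}\big)^k$ independent of $D$; the reverse triangle inequality then yields exact $\epsilon$-DP with no factor-of-two loss (the usual obstacle for exponential mechanisms). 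Decoding $\vecm$ to the nearest admissible hypothesis recovers each hypothesis with probability at least $1/Z=\big(\tfrac{1-e^{-\epsilon}}{1+e^{-\epsilon}}\big)^k$, which is (B1), and in fact holds for every finite $t$.

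For (B2) I would use a tailored exponential mechanism over reports $\{\bot\}\cup\cX$, decoded by $\bot\mapsto D_o$ and $x\mapsto D_o+\mset{x}$, built from the scores $s_x(D)=\mathbf{1}[n_x(D)>n_x(D_o)]$ and $s_\bot(D)=\mathbf{1}[n_x(D)\le n_x(D_o)\ \forall x]$, emitting report $r$ with probability proportional to $e^{\epsilon s_r(D)}$. On $\cH_1$ this is a ``star'' randomized response in which the correct report always has probability $e^{\epsilon}/(e^{\epsilon}+k)=\tfrac{1}{1+ke^{-\epsilon}}$, matching (A2) and establishing (B2). The step I expect to require the most care is verifying that this mechanism is $\epsilon$-DP \emph{globally} on $\cS(\cX)$, not merely across $\cH_1$: since $s_x$ is non-decreasing while $s_\bot$ is non-increasing under adding a record, the generic monotone-utility argument does not directly remove the factor of two. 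I would therefore check DP by a short case analysis on a neighbor pair $D'=D+\mset{z}$, where the only nontrivial case is when adding $z$ flips some $s_z$ from $0$ to $1$; there the normalizer ratio obeys $Z(D')/Z(D)\le \tfrac{2e^{\epsilon}}{1+e^{\epsilon}}\le e^{\epsilon}$, which keeps every likelihood ratio inside $[e^{-\epsilon},e^{\epsilon}]$. The anchoring-at-$D_o$ template and these exponential-mechanism constructions thus cover all four parts, with the global-DP verification for (B2) being the delicate piece.
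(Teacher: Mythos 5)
Your proposal is correct. For the upper bounds (A1)--(A2) you follow essentially the same route as the paper: group privacy anchored at $D_o$, disjointness of the decoding events under $\psi(M(D_o))$, and the geometric-series evaluation $\sum_{\vecv \in \bbZ_{\geq 0}^k,\ \|\vecv\|_1\le t}e^{-\epsilon\|\vecv\|_1}\to(1-e^{-\epsilon})^{-k}$; the paper packages the last step as comparing $\min_S(1-e(S))$ with a weighted average under the prior $\pi_o(S)\propto e^{-|S|\epsilon}$, whereas you use $p\le\Pr\{\psi(M(H))=H\}$ termwise, which is the same inequality. Where you genuinely diverge is in the converses. The paper uses, for both (B1) and (B2), the distance-based exponential mechanism $\Pr\{M'(D)=D'\}\propto e^{-\epsilon d(D,D')}$ with output space restricted to $\cH_t$, asserting its $\epsilon$-DP without addressing the data-dependent normalizer (ordinarily the source of a factor-of-two loss). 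Your (B1) instead takes the two-sided geometric mechanism on all of $\bbZ^k$, whose normalizer is translation-invariant and hence data-independent, so exact $\epsilon$-DP follows at once from the reverse triangle inequality; nearest-hypothesis decoding then gives $1/Z=\bigl(\tfrac{1-e^{-\epsilon}}{1+e^{-\epsilon}}\bigr)^k$ uniformly in $t$, a cleaner and slightly stronger statement than the paper's limit. Your (B2) uses a bespoke score-based mechanism over $\{\bot\}\cup\cX$, and the global $\epsilon$-DP check you flag as delicate does go through: $s_\bot$ can only flip when $s_z$ flips with it, in which case the normalizer is unchanged, and when only $s_z$ flips the normalizer grows by at most $\tfrac{2e^{\epsilon}}{1+e^{\epsilon}}\le e^{\epsilon}$, keeping all likelihood ratios in $[e^{-\epsilon},e^{\epsilon}]$; both constructions hit $\tfrac{1}{1+ke^{-\epsilon}}$ on $\cH_1$. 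What your route buys is a rigorous, self-contained verification of global $\epsilon$-DP for the lower-bound mechanisms---precisely the point the paper's proof treats most tersely---at the cost of using two separate constructions instead of one.
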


The case when $t=1$ is comparable to \cref{claim:1}, showing that under the rather strong assumption that adversary has side-information that there can only be at most one extra datapoint, add-remove and swap model of homogeneous DP offer comparable protection.
The case of $t \to \infty$ captures the case where adversary \textit{does not} have any side-information on the size of the dataset, and indeed, the power is lower than that of $t=1$. 
A plot for the upper bounds on the power is presented in \cref{fig:cap} as $k$ is varied; \cref{fig:cap-2} plots the upper bound varies as $\epsilon$ is changed.

\begin{figure}
    \centering
    \includegraphics[width=0.4\textwidth]{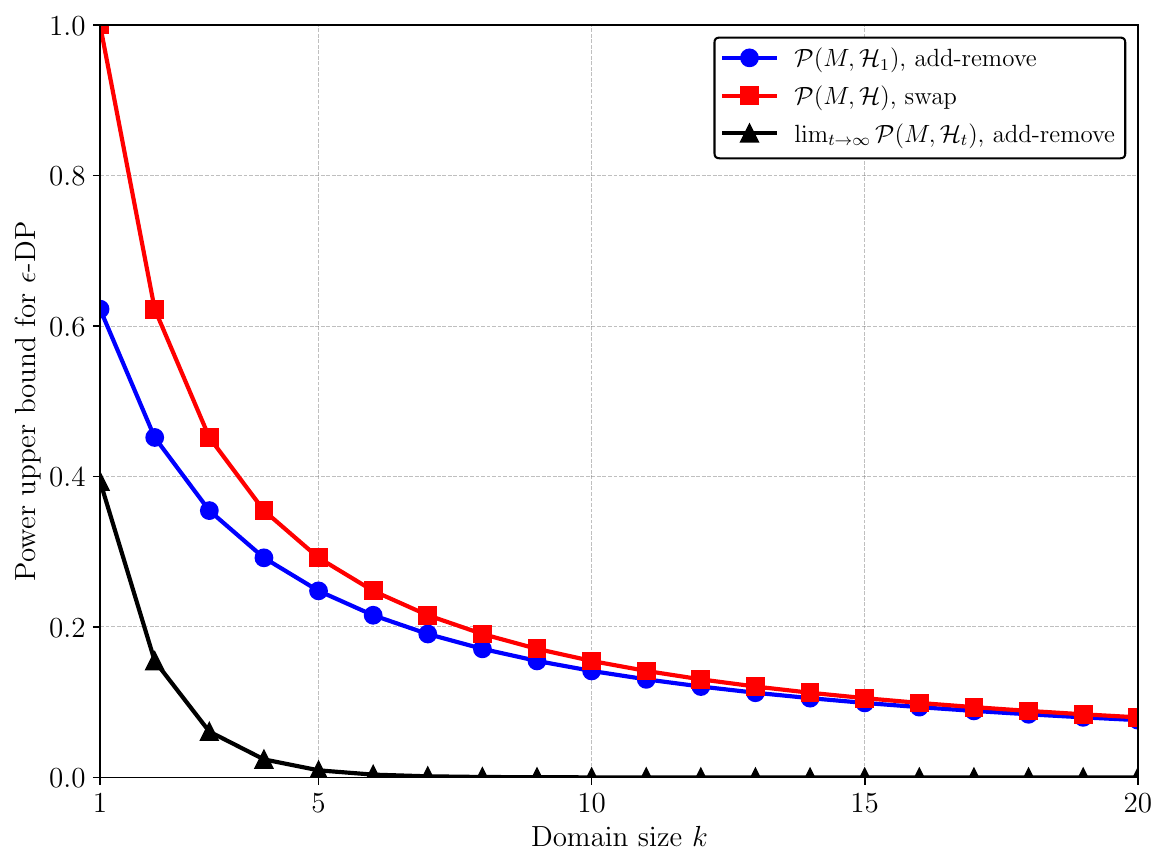}
    \caption{Plot of the upper bound for the power for $\cP(M,\cH)$ in \cref{claim:1} and $\cP(M,\cH_1)$, $\lim_{t \to \infty} \cP(M,\cH_t)$ in \cref{claim:2} as $k$ varies, keeping $\epsilon = 0.5$.}
    \Description{A plot of the three upper bounds as $k$.}
    \label{fig:cap}
\end{figure}

\begin{figure}
    \centering
    \includegraphics[width=0.4\textwidth]{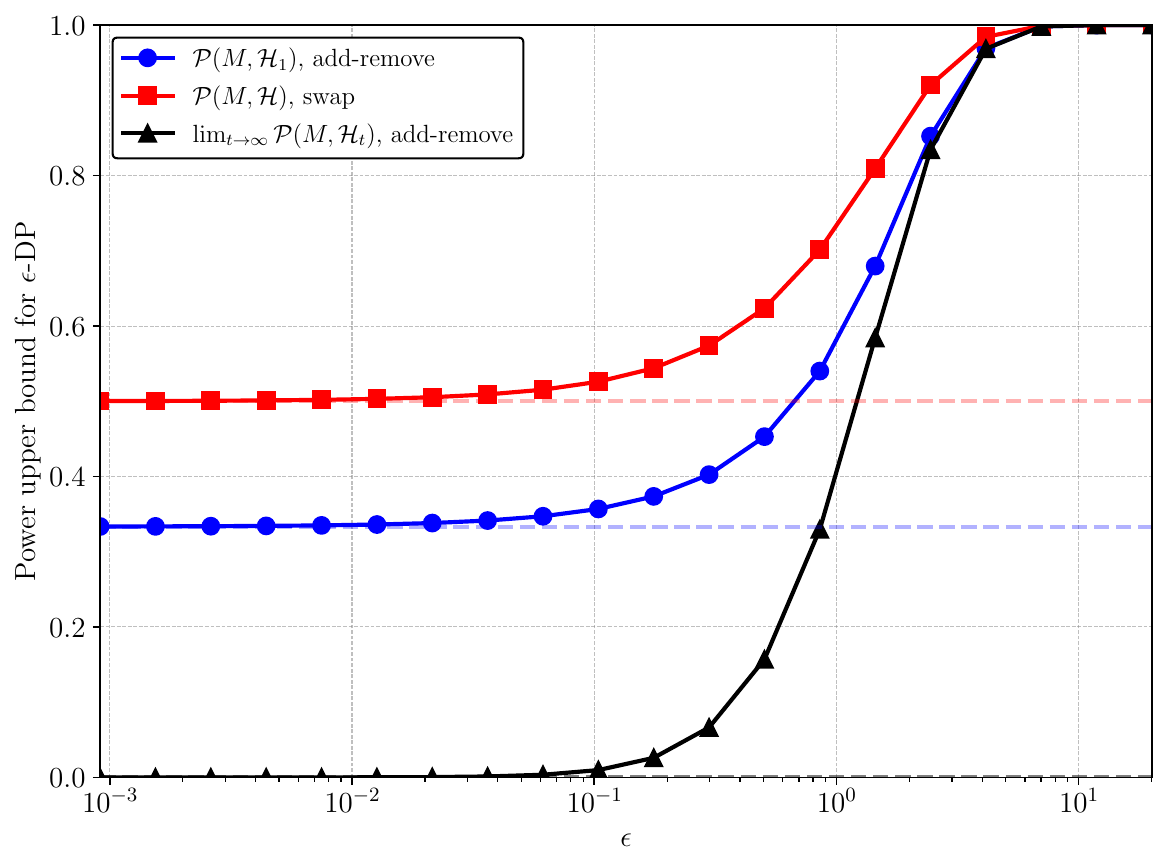}
    \caption{Plot of the upper bound for the power for $\cP(M,\cH)$ in \cref{claim:1} and $\cP(M,\cH_1)$, $\lim_{t \to \infty} \cP(M,\cH_t)$ in \cref{claim:2} as $\epsilon$ varies keeping $k=2$. The dashed horizontal line denote the trivial lower bound of $\frac{1}{|\cH|}$.}
    \Description{A plot of the three upper bounds as $\epsilon$ along with the trivial lower bounds as a dashed line.}
    \label{fig:cap-2}
\end{figure}

\subsection{Power In the AHDP Framework}

For the heterogeneous privacy regime, we need to amend the definition of power appropriately.
In particular, the adversary is considered successful in the hypothesis test if it can recover the underlying user data (not the tuple of data and privacy).
In other words, to capture the worst-case information leak, we consider the adversary's classifier is successful if it can infer the underlying user data.
In this setting, the dataset includes both the user data and user privacy demand.
For a dataset $D \in \cS(\cW)$, define the projection $\Pi_{\cX}(D) \in \cS(\cX)$ to be the marginal dataset on $\cX$, i.e., $h_{\Pi_{\cX}(D)}(x) = \sum_{(y,\epsilon) \in \cW: y = x} h_D(y,\epsilon) \ \ \forall x \in \cX$.
The power under heterogeneous privacy regime is \cref{def:cap-ahdp}.

\begin{definition}[Power under heterogeneity] \label{def:cap-ahdp}
	For the threat model $\cH = \{D_1,\ldots,D_n\}$, where $D_i \in \cS(\cW)$,
	let $\cH_{\cX} = \bigcup_{D \in \cH} \Pi_{\cX}(D) $.  
	For a mechanism $M:\cS(\cW) \to \cY$, define the power to be
	\begin{equation}
		\cP(M,\cH) = \max_{\psi:\cY \to \Delta_{\cH_{\cX}}}  \min_{D \in \cH} \Pr\{\psi(M(D)) = \Pi_{\cX}(D)\}.
	\end{equation}
\end{definition}

Similar to the homogeneous setting, the power is lower bounded by $\frac{1}{|\cH_{\cX}|}$, outlining that the power increases as the threat model becomes stronger ($\cH$ decreases).
The difference from the homogeneous setting is that the only the marginal of the dataset on $\cX$ matters to the adversary for inference.
Similar to homogeneous DP in the add-remove model, assume that the adversary observes $D_o \in \cS(\cW)$ by some mechanism such as collusion between users.
Concretely, in this threat model 
\begin{enumerate}
    \item adversary sees a part of the dataset $D_o \in \cS(\cW)$,
    \item knows the possible correlations $\cW$,
    \item adversary has side-information that the actual dataset $D$ satisfies $|D| - |D_o| \leq t$ for $t \geq 1$,
   	\item adversary knows the mechanism $M$ used.
\end{enumerate} 
Thus, for this threat model $\cH_t = \bigcup_{i=0}^t \{D_o + D' | D' \in \cS(\cW), |D'| = i\}$. 
\cref{claim:3} provides upper and lower bounds on the power for this threat model. \\

Let $$\epsilon_l(x) := \min \{ \epsilon| (x,\epsilon) \in \cW\},\ \cW_o = \{(x,\epsilon_l(x))| x \in \cX\},$$ i.e., $\epsilon_l(x)$ is the most privacy that can be demanded when the user data is $x$ in the set of correlations $\cW$ and $\cW_o$ is the subset of $\cW$ obtained by only considering these tuples.

\begin{restatable}{proposition}{hetCap} \label{claim:3} Power under AHDP: 
	For any $\cW$-AHDP mechanism $M$, \\
	(A1) $\lim_{t \to \infty} \cP(M,\cH_t) \leq \prod_{(x,\epsilon) \in \cW_o }(1-e^{-\epsilon})$, \\
    (A2) $\cP(M,\cH_1) \leq \frac{1}{1 + \sum_{(x,\epsilon) \in \cW_o }e^{-\epsilon}}$. \\
	Conversely,  there exists  $\cW$-AHDP mechanism $M', M''$ such that \\
	(B1) $\lim_{t \to \infty} \cP(M',\cH_t) \geq \prod_{(x,\epsilon) \in \cW_o } \frac{1-e^{-\epsilon}}{1+e^{-\epsilon}}$, \\
    (B2) $\cP(M,\cH_1) \geq \frac{1}{1 + \sum_{(x,\epsilon) \in \cW_o }e^{-\epsilon}}$.
\end{restatable}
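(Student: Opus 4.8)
The plan is to handle the four bounds as two matched pairs and to reduce everything to the ``most private'' subfamily $\cW_o$. The key structural observation is that the projection $\Pi_{\cX}$ in \cref{def:cap-ahdp} collapses all tuples sharing a data value $x$ to the same $\cX$-marginal, while the $\min$ over hypotheses makes the tuple $(x,\epsilon_l(x))$ -- on which any $\cW$-AHDP mechanism is forced to be most private -- the binding one. Concretely, with $k=|\cX|$, let $\cH_t^{o}=\{D_o+D' : D'\in\cS(\cW_o),\ |D'|\le t\}\subseteq\cH_t$. Since $\cW_o$ contains exactly one tuple per data value, the projected hypothesis sets coincide, $\cH_{t,\cX}^{o}=\cH_{t,\cX}$, so the classifier domain $\Delta_{\cH_{t,\cX}}$ is unchanged; as $\cH_t^{o}\subseteq\cH_t$ only shrinks the $\min$, I obtain $\cP(M,\cH_t)\le\cP(M,\cH_t^{o})$, which is precisely the reduction driving the upper bounds (A1), (A2).

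\textbf{Upper bounds.} I would fix an arbitrary classifier $\psi$ of value $v$, with disjoint decision regions $R_{D}=\{y:\psi(y)=\Pi_{\cX}(D)\}$ indexed by the distinct projections of $\cH_t^{o}$, so $\Pr\{M(D)\in R_{D}\}\ge v$ for every $D\in\cH_t^{o}$. Writing $D_{\vec m}=D_o+\sum_x m_x\mset{(x,\epsilon_l(x))}$ for $\vec m\in\bbZ_{\ge0}^{k}$ and applying \cref{def:newDP} (equivalently \cref{thm:HT}) between $D_o$ and $D_{\vec m}$ gives $\Pr\{M(D_o)\in R_{D_{\vec m}}\}\ge e^{-d_{\alpha}(D_o,D_{\vec m})}v\ge e^{-\sum_x\epsilon_l(x)m_x}v$, using $\alpha(x,\epsilon_l(x))\le\epsilon_l(x)$. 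Summing the disjoint regions and using $\sum_{\vec m}\Pr\{M(D_o)\in R_{D_{\vec m}}\}\le1$ yields $v\sum_{\vec m:\,\sum_x m_x\le t}e^{-\sum_x\epsilon_l(x)m_x}\le1$. For $t=1$ the admissible $\vec m$ are $\vec 0$ and the unit vectors, so $v\bigl(1+\sum_{(x,\epsilon)\in\cW_o}e^{-\epsilon}\bigr)\le1$, which is (A2); letting $t\to\infty$ the sum factorizes into $\prod_x\sum_{m\ge0}e^{-\epsilon_l(x)m}=\prod_{(x,\epsilon)\in\cW_o}(1-e^{-\epsilon})^{-1}$, giving (A1). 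These are exactly the arguments behind \cref{claim:2} with the single rate replaced by the per-value rates $\epsilon_l(x)$.

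\textbf{Lower bounds.} For (B1) I would release, for each data value $x$, a noisy count $\tilde n_x=n_x+Z_x$, where $n_x$ aggregates $h_D(x,\cdot)$ over privacy levels and $Z_x$ is an independent two-sided geometric variable with $\Pr\{Z_x=z\}\propto e^{-\epsilon_l(x)|z|}$; adding any tuple $(x,\epsilon)$ shifts only $n_x$ by one at cost $\epsilon_l(x)\le\epsilon$, so the mechanism is $\cW$-AHDP with $\alpha(x,\epsilon)=\epsilon_l(x)$. The adversary, knowing $D_o$, subtracts the base counts and rounds, recovering $\Pi_{\cX}(D)$ exactly iff every $Z_x=0$, an event of probability $\prod_{(x,\epsilon)\in\cW_o}\tfrac{1-e^{-\epsilon}}{1+e^{-\epsilon}}$ independently of $D$; hence the $\min$ over $\cH_t$, and its limit, is at least this product, giving (B1). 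For the matched bound (B2) the geometric mechanism is strictly suboptimal, so I would instead solve the finite linear program describing a $\cW$-AHDP channel on the labels $\cX\cup\{0\}$: set all diagonal entries to the common value $P=\tfrac1{1+\sum_{(x,\epsilon)\in\cW_o}e^{-\epsilon}}$, force the constraints $\Pr\{M(D_o)\in R_x\}=e^{-\epsilon_l(x)}P$ from \cref{def:newDP} to bind, and distribute the remaining mass so each row is a valid distribution; this equalizer turns every inequality of the (A2) argument into equality, attaining (B2).

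\textbf{Main obstacle.} The difficulty is twofold. First, one must argue rigorously that the $\min$ in \cref{def:cap-ahdp} is realized on $\cW_o$ and that the $\Pi_{\cX}$-collapse grants the adversary no leverage from the discarded privacy levels; this is exactly what $\cP(M,\cH_t)\le\cP(M,\cH_t^{o})$ formalizes, and it must be paired with the fact that the success probabilities of the lower-bound mechanisms depend on each data value only through $\epsilon_l(x)$, never on which privacy level actually accompanies it. Second, the truly delicate step is (B2): constructing an equalizer channel that makes the $k+1$ error inequalities simultaneously tight \emph{and} extends to a globally $\cW$-AHDP mechanism, since -- unlike (B1) -- no simple additive-noise rule attains the exact constant $\tfrac1{1+\sum_{(x,\epsilon)\in\cW_o}e^{-\epsilon}}$. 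Verifying feasibility of the linear system -- that the forced lower bounds on each row sum to at most $1-P$ -- is where the real work lies.
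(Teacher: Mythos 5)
Your upper-bound argument is essentially the paper's: the paper likewise restricts the added increments to $\cS(\cW_o)$ (explicitly noting ``$S \in \cW_o$, not $\cW$''), applies the AHDP inequality between $D_o$ and each $D_o+S$, sums over the disjoint decision events, and factorizes the resulting geometric series; your reduction $\cP(M,\cH_t)\le\cP(M,\cH_t^{o})$ is the same step made explicit. For (B1) you take a genuinely different route: the paper uses an exponential-type mechanism supported on the projected hypothesis set, $\Pr\{M'(D)=\Pi_{\cX}(D_o+S)\}\propto e^{-d'(\Pi_{\cX}(D),\Pi_{\cX}(D_o+S))}$ with $d'(F_1,F_2)=\sum_x\epsilon_l(x)|h_{F_1}(x)-h_{F_2}(x)|$, and lower-bounds its normalizing constant, whereas you add independent two-sided geometric noise to the per-value counts. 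Both are valid and give the same constant; yours has the pleasant feature that the success probability $\prod_x\Pr\{Z_x=0\}$ is uniform over $D$ and holds for every finite $t$, not only in the limit.

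The genuine gap is (B2): you correctly flag it as the delicate step but do not close it, and the feasibility of the linear system you defer is exactly the content of the claim. No equalizer LP is needed. The paper's resolution is that the same exponential mechanism, with output space $\cH_{1,\cX}$ (the $k+1$ projections), already works: for the hypothesis $D_o$ the normalizing constant is exactly $1+\sum_{(x,\epsilon)\in\cW_o}e^{-\epsilon}$, while for a hypothesis $D_o+\mset{(x_0,\epsilon_0)}$ it is $1+e^{-\epsilon_l(x_0)}\bigl(1+\sum_{x\ne x_0}e^{-\epsilon_l(x)}\bigr)\le 1+\sum_{(x,\epsilon)\in\cW_o}e^{-\epsilon}$, so every diagonal entry is at least $P=\bigl(1+\sum_{(x,\epsilon)\in\cW_o}e^{-\epsilon}\bigr)^{-1}$; global $\cW$-AHDP holds because the mechanism depends on $D$ only through $\Pi_{\cX}(D)$ and $d'$ is a weighted $\ell_1$ metric with weights $\epsilon_l(x)\le\epsilon$. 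In other words, the channel whose existence you were trying to certify is the exponential mechanism itself, and it is not an exact equalizer (the off-$D_o$ diagonal entries strictly exceed $P$ in general); insisting on equality of all diagonal entries, as your sketch does, is both unnecessary and harder to reconcile with extending the channel to a mechanism defined and $\cW$-AHDP on all of $\cS(\cW)$ rather than only on the $k+1$ hypotheses.
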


Intuitively, \cref{claim:3} upper bound ensures that the power remains bounded if there is a possibility of a user demanding high privacy.
The set of datasets where $D_o$ has in addition some high-privacy demanding users is hard to distinguish from $D_o$ based on the mechanism output due to $\cW$-AHDP guarantee.
In a corner case such that $\cW$ has no possibility of high-privacy demanding users, the adversary can have higher power since privacy is not demanded by the users. \\

In the proposition, the expression is in terms of $\cW_o$ instead of  $\cW$ since the most private users dictate adversary's power.
For example, in the upper bound, one can consider the restricted hypotheses on those datasets where for a given user data $x$, the privacy demanded is the highest, i.e., the privacy is $\epsilon_l(x)$.
For the lower bound, one can construct an $\alpha$-AHDP mechanism such that $\alpha(x,\epsilon) = \epsilon_l(x)$, which also ensures the $\cW$-AHDP condition. \\

We also consider an alternate stronger threat model whose power is easier to interpret.
Assume
\begin{enumerate}
    \item adversary sees a part of the dataset $D_o \in \cS(\cW)$,
    \item adversary has side-information that the actual dataset $D$ is either  $D_o$ or $D_o + \mset{(x,\epsilon)}$ for some known $(x,\epsilon) \in \cW$,
   	\item adversary knows the mechanism $M$ used.
\end{enumerate} 
Thus, for this threat model $\cH(x,\epsilon) = \{D_o, D_o + \mset{(x,\epsilon)} \}$. 
\cref{claim:4} provides bounds on the power.

\begin{restatable}{proposition}{hetCapSec} \label{claim:4} Power under AHDP: 
	For any $\cW$-AHDP mechanism $M$, \\
	(A) $\cP(M,\cH(x,\epsilon)) \leq \frac{1}{1+e^{- \epsilon}}$. \\
	Conversely,  there exists a  $\cW$-AHDP mechanism $M'$ such that \\
	(B) $\cP(M',\cH(x,\epsilon)) \geq \frac{1}{1+e^{- \epsilon}}$. 
\end{restatable}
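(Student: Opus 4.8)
The plan is to recognize that the threat model $\cH(x,\epsilon) = \{D_o,\, D_o + \mset{(x,\epsilon)}\}$ collapses the power into a binary hypothesis test between $D := D_o$ and $D' := D_o + \mset{(x,\epsilon)}$. First I would record that $\cH_{\cX} = \{\Pi_{\cX}(D_o),\, \Pi_{\cX}(D_o) + \mset{x}\}$ consists of two \emph{distinct} marginals, since they differ in size, so a test $\psi:\cY \to \Delta_{\cH_{\cX}}$ is determined by a single function $r:\cY \to [0,1]$ giving the probability of declaring $\Pi_{\cX}(D')$. Writing $e_1 = \bbE_{M(D)}[r]$ and $e_2 = 1 - \bbE_{M(D')}[r]$ for the associated type I and type II error rates, the two success probabilities in \cref{def:cap-ahdp} are exactly $1-e_1$ and $1-e_2$, so $\cP(M,\cH(x,\epsilon)) = 1 - \min_{r}\max\{e_1,e_2\}$.

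For the upper bound (A), I would apply \cref{thm:HT} to the pair $(D,D')$. Since these datasets differ only by the tuple $(x,\epsilon)$, we have $d_{\alpha}(D,D') = \alpha(x,\epsilon) \leq \epsilon$ by the $\cW$-AHDP hypothesis, so \cref{thm:HT} yields $e_1 + e^{\alpha(x,\epsilon)} e_2 \geq 1$, and since $e_2 \geq 0$ this gives $e_1 + e^{\epsilon} e_2 \geq 1$. If both $e_1$ and $e_2$ were strictly below $\frac{1}{1+e^{\epsilon}}$, the left-hand side would be strictly below $\frac{1+e^{\epsilon}}{1+e^{\epsilon}} = 1$, a contradiction; hence $\max\{e_1,e_2\} \geq \frac{1}{1+e^{\epsilon}}$ for every test, giving $\cP(M,\cH(x,\epsilon)) \leq 1 - \frac{1}{1+e^{\epsilon}} = \frac{1}{1+e^{-\epsilon}}$. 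One technical point: \cref{thm:HT} is phrased for deterministic rejection regions, while $\psi$ may randomize. I would resolve this by applying the result to the post-processed mechanism $\tilde M(D) = (M(D),U)$ with $U$ an independent uniform variable (still $\alpha$-AHDP by \cref{prop:pp}) and the deterministic region $\{(y,u): u \leq r(y)\}$, whose errors coincide with $e_1,e_2$.

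For the lower bound (B), I would exhibit one $\cW$-AHDP mechanism attaining the bound. Let $M'(D) = h_D(x,\epsilon) + Z$ depend only on the count of the tuple $(x,\epsilon)$, where $Z$ is a two-sided geometric (discrete Laplace) variable with $\Pr\{Z=k\} = \frac{1-e^{-\epsilon}}{1+e^{-\epsilon}}\, e^{-\epsilon|k|}$ for $k \in \bbZ$. Because $M'$ reads off only this single count and the geometric noise is $\epsilon$-DP with respect to unit changes in that count, $M'$ is $\alpha$-AHDP with $\alpha(x,\epsilon)=\epsilon$ and $\alpha(w)=0$ for every other $w \in \cW$, hence $\cW$-AHDP. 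Writing $c_o = h_{D_o}(x,\epsilon)$, the adversary (who knows $D_o$, so knows $c_o$) uses the threshold test declaring $\Pi_{\cX}(D')$ iff $Y \geq c_o+1$; a direct computation gives $\Pr\{Z \geq 1\} = \frac{e^{-\epsilon}}{1+e^{-\epsilon}}$, so both success probabilities equal $1 - \frac{e^{-\epsilon}}{1+e^{-\epsilon}} = \frac{1}{1+e^{-\epsilon}}$, and therefore $\cP(M',\cH(x,\epsilon)) \geq \frac{1}{1+e^{-\epsilon}}$. The degenerate cases $\epsilon=0$ (power $\tfrac12$, attained by any test) and $\epsilon=\infty$ (release the exact count) would be handled separately.

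The step I expect to be the real crux is the choice of mechanism in (B): the obvious \emph{continuous} Laplace mechanism on the count does not saturate the bound, since its balanced error $\tfrac12 e^{-\epsilon/2}$ strictly exceeds $\frac{1}{1+e^{\epsilon}}$ for $\epsilon>0$ (equivalently $(e^{\epsilon/2}-1)^2 > 0$). Matching the upper bound therefore forces the \emph{discrete} geometric mechanism, whose adjacent-count likelihood ratio is exactly $e^{\pm\epsilon}$ on an atom and thus makes the inequality of \cref{thm:HT} tight. Verifying that this mechanism is simultaneously globally $\cW$-AHDP and optimal on the two-point sub-problem is the key check; the surrounding reductions and tail computations are routine.
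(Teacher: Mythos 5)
Your proposal is correct. Part (A) is essentially the paper's argument: the paper applies the AHDP likelihood-ratio bound directly to the post-processed probabilities $\Pr\{\psi(M(\cdot))=\cdot\}$ for the pair $D_o$ and $D_o+\mset{(x,\epsilon)}$, which is exactly the content of \cref{thm:HT} that you invoke, so the two derivations of $\max\{e_1,e_2\}\geq \frac{1}{1+e^{\epsilon}}$ coincide up to packaging (your aside on handling randomized tests via an auxiliary uniform variable is a nice touch the paper leaves implicit). For part (B) you take a genuinely different construction: the paper sets $\cY=\cH(x,\epsilon)$ and uses a two-point randomized-response/exponential mechanism $\Pr\{M'(D)=D'\}\propto e^{-d_{\alpha}(D,D')}$ with $\alpha(x,\epsilon)=\epsilon$, which immediately gives both diagonal probabilities equal to $\frac{1}{1+e^{-\epsilon}}$, whereas you release the single count $h_D(x,\epsilon)$ perturbed by two-sided geometric noise and run a threshold test. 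Both mechanisms have adjacent-count likelihood ratio exactly $e^{\pm\epsilon}$ and hence saturate the bound; your version buys a mechanism with a more "natural" numerical output (and your observation that continuous Laplace noise strictly fails to saturate is correct and instructive), while the paper's version is shorter because the output space is already the hypothesis set and $\psi$ is the identity. Your verification that the geometric mechanism is $\cW$-AHDP globally (via $\alpha(x,\epsilon)=\epsilon$, $\alpha\equiv 0$ elsewhere) is the one step that needed care, and it checks out.
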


While this is easier to interpret, the stronger threat model is harder to motivate.
This is essentially the same threat model as that of the standard hypothesis testing results like \cref{thm:HT} assume.
The proof of \cref{claim:3} and \cref{claim:4} can be found in \cref{sec:cap}.
This threat model demonstrates that the adversary cannot distinguish between the two hypotheses $D_o$ and $D_o + \mset{(x,\epsilon)}$ if the unseen user has high privacy demand, ensuring that the privacy of high-privacy demanding users is preserved. 
So such guarantee is provided if the user has low privacy demand.

\begin{remark}[Power Depends on the Threat Model] \label{rem:power}
	The different adversarial power expressions that we show in this section depend on the threat model. As the adversary gets more side-information, its power will increase.
	To demonstrate this point, we consider the situation where $\cW = \{(x_1,0),(x_2,\infty)\}$ and suppose the adversary observes a part of the dataset $D_o$ and has the side-information that the actual dataset is either $D_o +\mset{(x_1,0)}$ or $D_o +\mset{(x_2,\infty)}$.
	In other words, the adversary knows the size of the dataset exactly.
	In this pathological threat model, since in the second possibility the user does not demand any privacy, the power can be $1$ for some $\cW$-AHDP mechanisms, highlighting that AHDP would not offer strong privacy guarantees in this threat model.
\end{remark}

\subsection{Interpreting privacy afforded by AHDP}

AHDP prevents inference of individual data to the extent allowed by the privacy level specified.
The exact inference power of the adversary depends on the side-information available to the adversary, as pointed out in \cref{rem:power}. 
Thus, standard HDP fails when data and privacy demand are correlated (see \cref{sec:Issues}), while AHDP remains robust under the same conditions as demonstrated in this section through the lens of hypothesis testing.

Analogous to standard DP \cite{desfontainesblog20200306,aliakbarpourenhancing}, AHDP can also be understood from a Bayesian setup. 
The following argument is completely analogous to the standard homogeneous DP literature.
Suppose there is an underlying distribution, a prior $\pi$, known to the adversary that generates datasets.
Suppose the adversary is trying to determine whether the underlying dataset $D$ is $D_o$ or $D_o + \mset{(x,\epsilon)}$?
The \textit{prior} odds is given by \[ r_{\mathsf{prior}} = \frac{\pi(D = D_o)}{\pi(D = D_o + \mset{(x,\epsilon)})}.\]
Now, on observing the output $Y$ from an AHDP mechanism $M(D)$, the adversary obtains \textit{posterior} odds of 
\begin{align}
    r_{\mathsf{posterior}} &= \frac{\pi(D = D_o|M(D) = Y)}{\pi(D = D_o + \mset{(x,\epsilon)}|M(D) = Y)}, \\
    &= \frac{\pi(D=D_o)}{\pi(D = D_o + \mset{(x,\epsilon)})} \frac{P(M(D_o) = Y)}{P(M(D_o+\mset{(x,\epsilon)}) = Y)}, \\
    &= r_{\mathsf{prior}}  \frac{P(M(D_o) = Y)}{P(M(D_o+\mset{(x,\epsilon)}) = Y)}.
\end{align}
The ratio of the priors odds to the posterior odds quantifies the new  information gained by the adversary by observing the mechanism output.
A more private mechanism should have odds ratio close to one.
By the AHDP guarantee, we have $\left| \log \frac{P(M(D_o) = Y)}{P(M(D_o+\mset{(x,\epsilon)}) = Y)} \right| \leq \epsilon$.
Thus, we get the guarantee that 
\begin{equation}
    \frac{r_{\mathsf{posterior}}}{r_{\mathsf{prior}}} \in [ e^{-\epsilon}, e^{\epsilon}].
\end{equation}
In other words, in a Bayesian setup, AHDP ensures that posterior odds  does not deviate from the prior odds by more a factor that depends on the privacy demanded by the extra user. \\

\textit{Failure of Standard HDP Understood via the Proposed Hypothesis Testing Framework:} Another example that demonstrates the failure of standard HDP under correlations is the problem set up of \cref{claim:4}. Consider the standard HDP mechanism $M$ that outputs length of the dataset directly. Clearly, this mechanism satisfies the HDP definition (\cref{def:hetDP}) but $\cP(M,\cH(x,\epsilon)) = 1$ since the adversary can easily distinguish the hypotheses $\{D_o, D_o +   \mset{(x,\epsilon)} \}$ by the length of the dataset. Thus, if $\epsilon = 0$, then the adversary's power under AHDP is $\frac{1}{2}$, i.e., it's a random guess and the adversary can't infer any information about the dataset from the mechanism output but its power is $1$, i.e., it can infer the dataset exactly under standard HDP.

\section{Universal AHDP Mechanisms} \label{sec:algo}

It should be noted that a $\cW$-AHDP certificate is specific to the correlation $\cW$.
A $\cW$-AHDP mechanism might not be $\cW'$-AHDP if $\cW' \nsubseteq \cW$.
For example, consider the setting presented in \cref{ex:ahdp} where we use a homogeneous $\epsilon_h$-DP mechanism $M_h$ to obtain a $\cW_h$-AHDP mechanism where $\cW_h = \cX \times \{\epsilon_h\}$.
Now suppose we have $\cW' = \cX \times \{\epsilon_h , \frac{\epsilon_h}{2} \}$, then $M_h$ is not $\cW'$-AHDP.

The domain of correlation, $\cW$, plays a key role in determining whether a mechanism is $\cW$-AHDP.
However, for real-world usage, modeling the domain $\cW$ can be a difficult task.
One possible way of mitigating this issue is via designing an interface where users can choose their privacy demand from a fixed list of privacy levels.
The users can choose the privacy level that is closest to their desired level.
Suppose the list of privacy levels is $L$ then we can set $\cW = \cX \times L$.
While this is a valid strategy, we focus on a more ambitious question  -- can you design (non-trivial) $\cW$-AHDP mechanisms without knowing $\cW$?
Since these mechanisms are correlation agnostic, the domain needs to be $\cS(\cX \times \bbR_{\geq 0})$, i.e., they can handle arbitrary correlations in the dataset. 
For any possible underlying correlation $\cW \subset \cX \times \bbR_{\geq 0}$ with finite $|\cW|$, the mechanism must be $\cW$-AHDP.
We call these mechanisms universal AHDP and define it in \cref{def:uAHDP}.

\begin{definition} \label{def:uAHDP}
    A mechanism $M:\cS(\cX \times \bbR_{\geq 0}) \to \cY$ is said to be universal AHDP if it is $\cW$-AHDP for all  $\cW \subset \cX \times \bbR_{\geq 0}$ with finite $|\cW|$. 
\end{definition}

Therefore, universal AHDP mechanisms, can be used in applications where the correlations are not known apriori.
Surprisingly, we show it is indeed possible to construct simple non-trivial universal AHDP mechanisms, making them attractive for real-world use.
However, before we describe such mechanisms, we highlight a shortcoming of universal AHDP mechanisms -- it is impossible for such mechanisms to be unbiased estimators for meaningful statistics of the dataset.
Let $f:\cX \to \bbR$ represent a generic function of interest and suppose we want to estimate the empirical statistic $f(D) = \sum_{(x,\epsilon)} f(x)h_D(x,\epsilon)$.
  \cref{prop:Bias} below shows that it is, in general, impossible to form  an unbiased AHDP-private estimate of $f(D)$ without any assumptions on the structure of $\cW$ or $f$.

\begin{restatable}{proposition}{propBias} \label{prop:Bias}
	In general, there exists no universal AHDP mechanism $M$ that can satisfy $\bbE[M(D)] = f(D)$.
\end{restatable}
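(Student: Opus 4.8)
The plan is to exploit the fact that universality forces the mechanism to cope with a user who demands full privacy, i.e.\ a tuple of the form $(x_0,0)$, and that such a user must be rendered completely invisible to the output. First I would fix any $x_0 \in \cX$ with $f(x_0) \neq 0$; if no such point exists then $f \equiv 0$ and the claim is vacuous, so the existence of such an $x_0$ is precisely the non-triviality hypothesis encoded by ``in general.'' I would then instantiate the universal guarantee at the single-atom correlation domain $\cW = \{(x_0,0)\}$ (any finite $\cW$ containing this tuple works equally well). Since $M$ is universal AHDP, it is in particular $\cW$-AHDP, so by \cref{def:newDP}(C) its governing function $\alpha$ must satisfy $\alpha(x_0,0) \leq 0$, which forces $\alpha(x_0,0) = 0$.

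Next I would invoke the equivalent per-tuple formulation of $\alpha$-AHDP recorded immediately after \cref{def:newDP}: for every $D \in \cS(\cW)$, with $D' = D + \mset{(x_0,0)}$, and every measurable $S$,
\[
    \left| \log \frac{\Pr\{M(D) \in S\}}{\Pr\{M(D') \in S\}} \right| \leq \alpha(x_0,0) = 0.
\]
Consequently $M(D)$ and $M\big(D + \mset{(x_0,0)}\big)$ induce identical output distributions, and in particular $\bbE[M(D)] = \bbE\big[M(D + \mset{(x_0,0)})\big]$.

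Finally I would derive the contradiction from unbiasedness. Taking $D = \emptyset$, unbiasedness gives $\bbE[M(\emptyset)] = f(\emptyset) = 0$ and $\bbE[M(\mset{(x_0,0)})] = f(\mset{(x_0,0)}) = f(x_0)$, while the previous step equates these two expectations; therefore $f(x_0) = 0$, contradicting the choice of $x_0$. This rules out any unbiased universal AHDP mechanism whenever $f \not\equiv 0$.

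The argument is short, and I do not expect a genuine technical obstacle; the main thing to get right is the logical framing of the word ``general.'' The only escape from the contradiction is $f(x) = 0$ at every $x$ that a full-privacy user could occupy, so the statement should be read as asserting impossibility for any $f$ that is not identically zero on $\cX$. I would therefore state this non-triviality hypothesis on $f$ explicitly, and emphasize that the construction uses only the presence of the full-privacy tuple $(x_0,0)$ in an \emph{arbitrary} correlation domain — which universality must accommodate by definition — so that no assumption on the structure of $\cW$ (nor on $f$ beyond being nonzero) is needed, matching the statement.
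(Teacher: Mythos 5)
Your proof is correct and follows essentially the same route as the paper: both arguments exploit a full-privacy tuple $(x_0,0)$ in $\cW$, which forces $M(D)$ and $M(D+\mset{(x_0,0)})$ to have identical output distributions and hence equal expectations, contradicting unbiasedness. The only difference is that the paper specializes to $f\equiv 1$ (dataset size) to exhibit a concrete counterexample, whereas you keep a general $f$ with $f(x_0)\neq 0$ and make the non-triviality hypothesis explicit, which is a harmless and slightly cleaner framing of the same idea.
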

\begin{proof}
	Consider the special case where $f(x) = 1 \forall\  x \in \cX$. In other words, we are interested in  estimating the size of the dataset.
    Suppose $\cW$ is such that $\exists x_o \in \cX$, $(x_o,0) \in \cW$ then by AHDP property, the mechanism's output distributions for datasets $D$ and dataset $D' = D + \mset{(x_o,0)}$ are equal, which makes it impossible for both $\bbE[M(D)] = |D|$ and $\bbE[M(D')] = |D'|$ to hold simultaneously.
\end{proof}

A short note is presented in \cref{rem:bias} on bounding the bias in terms of covariance of the data and privacy demand.
Although \cref{prop:Bias} is a negative result, it requires a worst-case setup.
It is still meaningful to ask how universal AHDP mechanisms can perform on real-world datasets.
Thus, shall focus on universal AHDP mechanisms for the rest of this section.
First, we refer to the sample mechanism suggested by \citet{Jorg15} that satisfies universal AHDP.
We then consider the task of mean estimation and linear regression, and further provide an alternate universal AHDP mechanism.
Both of the mechanisms are extremely simple to implement and deploy in real-world applications.

\subsection{Sample Mechanism of \citet{Jorg15}}
\label{sec:Jorg}

\citet{Jorg15} studied the problem of personalized privacy, without the complication of correlations between data and privacy demand.
However, their proposed Sample Mechanism satisfies the property of being universal AHDP.
The sample mechanism works in two stages.
In the first stage, the data is sub-sampled. 
A threshold $t > 0$ and a privacy mapping $\alpha(x,\epsilon)$ is chosen beforehand. 
All datapoints having $\alpha(x,\epsilon) \geq t$ are included in the sub-sampled dataset while rest of the data points are included in the dataset with diminishing probability as the privacy demand increases.
In particular, let $T(D) = \{(x,\epsilon) : h_D(x,\epsilon) > 0\}$.
From the original dataset $D$, obtain a randomly sub-sampled dataset $D'$ where
\begin{equation}
	h_{D'}(x,\epsilon) \sim \begin{cases}
		h_D(x,\epsilon) & \text{ if } \alpha(x,\epsilon) \geq t \\
		\mathsf{Bin}\left(h_D(x,\epsilon), \frac{e^{\alpha(x,\epsilon)}-1}{e^t - 1} \right) & \text{ else,}
	\end{cases} 
\end{equation}
$\forall (x,\epsilon) \in T(D)$. 
In other words, each datapoint is sampled independently with probability $ \frac{e^{\alpha(x,\epsilon) \wedge t}-1}{e^t - 1}$, where $a \wedge b$ denoted $\min\{a,b\}$.
In the second stage, any homogeneous $t$-DP (add-remove model) can be employed on the sub-sampled dataset $D'$ for the desired task.
The mechanism is $(\alpha(x,\epsilon) \wedge t)$-AHDP; a complete proof of privacy for the sample mechanism is presented in \cref{sec:SM}.
Thus, setting $\alpha(x,\epsilon) \leq \epsilon$ ensures universal AHDP.

The parameter $t$ controls the bias and variance of the mechanism in a certain sense -- if $t$ is set high then much of the data is absent in the sub-sampled dataset $D'$ introducing a possible bias in the output of the mechanism, but the noise due to $t$-DP mechanism can be expected to be low.
Conversely, a lower value of $t$ ensures the dataset $D'$ is more representative of the true data but the noise due to the $t$-DP mechanism will be higher in general.
It must be noted that the sample mechanism is remarkably versatile and can be used for any task by employing the corresponding homogeneous DP mechanism in the second step.

\citet{Jorg15} also propose an adaptation of the Exponential Mechanism \cite{McSherry07} that satisfies universal AHDP. 
We do not discuss this mechanism since our focus is on simple and efficient mechanisms.

\subsection{Additional Mechanisms Based on Linearity} \label{sec:LQ}

We provide some alternative universal AHDP mechanisms for task of linear counting queries, which encompass sum, count, and histogram estimation.
The ideas involve readily extended to mean estimation and empirical risk minimization.
The key idea behind these mechanisms  is to have a carefully linear combination of the data with some additional Laplace noise.

Recall that $\cL(\lambda)$ denotes a sample from the Laplace distribution with pdf given by $p(x;\lambda) = \frac{1}{2\lambda} e^{-|x|/\lambda}$.

Given a $f:\cX \to \bbR$, suppose  we are interested in estimating the linear counting query $f(D) = \sum_{x \in \cX} f(x)h_D(x)$.
Using the linearity, we can estimate $f(D)$ using a weighted sum over all the datapoints in $D$.
In \cref{prop:newDP}, we present a class of mechanisms for such linear query estimation that are universal AHDP. 
Let $l = \min_{x \in \cX} f(x)$ and $h = \max_{x \in \cX}f(x)$.

\begin{proposition}[AHDP Linear Query] \label{prop:newDP}
    The mechanism $$M(D) = l + \sum_{(x,\epsilon) \in T(D)}\alpha(x,\epsilon)(f(x)-l)h_D(x,\epsilon) + \cL(h-l)$$ is  satisfies $\alpha$-AHDP. Thus, for choices of $\alpha(x,\epsilon) \leq \epsilon$, the mechanism is universal AHDP.
\end{proposition}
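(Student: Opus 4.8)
The plan is to collapse the mechanism into a single Laplace-noised scalar and then invoke the standard Laplace sensitivity argument. Write the deterministic part as
\[
g(D) = l + \sum_{(x,\epsilon)} \alpha(x,\epsilon)\bigl(f(x)-l\bigr) h_D(x,\epsilon),
\]
where I extend the summation index from $T(D)$ to all tuples in $\cX \times \bbR_{\geq 0}$ at no cost, since any tuple with $h_D(x,\epsilon)=0$ contributes nothing. Then $M(D) = g(D) + \cL(h-l)$ is simply the quantity $g(D)$ corrupted by Laplace noise of scale $h-l$, so $M(D)$ has density $p_D(y) = \frac{1}{2(h-l)}\exp\!\bigl(-|y-g(D)|/(h-l)\bigr)$. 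The degenerate case $h=l$ (i.e.\ $f$ constant) I would dispatch first: there $f(x)-l=0$ for every $x$, so $g\equiv l$ and $M$ outputs $l$ deterministically, trivially satisfying $\alpha$-AHDP.

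For the density ratio I would use the classical triangle-inequality bound: for any $y$,
\[
\left| \log \frac{p_D(y)}{p_{D'}(y)} \right| = \frac{\bigl|\, |y - g(D')| - |y - g(D)| \,\bigr|}{h-l} \le \frac{|g(D) - g(D')|}{h-l}.
\]
Integrating this pointwise bound over an arbitrary measurable $S \subseteq \cY$ gives $\Pr\{M(D)\in S\} \le e^{c}\,\Pr\{M(D')\in S\}$ with $c = |g(D)-g(D')|/(h-l)$, and symmetrically, which is exactly the form required by \cref{def:newDP}(A). It then remains only to control $c$.

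By linearity of $g$,
\[
g(D) - g(D') = \sum_{(x,\epsilon)} \alpha(x,\epsilon)\bigl(f(x)-l\bigr)\bigl(h_D(x,\epsilon) - h_{D'}(x,\epsilon)\bigr),
\]
so the triangle inequality yields $|g(D)-g(D')| \le \sum_{(x,\epsilon)} \alpha(x,\epsilon)(f(x)-l)\,|h_D(x,\epsilon)-h_{D'}(x,\epsilon)|$. The crucial normalization is the observation that $l \le f(x) \le h$ forces $0 \le f(x)-l \le h-l$, hence $(f(x)-l)/(h-l) \le 1$; dividing through by $h-l$ therefore collapses each weight to at most $\alpha(x,\epsilon)$, giving $c \le \sum_{(x,\epsilon)} \alpha(x,\epsilon)\,|h_D(x,\epsilon)-h_{D'}(x,\epsilon)| = d_\alpha(D,D')$. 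This establishes $\alpha$-AHDP, and universal AHDP follows at once: for any finite $\cW$, picking $\alpha(x,\epsilon)\le\epsilon$ certifies the $\cW$-AHDP condition of \cref{def:newDP}(C).

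The only genuinely delicate points I anticipate are (i) pinning the Laplace scale to exactly $h-l$ so that the per-element sensitivity $\alpha(x,\epsilon)(f(x)-l)$ is dominated after normalization — this is precisely why the scale is the range $h-l$ rather than the raw magnitude of the weighted terms — and (ii) correctly handling the $h=l$ degeneracy and the harmless extension of the index set from $T(D)$ to all tuples. Everything else is a routine Laplace-mechanism calculation; in particular the bound falls out directly in the general two-dataset form, so I would not need the add-one reformulation or the composition result \cref{prop:BC}.
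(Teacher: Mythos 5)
Your proposal is correct and follows essentially the same route as the paper's proof: a Laplace sensitivity argument bounding the log-density ratio by $|g(D)-g(D')|/(h-l)$ via the triangle inequality, then using $0 \le f(x)-l \le h-l$ to collapse each term's weight to $\alpha(x,\epsilon)$ and arrive at $d_\alpha(D,D')$. Your treatment is slightly more careful on peripheral points the paper leaves implicit (the $h=l$ degeneracy, extending the index set beyond $T(D)$, and passing from the pointwise density bound to measurable sets), but the core argument is identical.
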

\begin{proof}
    Suppose the correlation is given by unknown $\cW$. 
    For any pair of datasets $D,D' \in \cS(\cW)$, it suffices to consider the ratio of pdfs of the mechanism.
    Let $T(D) = \{(x,\epsilon) : h_D(x,\epsilon) > 0\}$ and note that $T(D), T(D') \subseteq \cW$.
    Thus, $\forall s \in \bbR$, 
    \begin{align*}
        &\left| \log \frac{p(M(D) = s)}{p(M(D') = s)} \right|  \\
        &=\bigg|\frac{-|s - l - \sum_{(x,\epsilon)\in T(D)}\alpha(x,\epsilon)(f(x)-l)h_D(x,\epsilon)|}{h-l} \nonumber \\
         &~~+ \frac{|s - l - \sum_{(x,\epsilon) \in T(D')}\alpha(x,\epsilon)(f(x)-l)h_{D'}(x,\epsilon)|}{h-l} \bigg| \\
        &\leq \left|\frac{\sum_{(x,\epsilon) \in T(D) \cup T(D')}\alpha(x,\epsilon)(f(x)-l)(h_{D'}(x,\epsilon)-h_D(x,\epsilon))}{h-l}\right| \\
        &\leq \frac{\sum_{(x,\epsilon) \in T(D) \cup T(D')}\alpha(x,\epsilon)\left|f(x)-l\right| \left|h_{D'}(x,\epsilon)-h_D(x,\epsilon)\right|}{h-l} \\
        &\leq \sum_{(x,\epsilon) \in T(D) \cup T(D')}\alpha(x,\epsilon)\left|h_{D'}(x,\epsilon)-h_D(x,\epsilon)\right| \\
        &= \sum_{(x,\epsilon) \in \cW}\alpha(x,\epsilon)|h_{D'}(x,\epsilon) - h_{D}(x,\epsilon)|.
    \end{align*}
    Thus, choosing  $\alpha(x,\epsilon) \leq \epsilon$ ensures $\cW$-AHDP for any $\cW$, proving that the mechanism is universal AHDP.
\end{proof}

\cref{prop:newDP} shows that an affine estimator with weights $\alpha(x,\epsilon) \leq \epsilon$ will satisfy universal AHDP.
A natural choice is to consider $\alpha(x,\epsilon) = \epsilon$ but it may be noted that a single user having no privacy demand $(\epsilon \to \infty)$ would skew the estimate arbitrarily.
Thus, choosing weights $\alpha(x,\epsilon) = 1 - e^{-\epsilon}$ or $\alpha(x,\epsilon) = \frac{\epsilon}{1+\epsilon}$, may be more appropriate depending on the context.

Mechanisms for estimate the sum, counting the size of the dataset immediately follow from \cref{prop:newDP} using $f(x) = x$ (assuming $\cX$ bounded) and $f(x) = 1$, respectively.
Mechanisms for histogram over $k$ bins ($\cX = [k]$) follow similarly.

\begin{example}[Sum Estimation] \label{ex:sum}
For illustration, we provide an example of an universal AHDP mechanism for sum estimation. Let $\cW = \{ (0,0), (1,1), (2,\infty)\}$ and let the dataset $D$ be such that $h_D(0,0) = 10$, $h_D(1,1) = 10$, and $h_D(2,\infty) = 10$, i.e., there are $10$ samples for each pair of data-privecy tuple in the domain.
For estimators of form given in \cref{prop:newDP}, $l=0$, and $h=2$.
Choosing $\alpha(x,\epsilon) = \epsilon$, the mechanism in \cref{prop:newDP} would always output $\infty$ -- demonstrating the issue with setting $\alpha(x,\epsilon) = \epsilon$.
Choosing $\alpha(x,\epsilon) = 1 - e^{-\epsilon}$, the mechanism output is $0 + (1-e^{-1}) \times 10 + (1-e^{-2}) \times 10 \times 2 + \cL(2) = 23.6 + \cL(2)$.
Thus, the output has a bias of $-6.4$.
\end{example}

\paragraph{\textbf{Mean estimation:}}
Mean estimation plays a key role in statistics.
In the traditional setting, mean estimation refers to population mean estimation since sample mean can be obtained directly.
In the private setting, we need to estimate the sample mean, which is then a good estimate for the population mean for large sample sizes.
The sample mean can be estimated via dividing the sum estimate by the count estimate.
Using the composition property presented in \cref{prop:BC}, we can bound the privacy of the mechanism.
 \cref{prop:newDP-mean} makes this idea concretely for $\cX \subseteq [l,h]$.
 Similar idea can be extended to get universal AHDP mechanism for frequency estimation and is presented with proof in \cref{sec:exp-det}.

\begin{proposition}[AHDP Mean] \label{prop:newDP-mean}
	The mechanism 
	$$M(D) = \frac{l + \sum_{(x,\epsilon) \in T(D)}\alpha_1(x,\epsilon)h_D(x,\epsilon)(x-l) + \cL((h-l))}{\sum_{(x,\epsilon) \in T(D)}\alpha_2(x,\epsilon)h_D(x,\epsilon) + \cL(1)}$$ 
	satisfies $(\alpha_1 + \alpha_2)$-AHDP. Thus, for choices of $\alpha_1(x,\epsilon) + \alpha_2(x,\epsilon) \leq \epsilon$, the mechanism is universal AHDP.
\end{proposition}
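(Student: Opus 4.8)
The plan is to recognize $M(D)$ as the post-processing, via division, of two independently-noised statistics: a numerator that is $\alpha_1$-AHDP and a denominator that is $\alpha_2$-AHDP. Once both pieces are certified, the result follows by basic composition (\cref{prop:BC}) and then post-processing (\cref{prop:pp}), with the universal-AHDP conclusion dropping out from \cref{def:newDP}(C).

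First I would observe that the numerator, $N(D) = l + \sum_{(x,\epsilon) \in T(D)}\alpha_1(x,\epsilon)(x-l)h_D(x,\epsilon) + \cL(h-l)$, is exactly the linear-query mechanism of \cref{prop:newDP} instantiated with $f(x) = x$, for which $\min_x f(x) = l$ and $\max_x f(x) = h$ on $\cX \subseteq [l,h]$. Hence $N$ is $\alpha_1$-AHDP with no further work.

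The step that requires the most care is the denominator, $G(D) = \sum_{(x,\epsilon) \in T(D)}\alpha_2(x,\epsilon)h_D(x,\epsilon) + \cL(1)$, because the count query ($f \equiv 1$) is the degenerate case of \cref{prop:newDP} where $h - l = 0$, so it is not covered directly. I would instead argue $\alpha_2$-AHDP from scratch by a standard Laplace-mechanism computation: writing $g(D) = \sum_{(x,\epsilon)}\alpha_2(x,\epsilon)h_D(x,\epsilon)$, the reverse triangle inequality gives, for every $s \in \bbR$ and $D,D' \in \cS(\cW)$,
\begin{align*}
\left|\log \frac{p(G(D)=s)}{p(G(D')=s)}\right| &= \bigl| -|s - g(D)| + |s - g(D')| \bigr| \\
&\leq |g(D)-g(D')| \\
&\leq \sum_{(x,\epsilon) \in \cW}\alpha_2(x,\epsilon)\,|h_D(x,\epsilon)-h_{D'}(x,\epsilon)| = d_{\alpha_2}(D,D'),
\end{align*}
which is precisely the $\alpha_2$-AHDP condition of \cref{def:newDP}. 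The Laplace scale of $1$ is exactly what converts the weighted sensitivity of $g$ into the bound $d_{\alpha_2}$.

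Finally, since $N$ and $G$ draw independent Laplace noise, the pair $(N(D),G(D))$ is the (non-adaptive) composition of an $\alpha_1$-AHDP and an $\alpha_2$-AHDP mechanism; \cref{prop:BC}, with the second mechanism simply ignoring the first output, yields that $(N,G)$ is $(\alpha_1+\alpha_2)$-AHDP. The division map $(a,b) \mapsto a/b$ is measurable, so \cref{prop:pp} gives that $M(D) = N(D)/G(D)$ is $(\alpha_1+\alpha_2)$-AHDP. The universal-AHDP claim then follows immediately: if $\alpha_1(x,\epsilon)+\alpha_2(x,\epsilon) \leq \epsilon$ for all $(x,\epsilon)$, then $(\alpha_1+\alpha_2)(x,\epsilon)\leq\epsilon$ on every finite $\cW$, so \cref{def:newDP}(C) certifies $\cW$-AHDP for each such $\cW$, which is exactly universal AHDP per \cref{def:uAHDP}. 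The only genuine subtlety is the denominator argument above; everything else is a direct appeal to the composition and post-processing properties already established.
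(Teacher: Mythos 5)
Your proof is correct and follows essentially the same route the paper intends: the paper presents $M$ as the sum estimate divided by the count estimate and simply invokes the composition property (\cref{prop:BC}) together with post-processing, without spelling out further details. Your extra care with the denominator---observing that the constant query $f\equiv 1$ degenerates in \cref{prop:newDP} (there $h-l=0$) and so certifying it by a direct Laplace computation---is a refinement of, not a departure from, the paper's argument.
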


Consider $\alpha_1 = \alpha_2 = \alpha$.
Choosing $\alpha(x,\epsilon)$ as large as possible, i.e., $\alpha(x,\epsilon) = \epsilon/2$ ensures that the effect of Laplace noise in both the numerator and denominator is minimal; however, this comes at the cost of skewing the estimate towards user data who desire less privacy.
For example, if there is a user demanding no privacy $(\epsilon \to \infty)$, then the data of all users demanding some privacy gets ignored. 
To reduce the skew, one would want $\alpha(x,\epsilon)$ to vary as little as possible, such as  $\alpha(x,\epsilon) = \frac{1 - e^{-\epsilon}}{2}$  but this tendency would increase the effect of the Laplace noise. 
In \cref{sec:exp}, we consider three choices of $\alpha(x,\epsilon)$ experimentally.

\begin{example}[Mean Estimation] \label{ex:mean}
For illustration, consider the same setup as that of \cref{ex:sum} for mean estimation. 
That is, $\cW = \{ (0,0), (1,1), (2,\infty)\}$ and the dataset $D$ is such that $h_D(0,0) = 10$, $h_D(1,1) = 10$, and $h_D(2,\infty) = 10$.
For estimators of form given in \cref{prop:newDP-mean}, choosing $\alpha(x,\epsilon) = \epsilon/2$ would always output the mean as $2$ if we consider the privacy demand of $\infty$ as a limit.
Choosing $\alpha(x,\epsilon) = (1 - e^{-\epsilon})/2$, the mechanism output is $\frac{11.8 + \cL(2)}{7.48 + \cL(1)}$.
For numerical stability, we set the denominator as $\max\{7.48 + \cL(1), 1\}$, which does not affect the privacy guarantees by \cref{prop:pp}. We present the empirical distribution of the output over $10^7$ samples in \cref{fig:emp-dist}. The mean of the distribution is about 1.6.
 
\begin{figure}
    \centering
    \includegraphics[width=0.75\linewidth]{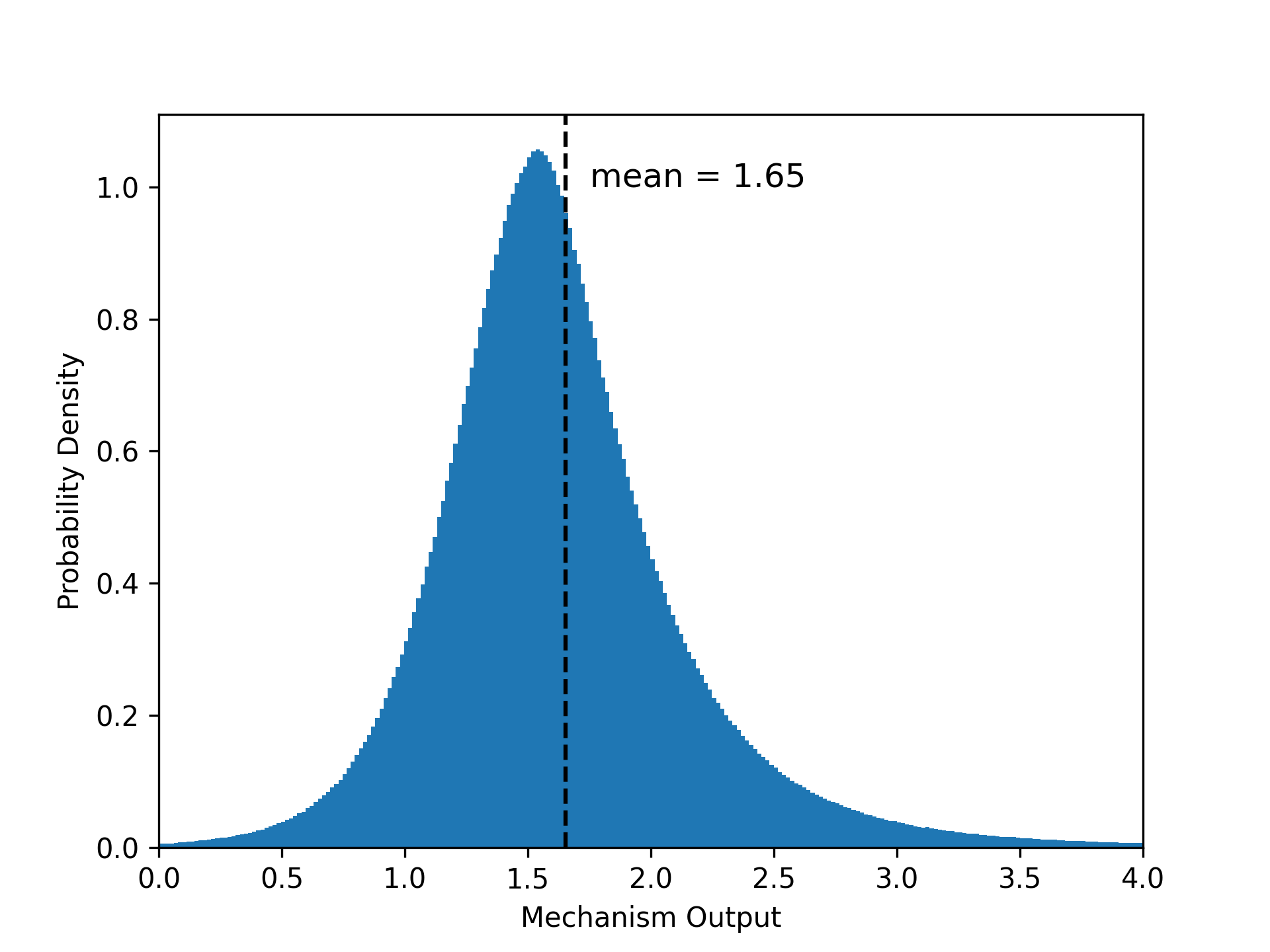}
    \caption{Empirical distribution of the output of the mechanism on \cref{ex:mean}.}
    \label{fig:emp-dist}
\end{figure}
\end{example}

\begin{remark}[Asymptotic Bias for Mean Estimation] \label{rem:bias}
We comment on the asymptotic bias for mean estimation here.
    Suppose $\{(X_i,\epsilon_i)\}_{i=1}^n$ are generated i.i.d. from joint distribution $P_{X,\epsilon}$ and we are interested in finding the mean of $P_X$ as the number of samples increases.
For simplicity, assume $\cX = [0,1]$. 
Using the mechanism in \cref{prop:newDP-mean}, we have
\begin{equation}
    M(D_n) = \frac{\sum_i \alpha(x_i,\epsilon_i) x_i + \cL(2) }{\sum_i \alpha(x_i,\epsilon_i) + \cL(2)}.
\end{equation}
As $n$ increases, we have $\frac{\sum_i \alpha(x_i,\epsilon_i) x_i + \cL(2)}{n}   \to \bbE[X\alpha(X,\epsilon)]$ almost surely.
Similarly, $\frac{\sum_i \alpha(x_i,\epsilon_i) + \cL(2)}{n}   \to \bbE[\alpha(X,\epsilon)]$ almost surely by strong law of large numbers.
Assuming $\bbE[\alpha(X,\epsilon)] > 0$, we have $\lim_{n \to \infty} M(D_n) \to \frac{\bbE[X\alpha(X,\epsilon)]}{\bbE[\alpha(X,\epsilon)]}$ almost surely using the continuous mapping theorem.
Rewriting, we have $\lim_{n \to \infty} M(D_n) \to \bbE[X] +  \frac{\mathsf{Cov}(X,\alpha(X,\epsilon))}{\bbE[\alpha(X,\epsilon)]}$ almost surely.
Assuming $\alpha(x,\epsilon)$ is only dependent on $\epsilon$, to have low bias we need low covariance between data and privacy demand and high mean value for $\alpha(x,\epsilon)$.
The asymptotic bias is $0$ in case $X$ and $\epsilon$ are independent.
\end{remark}

\paragraph{\textbf{Linear Regression:}} 
We show that one can design universal AHDP mechanisms for linear regression based on similar ideas as that of \cref{prop:newDP}.
The method can be extended to general empirical risk minimization.

In the non-private regime, given a dataset $D = \{(x_i,y_i)\}_{i=1}^n$ drawn from some distribution i.i.d., the goal is to find 
\begin{equation}
    h^* = \arg\min_{h \in \cH} \frac{1}{n} \sum_{i=1}^n l(h(x_i),y_i),
\end{equation}
where $\cH$ is the function class considered and $l$ is a loss function.
The expected sub-optimality on the obtained $h^*$ can then be bound using statistical learning theory based on the function class $\cH$.
In the private setting, each input tuple $(x_i,y_i)$ is modeled as an user data and the output $h^*$ needs to be private with respect to each user's data.

Concretely, from here on, we shall denote the domain of user data by $\cZ = \cX \times \cY$ instead of the prevailing notation $\cX$ to separate the covariates and the target. 
Focusing on linear regression, assume $\cX = [-1,1]^d$ and $\cY = [-1,1]$, i.e., the covariates and the targets are in the $l_{\infty}$ ball of radius $1$ around the origin.
The hypothesis class $\cH$ is the set of all linear maps, i.e., we write $h_{\theta}(x) = \theta^Tx$ and $\theta \in \bbR^{d}$. 
The loss function is given by $l(\theta^Tx,y) = |\theta^Tx - y|^2$.
Each user also specifies a privacy demand so each user's input is the tuple $(z,\epsilon)$, where $z = (x,y)$.
Let $\cW$ denote the possible correlations, i.e., $(z,\epsilon) \in \cW \subseteq \cZ \times \bbR_{\geq 0}$. 

Let $M$ be a mechanism that outputs a regression coefficient $\hat{\theta}$. 
For the $\cW$-AHDP requirement, we need $\left|\frac{\Pr\{M(D) \in S\} }{\Pr\{M(D') \in S\}} \right| \leq d_{\alpha}(D,D')$ and $\alpha(z,\epsilon) \leq \epsilon$.
We present a mechanism based on the Functional Mechanism presented in \citet{Zhang12functional} in Mechanism~\ref{alg:LR}.
The mechanism arbitrarily orders the data and constructs the standard covariate matrix and target vector. 
The ordering does not matter in the operations considered.
$\mathsf{Diag}(\vecw)$ denotes to the diagonal matrix constructed with the vector $\vecw$.

In a nutshell, instead of optimizing $\sum_i |x_i^T \theta - y_i|^2$, we aim to optimize $\sum_{i} \alpha(z_i,\epsilon_i) |x_i^T \theta - y_i|^2$, which allows us to down-weigh the users demanding high privacy.
In particular, 
\begin{equation}
    \sum_{i} \alpha(z_i,\epsilon_i) |x_i^T \theta - y_i|^2 = (X\bm\theta - \vecy)^T\mathsf{Diag}(\vecw) (X\bm\theta - \vecy).
\end{equation}
The optimal regression coefficient can be calculated in a straightforward manner to be $\bm\theta^* = (X^T \mathsf{Diag}(\vecw) X)^{-1} X^T \mathsf{Diag}(\vecw) \vecy$.
However, we need to add some Laplace noise to ensure privacy.
The proof of privacy for Mechanism~\ref{alg:LR} can be found in \cref{sec:LRP}.
In \cref{sec:exp}, we present the performance of Mechanism~\ref{alg:LR} for three choices of $\alpha(z,\epsilon)$.

\floatname{algorithm}{Mechanism}
\begin{algorithm}
   \caption{$\alpha$-AHDP Linear Regression}
   \label{alg:LR}
\begin{algorithmic}
   \STATE {\bfseries Input:} $\alpha: \cW \to \bbR_{\geq 0}$, $D = \mset{(z_i,\epsilon_i)}_{i=1}^n$ with $z_i = (x_i,y_i)$.
   \vspace{2pt} \STATE Arbitrarily order the dataset and construct
   \STATE $X \in [-1,1]^{n \times d}$, $\vecy \in [-1,1]^{n \times 1}$, $\vecw \in \bbR_{\geq 0}^{n \times 1}$ with $i$-th row as $x_i$, $y_i$, $\alpha(z_i,\epsilon_i)$ respectively.
   \vspace{2pt} \STATE  Sample $N_A \in \bbR^{d \times d}$, $N_A(i,j) \sim \cL(d^2 + d) \ \forall i,j$
   \vspace{2pt} \STATE Sample $N_b \in \bbR^{d \times 1}$, $N_b(i) \sim \cL(d^2 + d) \ \forall i$
   \vspace{2pt}\STATE $A \gets X^T \mathsf{Diag}(\vecw) X + N_A$
   \vspace{2pt}\STATE $b \gets X^T \mathsf{Diag}(\vecw) \vecy + N_b$
   \vspace{2pt}\STATE Return $A^{-1}b$
\end{algorithmic}
\end{algorithm}

\section{Experiments} \label{sec:exp}

We consider three experiments in this section -  mean estimation, relative frequency estimation, and linear regression.
These are arguably the three most fundamental tasks in the real-world.
The purpose of these experiments is to 
\begin{enumerate}
      \item understand the behavior of the universal AHDP mechanisms presented in \cref{sec:algo}, and 
    \item evaluate how much realistic datasets deviate from the worst-case perspective of \cref{prop:Bias}.
\end{enumerate}

For all the experiments, we consider three linear-query based mechanisms as described in \cref{sec:LQ} with
\begin{enumerate}
    \item $\alpha(x,\epsilon) \propto \epsilon$,
    \item $\alpha(x,\epsilon) \propto 1-e^{-\epsilon}$, and
    \item $\alpha(x,\epsilon) \propto \frac{\epsilon}{1+\epsilon}$.
\end{enumerate}
For mean and frequency estimation, since we use composition, we divide the $\alpha$ budget in half for the numerator and denominator.
For linear regression, there is no such consideration.
In addition to these three mechanisms, we also consider the Sampling Mechanism of \cref{sec:Jorg} using $\alpha(x,\epsilon) = \epsilon$ and two different threshold values.
Due to lack of a publicly available dataset on this topic, we synthetically generate the datasets from a realistic distribution with the help of a Large Language Model (LLM); we used OpenAI's GPT-4o \cite{hurst2024gpt} for the experiments. \\

In the first experiment, we query the LLM to generate tuples of weight (in kg) and privacy demand several times independently (see \cref{sec:exp-det} for details).
The LLM generated dataset naturally exhibits a high degree of correlation between the weights and the privacy demand, with a correlation coefficient of $ -0.84$.
The dataset is visualized in \cref{fig:corr}.
We set $l = 0$ and $h=150$.
The LLM was instructed to keep the privacy demand $\epsilon \in [0,3]$ so we try sampling mechanism with $t=2$ and $t=0.5$.
The dataset consists of about $2200$ samples.
For a given size $n$, we randomly sub-sample $n$ points from the dataset and run the five mechanisms on the sub-sampled dataset over $5000$ trials.
The mean-squared error with respect to the true mean of the sub-sampled dataset is plotted as $n$ is increased in \cref{fig:MSE}.
The Sampling Mechanism with $t=0.5$ outperforms other methods with vanishing error at large sample sizes while the other methods seem to incur a constant error even at large sample sizes.
The linear query-based mechanism with $\alpha(x,\epsilon) = \frac{\epsilon}{2}$ performs worse than the other two linear query-based alternatives. \\

In the second experiment, we query the LLM to generate tuples of education level and privacy demand, where the privacy demand is restricted to a list of five levels (see \cref{sec:exp-det} for details).
The dataset consists of $3000$ such tuples and the chi-squared test for independence yields a p-value of $0$, i.e., below the floating point precision.
A low p-value in the chi-squared test of a contingency table indicates that it is quite unlikely that privacy demand and education levels are independent \cite{mchugh2013chi}.
The dataset is visualized in \cref{fig:hist}.
We obtain the relative frequencies for the education level from the above-mentioned three linear query-based methods and the sampling mechanism with $t=0.1$ and $t=1$. 
The mechanism is obtained by independently estimating each education level by dividing the corresponding count estimate with the total number estimate; see \cref{sec:exp-det} for details.
For a given sample size $n$, we randomly sub-sample a size $n$ dataset and estimate the relative frequencies of the six categories and we record the corresponding $l_{\infty}$ for the five methods. 
Over 5000 trials, we plot the empirical mean $l_{\infty}$ error in \cref{fig:li}.
The Sampling Mechanism with $t=0.1$ outperforms the other methods but does not attain vanishing error unlike the mean estimation experiment; the other methods incur significant error. \\

For the third experiment, we use the California Housing dataset \cite{pace1997sparse} to obtain the covariates and target value for regression. 
There are eight covariates in the dataset.
We split the dataset into 18000 training samples and about 2000 test samples.
The size of the dataset makes it infeasible to query an LLM to get the corresponding privacy demand.
We sample privacy demand i.i.d. according to $\log \epsilon \sim \mathsf{Uniform}([-5,2])$.
Although the dataset generated using this process does not explicitly contain correlations between data and privacy, it is still useful in comparing the performance of the different universal AHDP mechanisms.
We normalize the dataset to satisfy the assumptions of linear regression presented in \cref{sec:algo}, i.e., we linearly rescale the covariates and targets to be in $[-1,1]$.
Mechanism~\ref{alg:LR} is used with the three proposed $\alpha$ above and for the sampling mechanism, the functional mechanism of \citet{Zhang12functional} is used for regression on the sub-sampled dataset.
\cref{fig:LR} shows the median residual test error as the dataset size is increases for the methods along with the non-private least squares solution.
For each sample size, we randomly sampled a subset of the training set and performed 200 trials.
Increasing the training set size generally leads to better performance as it compensates for the noise injected due to privacy constraints.
It remains unclear whether we can expect the error to reach the non-private level under a more realistic correlation between data and privacy.

\begin{figure}
    \centering
    \includegraphics[width=0.45\textwidth]{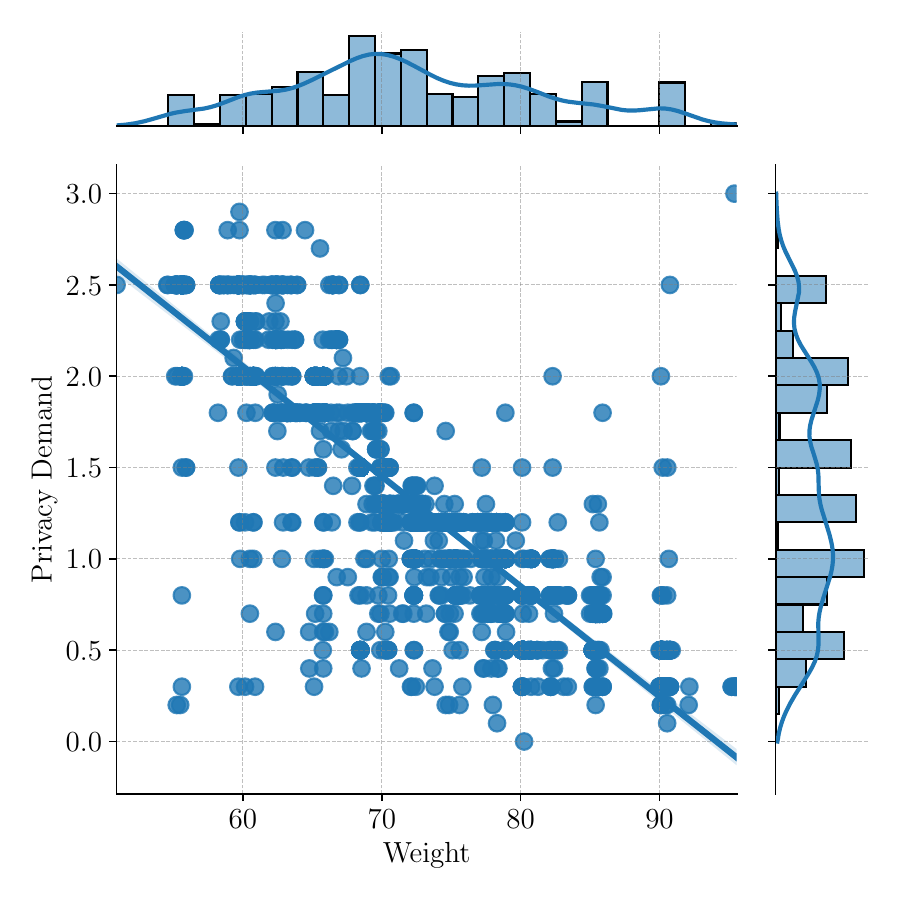}
    \caption{Scatter plot of LLM-generated dataset of weight (in kg) and  privacy demand ($\epsilon$, higher is less privacy). The best fit regression line shows a negative correlation between weight and $\epsilon$.}
    \Description{A scatter plot of weight (in kg) and privacy demand $(\epsilon)$.}
    \label{fig:corr}
\end{figure}

\begin{figure}
    \centering
    \includegraphics[width=0.5\textwidth]{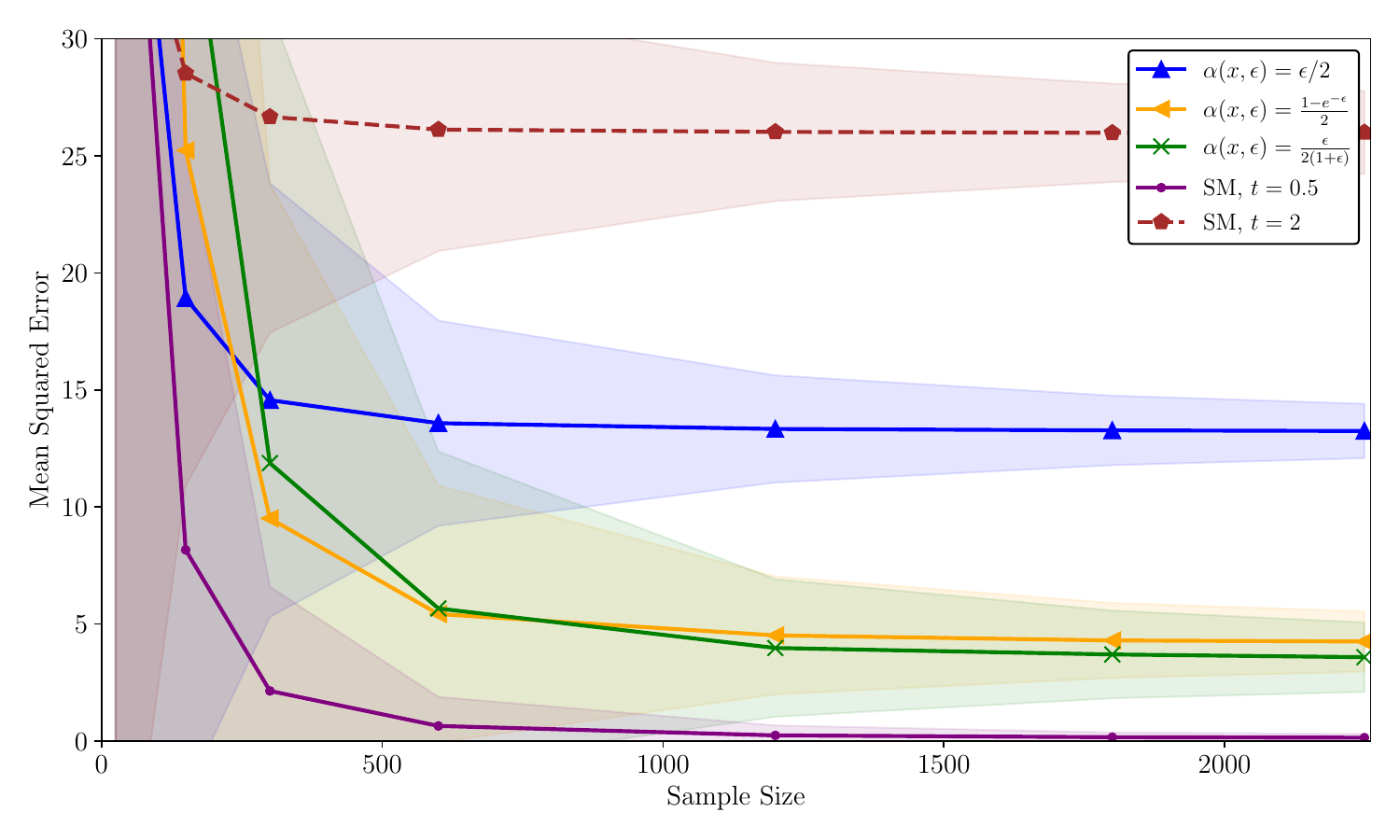}
    \caption{MSE vs number of samples for different methods for the task of mean estimation.}
    \Description{A plot showing decaying error for the five methods as number of samples is increased.}
    \label{fig:MSE}
\end{figure}

\begin{figure}
    \centering
    \includegraphics[width=0.45\textwidth]{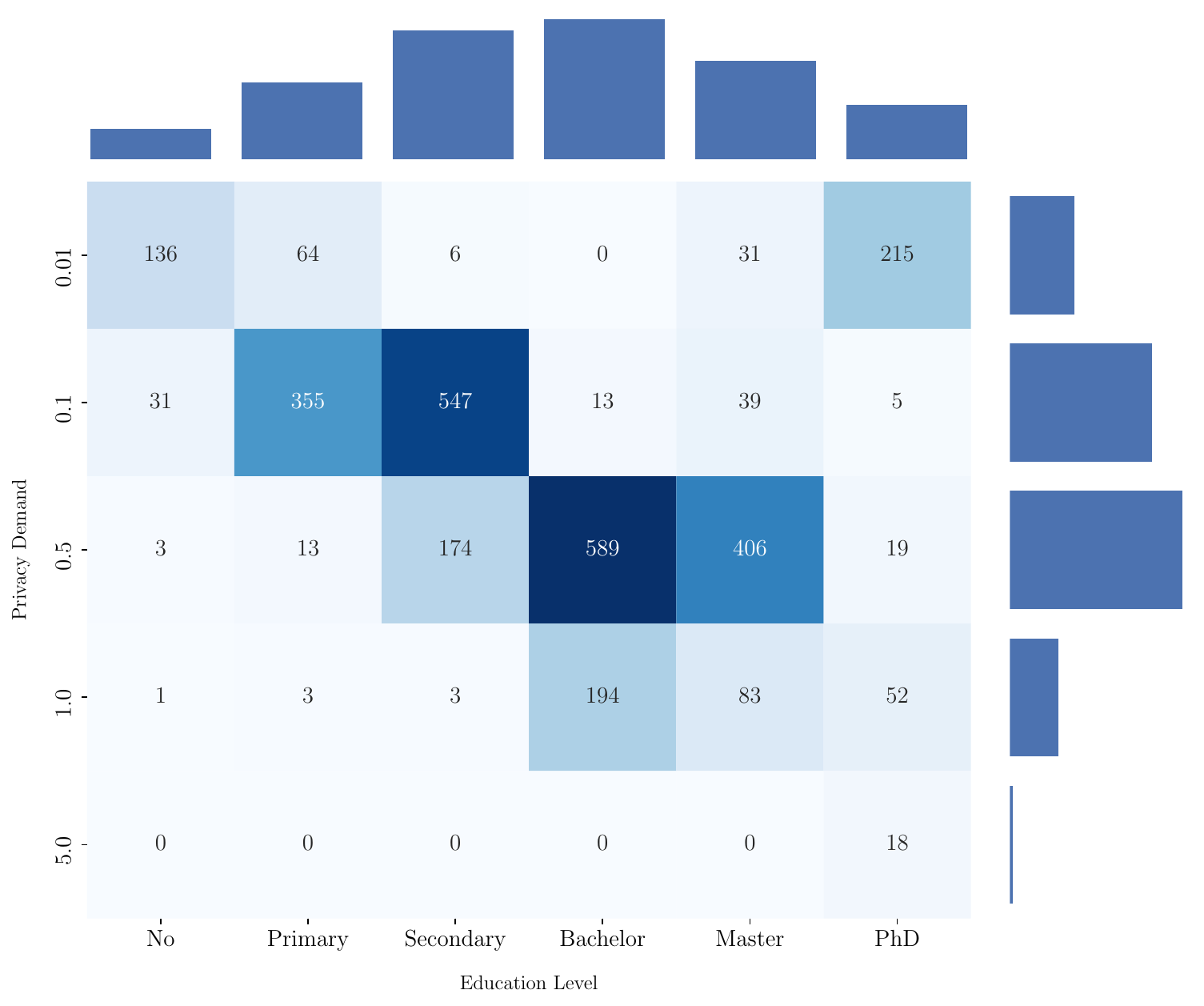}
    \caption{Contingency table of LLM-generated dataset of education level and  privacy demand ($\epsilon$, higher is less privacy).}
    \Description{The education levels are 'no', 'primary', 'secondary', 'bachelor', 'master', and 'Ph.D.'. The privacy levels are $0.01$, $0.1$, $0.5$, $1$, and $5$. }
    \label{fig:hist}
\end{figure}

\begin{figure}
    \centering
    \includegraphics[width=0.5\textwidth]{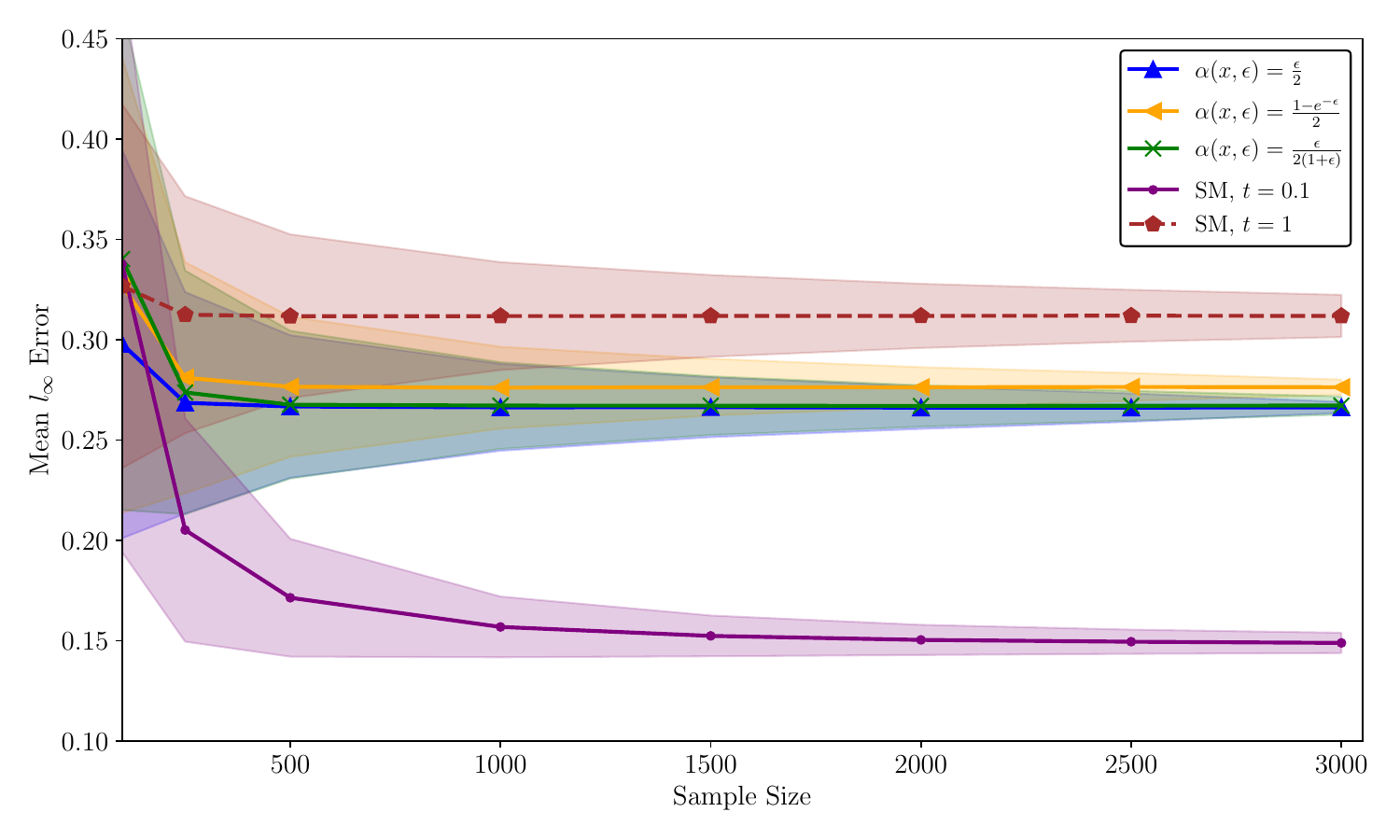}
    \caption{Mean $l_{\infty}$ error of the five methods as the sample size is increased for histogram estimation.}
    \Description{SM with $t=0.1$ has lower $l_{\infty}$ error than the other methods.}
    \label{fig:li}
\end{figure}

\begin{figure}
    \centering
    \includegraphics[width=0.5\textwidth]{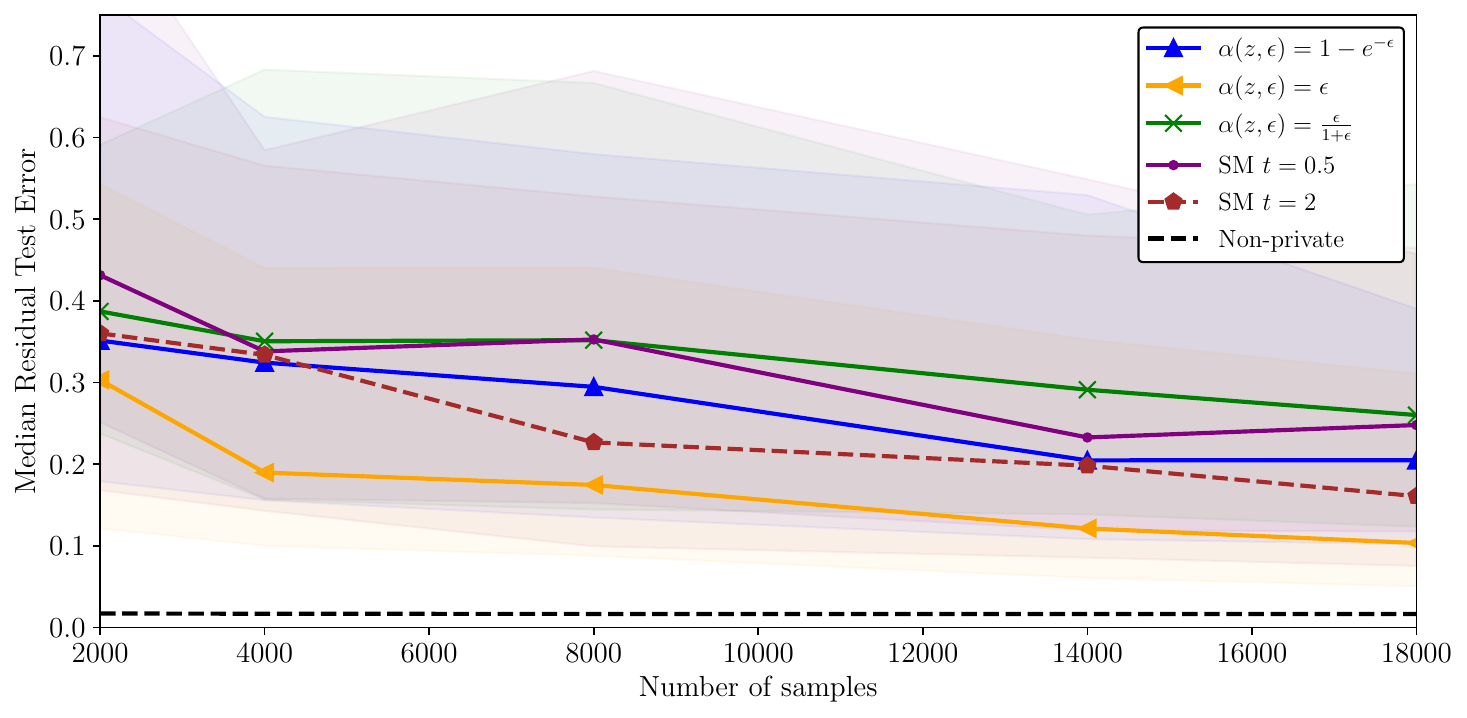}
    \caption{Median residual test error for the five methods as number of training samples is increased in the task of linear regression.}
    \Description{Linear Query based method with $\alpha(x,\epsilon) = \epsilon/2$ has lower error than the other methods.}
    \label{fig:LR}
\end{figure}
\section{Conclusion and Future Work} \label{sec:con}

This work investigates Heterogeneous Differential Privacy (HDP) under potential correlations between user data and user privacy demand—an aspect largely overlooked in the literature. 
We demonstrate that standard HDP definitions, when used under such correlations, can result in mechanisms that fail to provide meaningful privacy guarantees.
We address these issues by proposing Add-remove Heterogeneous Differential Privacy (AHDP)—a privacy notion based on an add-remove model of neighboring datasets that jointly considers the domain of user data and user privacy demand.

We present a general hypothesis testing framework to analyze the privacy guarantees of different privacy definitions. 
We provide privacy guarantees for AHDP, along with guarantees for standard homogeneous DP, in this hypothesis testing framework.
We further demonstrate how universal AHDP mechanisms—those that do not assume knowledge of data-privacy correlations—can be constructed for core tasks such as mean estimation, frequency estimation, and linear regression. These mechanisms, being correlation-agnostic, are practically attractive and easy to implement.

While universal AHDP mechanisms are robust to privacy-data correlations, they can suffer from poor performance in the worst-case due to inherent bias in the estimates.
Such trade-offs may motivate a broader philosophical debate on whether it is better to personalize privacy  or offer a uniform privacy level to all users.
However, it is important to note that even standard homogeneous DP may suffer from implicit selection bias -- more privacy-demanding users may not engage with the service provider.
Thus, in this light, homogeneous DP with add-remove neighbors can be seen as a universal-AHDP mechanism that provides $\epsilon$-privacy only to participating users and nothing to those who self-exclude due to privacy concerns.

There are several avenues of future work, such as relaxing AHDP similar to $(\epsilon,\delta)$-DP, considering tasks like statistical learning under AHDP, and so on.
Another line of future work could be to empirically evaluating how conventional HDP mechanisms fail under correlations—particularly via membership inference attacks on real-world datasets.

\begin{acks}
S.C. acknowledges  the support of AI Policy Hub, UC Berkeley.
\end{acks}

\bibliographystyle{ACM-Reference-Format}
\bibliography{refs}

%%% -*-BibTeX-*-
%%% Do NOT edit. File created by BibTeX with style
%%% ACM-Reference-Format-Journals [18-Jan-2012].

\begin{thebibliography}{59}

%%% ====================================================================
%%% NOTE TO THE USER: you can override these defaults by providing
%%% customized versions of any of these macros before the \bibliography
%%% command.  Each of them MUST provide its own final punctuation,
%%% except for \shownote{} and \showURL{}.  The latter two
%%% do not use final punctuation, in order to avoid confusing it with
%%% the Web address.
%%%
%%% To suppress output of a particular field, define its macro to expand
%%% to an empty string, or better, \unskip, like this:
%%%
%%% \newcommand{\showURL}[1]{\unskip}   % LaTeX syntax
%%%
%%% \def \showURL #1{\unskip}           % plain TeX syntax
%%%
%%% ====================================================================

\ifx \showCODEN    \undefined \def \showCODEN     #1{\unskip}     \fi
\ifx \showISBNx    \undefined \def \showISBNx     #1{\unskip}     \fi
\ifx \showISBNxiii \undefined \def \showISBNxiii  #1{\unskip}     \fi
\ifx \showISSN     \undefined \def \showISSN      #1{\unskip}     \fi
\ifx \showLCCN     \undefined \def \showLCCN      #1{\unskip}     \fi
\ifx \shownote     \undefined \def \shownote      #1{#1}          \fi
\ifx \showarticletitle \undefined \def \showarticletitle #1{#1}   \fi
\ifx \showURL      \undefined \def \showURL       {\relax}        \fi
% The following commands are used for tagged output and should be
% invisible to TeX
\providecommand\bibfield[2]{#2}
\providecommand\bibinfo[2]{#2}
\providecommand\natexlab[1]{#1}
\providecommand\showeprint[2][]{arXiv:#2}

\bibitem[Aaltonen et~al\mbox{.}(2021)]%
        {Data21}
\bibfield{author}{\bibinfo{person}{Aleksi Aaltonen}, \bibinfo{person}{Cristina
  Alaimo}, {and} \bibinfo{person}{Jannis Kallinikos}.}
  \bibinfo{year}{2021}\natexlab{}.
\newblock \showarticletitle{The Making of Data Commodities: Data Analytics as
  an Embedded Process}.
\newblock \bibinfo{journal}{\emph{Journal of Management Information Systems}}
  (\bibinfo{year}{2021}).
\newblock


\bibitem[Abowd(2018)]%
        {Abowd18}
\bibfield{author}{\bibinfo{person}{John~M. Abowd}.}
  \bibinfo{year}{2018}\natexlab{}.
\newblock \showarticletitle{The U.S. Census Bureau Adopts Differential
  Privacy}. In \bibinfo{booktitle}{\emph{Proceedings of the 24th ACM SIGKDD
  International Conference on Knowledge Discovery \& Data Mining}}.
  \bibinfo{publisher}{Association for Computing Machinery}.
\newblock


\bibitem[Acharya et~al\mbox{.}(2024)]%
        {Acharya24}
\bibfield{author}{\bibinfo{person}{Krishna Acharya}, \bibinfo{person}{Franziska
  Boenisch}, \bibinfo{person}{Rakshit Naidu}, {and} \bibinfo{person}{Juba
  Ziani}.} \bibinfo{year}{2024}\natexlab{}.
\newblock \showarticletitle{Personalized Differential Privacy for Ridge
  Regression}.
\newblock \bibinfo{journal}{\emph{arXiv preprint arXiv:2401.17127}}
  (\bibinfo{year}{2024}).
\newblock


\bibitem[Ackerman et~al\mbox{.}(1999)]%
        {Ackerman99}
\bibfield{author}{\bibinfo{person}{Mark~S. Ackerman},
  \bibinfo{person}{Lorrie~Faith Cranor}, {and} \bibinfo{person}{Joseph
  Reagle}.} \bibinfo{year}{1999}\natexlab{}.
\newblock \showarticletitle{Privacy in e-commerce: examining user scenarios and
  privacy preferences}. In \bibinfo{booktitle}{\emph{Proceedings of the 1st ACM
  Conference on Electronic Commerce}} (Denver, Colorado, USA)
  \emph{(\bibinfo{series}{EC '99})}. \bibinfo{publisher}{Association for
  Computing Machinery}, \bibinfo{pages}{1–8}.
\newblock


\bibitem[Acquisti and Grossklags(2005)]%
        {Acquisti2005privacy}
\bibfield{author}{\bibinfo{person}{Alessandro Acquisti} {and}
  \bibinfo{person}{Jens Grossklags}.} \bibinfo{year}{2005}\natexlab{}.
\newblock \showarticletitle{Privacy and rationality in individual decision
  making}.
\newblock \bibinfo{journal}{\emph{IEEE security \& privacy}}
  \bibinfo{volume}{3}, \bibinfo{number}{1} (\bibinfo{year}{2005}),
  \bibinfo{pages}{26--33}.
\newblock


\bibitem[Alaggan et~al\mbox{.}(2017)]%
        {Alaggan17}
\bibfield{author}{\bibinfo{person}{Mohammad Alaggan},
  \bibinfo{person}{Sébastien Gambs}, {and} \bibinfo{person}{Anne-Marie
  Kermarrec}.} \bibinfo{year}{2017}\natexlab{}.
\newblock \showarticletitle{Heterogeneous Differential Privacy}.
\newblock \bibinfo{journal}{\emph{Journal of Privacy and Confidentiality}}
  (\bibinfo{year}{2017}).
\newblock


\bibitem[Alaimo and Kallinikos(2017)]%
        {Data17}
\bibfield{author}{\bibinfo{person}{Cristina Alaimo} {and}
  \bibinfo{person}{Jannis Kallinikos}.} \bibinfo{year}{2017}\natexlab{}.
\newblock \showarticletitle{Computing the everyday: Social media as data
  platforms}.
\newblock \bibinfo{journal}{\emph{The Information Society}}
  (\bibinfo{year}{2017}).
\newblock


\bibitem[Aliakbarpour et~al\mbox{.}(2025)]%
        {aliakbarpourenhancing}
\bibfield{author}{\bibinfo{person}{Maryam Aliakbarpour},
  \bibinfo{person}{Syomantak Chaudhuri}, \bibinfo{person}{Thomas Courtade},
  \bibinfo{person}{Alireza Fallah}, {and} \bibinfo{person}{Michael Jordan}.}
  \bibinfo{year}{2025}\natexlab{}.
\newblock \showarticletitle{Enhancing Feature-Specific Data Protection via
  Bayesian Coordinate Differential Privacy}. In \bibinfo{booktitle}{\emph{The
  28th International Conference on Artificial Intelligence and Statistics}}.
\newblock


\bibitem[Alon et~al\mbox{.}(2019)]%
        {Alon19}
\bibfield{author}{\bibinfo{person}{Noga Alon}, \bibinfo{person}{Raef Bassily},
  {and} \bibinfo{person}{Shay Moran}.} \bibinfo{year}{2019}\natexlab{}.
\newblock \showarticletitle{Limits of private learning with access to public
  data}.
\newblock \bibinfo{journal}{\emph{Advances in neural information processing
  systems}} (\bibinfo{year}{2019}).
\newblock


\bibitem[Alvim et~al\mbox{.}(2012)]%
        {alvim2012differential}
\bibfield{author}{\bibinfo{person}{M{\'a}rio~S Alvim},
  \bibinfo{person}{Miguel~E Andr{\'e}s}, \bibinfo{person}{Konstantinos
  Chatzikokolakis}, \bibinfo{person}{Pierpaolo Degano}, {and}
  \bibinfo{person}{Catuscia Palamidessi}.} \bibinfo{year}{2012}\natexlab{}.
\newblock \showarticletitle{Differential privacy: on the trade-off between
  utility and information leakage}. In \bibinfo{booktitle}{\emph{Formal Aspects
  of Security and Trust: 8th International Workshop, FAST 2011, Leuven,
  Belgium, September 12-14, 2011. Revised Selected Papers 8}}. Springer,
  \bibinfo{pages}{39--54}.
\newblock


\bibitem[Amid et~al\mbox{.}(2022)]%
        {Amid22}
\bibfield{author}{\bibinfo{person}{Ehsan Amid}, \bibinfo{person}{Arun Ganesh},
  \bibinfo{person}{Rajiv Mathews}, \bibinfo{person}{Swaroop Ramaswamy},
  \bibinfo{person}{Shuang Song}, \bibinfo{person}{Thomas Steinke},
  \bibinfo{person}{Vinith~M Suriyakumar}, \bibinfo{person}{Om Thakkar}, {and}
  \bibinfo{person}{Abhradeep Thakurta}.} \bibinfo{year}{2022}\natexlab{}.
\newblock \showarticletitle{Public data-assisted mirror descent for private
  model training}. In \bibinfo{booktitle}{\emph{International Conference on
  Machine Learning}}. PMLR.
\newblock


\bibitem[Anjarlekar et~al\mbox{.}(2023)]%
        {Anjarlekar23}
\bibfield{author}{\bibinfo{person}{Ameya Anjarlekar}, \bibinfo{person}{Rasoul
  Etesami}, {and} \bibinfo{person}{R Srikant}.}
  \bibinfo{year}{2023}\natexlab{}.
\newblock \showarticletitle{Striking a Balance: An Optimal Mechanism Design for
  Heterogenous Differentially Private Data Acquisition for Logistic
  Regression}.
\newblock \bibinfo{journal}{\emph{arXiv preprint arXiv:2309.10340}}
  (\bibinfo{year}{2023}).
\newblock


\bibitem[Apple(2017)]%
        {2017LearningWP}
\bibfield{author}{\bibinfo{person}{Apple}.} \bibinfo{year}{2017}\natexlab{}.
\newblock \showarticletitle{Learning with Privacy at Scale Differential}.
\newblock
\urldef\tempurl%
\url{https://api.semanticscholar.org/CorpusID:43986173}
\showURL{%
\tempurl}


\bibitem[Asoodeh et~al\mbox{.}(2014)]%
        {asoodeh2014notes}
\bibfield{author}{\bibinfo{person}{Shahab Asoodeh}, \bibinfo{person}{Fady
  Alajaji}, {and} \bibinfo{person}{Tam{\'a}s Linder}.}
  \bibinfo{year}{2014}\natexlab{}.
\newblock \showarticletitle{Notes on information-theoretic privacy}. In
  \bibinfo{booktitle}{\emph{2014 52nd Annual Allerton Conference on
  Communication, Control, and Computing (Allerton)}}. IEEE,
  \bibinfo{pages}{1272--1278}.
\newblock


\bibitem[Bassily et~al\mbox{.}(2020a)]%
        {Bassily20}
\bibfield{author}{\bibinfo{person}{Raef Bassily}, \bibinfo{person}{Albert
  Cheu}, \bibinfo{person}{Shay Moran}, \bibinfo{person}{Aleksandar Nikolov},
  \bibinfo{person}{Jonathan Ullman}, {and} \bibinfo{person}{Steven Wu}.}
  \bibinfo{year}{2020}\natexlab{a}.
\newblock \showarticletitle{Private query release assisted by public data}. In
  \bibinfo{booktitle}{\emph{International Conference on Machine Learning}}.
  PMLR.
\newblock


\bibitem[Bassily et~al\mbox{.}(2020b)]%
        {Bassily20-learn}
\bibfield{author}{\bibinfo{person}{Raef Bassily}, \bibinfo{person}{Shay Moran},
  {and} \bibinfo{person}{Anupama Nandi}.} \bibinfo{year}{2020}\natexlab{b}.
\newblock \showarticletitle{Learning from mixtures of private and public
  populations}.
\newblock \bibinfo{journal}{\emph{Advances in Neural Information Processing
  Systems}} (\bibinfo{year}{2020}).
\newblock


\bibitem[Bie et~al\mbox{.}(2022)]%
        {Kamath22}
\bibfield{author}{\bibinfo{person}{Alex Bie}, \bibinfo{person}{Gautam Kamath},
  {and} \bibinfo{person}{Vikrant Singhal}.} \bibinfo{year}{2022}\natexlab{}.
\newblock \showarticletitle{Private Estimation with Public Data}. In
  \bibinfo{booktitle}{\emph{Advances in Neural Information Processing
  Systems}}.
\newblock


\bibitem[CA(2018)]%
        {CCPA}
\bibfield{author}{\bibinfo{person}{CA}.} \bibinfo{year}{2018}\natexlab{}.
\newblock \showarticletitle{California {C}onsumer {P}rivacy {A}ct ({CCPA})}.
\newblock \bibinfo{journal}{\emph{Office of the Attorney General, California
  Department of Justice}} (\bibinfo{year}{2018}).
\newblock


\bibitem[Chaudhuri and Courtade(2023)]%
        {isit-paper}
\bibfield{author}{\bibinfo{person}{Syomantak Chaudhuri} {and}
  \bibinfo{person}{Thomas Courtade}.} \bibinfo{year}{2023}\natexlab{}.
\newblock \showarticletitle{Mean Estimation Under Heterogeneous Privacy: Some
  Privacy Can Be Free}. In \bibinfo{booktitle}{\emph{2023 IEEE International
  Symposium on Information Theory (ISIT)}}.
\newblock


\bibitem[Chaudhuri and Courtade(2024)]%
        {Chaudhuri24}
\bibfield{author}{\bibinfo{person}{Syomantak Chaudhuri} {and}
  \bibinfo{person}{Thomas~A. Courtade}.} \bibinfo{year}{2024}\natexlab{}.
\newblock \bibinfo{title}{Empirical Mean and Frequency Estimation Under
  Heterogeneous Privacy: A Worst-Case Analysis}.
\newblock
\showeprint[arxiv]{2407.11274}~[cs.LG]
\urldef\tempurl%
\url{https://arxiv.org/abs/2407.11274}
\showURL{%
\tempurl}


\bibitem[Chaudhuri et~al\mbox{.}(2025)]%
        {journal23}
\bibfield{author}{\bibinfo{person}{Syomantak Chaudhuri},
  \bibinfo{person}{Konstantin Miagkov}, {and} \bibinfo{person}{Thomas~A.
  Courtade}.} \bibinfo{year}{2025}\natexlab{}.
\newblock \showarticletitle{Mean Estimation Under Heterogeneous Privacy
  Demands}.
\newblock \bibinfo{journal}{\emph{IEEE Transactions on Information Theory}}
  \bibinfo{volume}{71}, \bibinfo{number}{2} (\bibinfo{year}{2025}),
  \bibinfo{pages}{1362--1375}.
\newblock
\href{https://doi.org/10.1109/TIT.2024.3511498}{doi:\nolinkurl{10.1109/TIT.2024.3511498}}


\bibitem[Chong and Malip(2024)]%
        {chong2024may}
\bibfield{author}{\bibinfo{person}{Kah~Meng Chong} {and}
  \bibinfo{person}{Amizah Malip}.} \bibinfo{year}{2024}\natexlab{}.
\newblock \showarticletitle{May the privacy be with us: Correlated differential
  privacy in location data for ITS}.
\newblock \bibinfo{journal}{\emph{Computer networks}}  \bibinfo{volume}{241}
  (\bibinfo{year}{2024}), \bibinfo{pages}{110214}.
\newblock


\bibitem[Cummings et~al\mbox{.}(2023)]%
        {Cummings23}
\bibfield{author}{\bibinfo{person}{Rachel Cummings}, \bibinfo{person}{Hadi
  Elzayn}, \bibinfo{person}{Emmanouil Pountourakis}, \bibinfo{person}{Vasilis
  Gkatzelis}, {and} \bibinfo{person}{Juba Ziani}.}
  \bibinfo{year}{2023}\natexlab{}.
\newblock \showarticletitle{Optimal data acquisition with privacy-aware
  agents}. In \bibinfo{booktitle}{\emph{2023 IEEE Conference on Secure and
  Trustworthy Machine Learning (SaTML)}}. IEEE, \bibinfo{pages}{210--224}.
\newblock


\bibitem[da~F.~Costa(2021)]%
        {Costa21}
\bibfield{author}{\bibinfo{person}{Luciano da F.~Costa}.}
  \bibinfo{year}{2021}\natexlab{}.
\newblock \bibinfo{title}{An Introduction to Multisets}.
\newblock
\showeprint[arxiv]{2110.12902}~[math.GM]
\urldef\tempurl%
\url{https://arxiv.org/abs/2110.12902}
\showURL{%
\tempurl}


\bibitem[Desfontaines(2020a)]%
        {Desfontaines2020RealWorldDP}
\bibfield{author}{\bibinfo{person}{Damien Desfontaines}.}
  \bibinfo{year}{2020}\natexlab{a}.
\newblock \bibinfo{title}{A list of real-world uses of differential privacy}.
\newblock
  \bibinfo{howpublished}{\url{https://desfontain.es/blog/real-world-differential-privacy.html}}.
\newblock
\newblock
\shownote{Accessed: 2024-05-22}.


\bibitem[Desfontaines(2020b)]%
        {desfontainesblog20200306}
\bibfield{author}{\bibinfo{person}{Damien Desfontaines}.}
  \bibinfo{year}{2020}\natexlab{b}.
\newblock \bibinfo{title}{The privacy loss random variable}.
\newblock
  \bibinfo{howpublished}{\url{https://desfontain.es/blog/privacy-loss-random-variable.html}}.
\newblock
\newblock
\shownote{Ted is writing things (personal blog)}.


\bibitem[Desfontaines and Pej{\'o}(2019)]%
        {desfontaines2019sok}
\bibfield{author}{\bibinfo{person}{Damien Desfontaines} {and}
  \bibinfo{person}{Bal{\'a}zs Pej{\'o}}.} \bibinfo{year}{2019}\natexlab{}.
\newblock \showarticletitle{Sok: differential privacies}.
\newblock \bibinfo{journal}{\emph{arXiv preprint arXiv:1906.01337}}
  (\bibinfo{year}{2019}).
\newblock


\bibitem[Dong et~al\mbox{.}(2022)]%
        {dong2022gaussian}
\bibfield{author}{\bibinfo{person}{Jinshuo Dong}, \bibinfo{person}{Aaron Roth},
  {and} \bibinfo{person}{Weijie~J Su}.} \bibinfo{year}{2022}\natexlab{}.
\newblock \showarticletitle{Gaussian differential privacy}.
\newblock \bibinfo{journal}{\emph{Journal of the Royal Statistical Society:
  Series B (Statistical Methodology)}} \bibinfo{volume}{84},
  \bibinfo{number}{1} (\bibinfo{year}{2022}), \bibinfo{pages}{3--37}.
\newblock


\bibitem[Dwork et~al\mbox{.}(2006a)]%
        {DW06}
\bibfield{author}{\bibinfo{person}{Cynthia Dwork}, \bibinfo{person}{Krishnaram
  Kenthapadi}, \bibinfo{person}{Frank McSherry}, \bibinfo{person}{Ilya
  Mironov}, {and} \bibinfo{person}{Moni Naor}.}
  \bibinfo{year}{2006}\natexlab{a}.
\newblock \showarticletitle{Our data, ourselves: Privacy via distributed noise
  generation}. In \bibinfo{booktitle}{\emph{Annual international conference on
  the theory and applications of cryptographic techniques}}. Springer.
\newblock


\bibitem[Dwork et~al\mbox{.}(2006b)]%
        {Dwork06}
\bibfield{author}{\bibinfo{person}{Cynthia Dwork}, \bibinfo{person}{Frank
  McSherry}, \bibinfo{person}{Kobbi Nissim}, {and} \bibinfo{person}{Adam
  Smith}.} \bibinfo{year}{2006}\natexlab{b}.
\newblock \showarticletitle{Calibrating noise to sensitivity in private data
  analysis}. In \bibinfo{booktitle}{\emph{Theory of cryptography conference}}.
  Springer.
\newblock


\bibitem[Dwork et~al\mbox{.}(2014)]%
        {Dwork14Alg}
\bibfield{author}{\bibinfo{person}{Cynthia Dwork}, \bibinfo{person}{Aaron
  Roth}, {et~al\mbox{.}}} \bibinfo{year}{2014}\natexlab{}.
\newblock \showarticletitle{The algorithmic foundations of differential
  privacy}.
\newblock \bibinfo{journal}{\emph{Foundations and Trends{\textregistered} in
  Theoretical Computer Science}} \bibinfo{volume}{9}, \bibinfo{number}{3--4}
  (\bibinfo{year}{2014}), \bibinfo{pages}{211--407}.
\newblock


\bibitem[Erlingsson et~al\mbox{.}(2019)]%
        {Erlingsson19}
\bibfield{author}{\bibinfo{person}{{\'U}lfar Erlingsson},
  \bibinfo{person}{Vitaly Feldman}, \bibinfo{person}{Ilya Mironov},
  \bibinfo{person}{Ananth Raghunathan}, \bibinfo{person}{Kunal Talwar}, {and}
  \bibinfo{person}{Abhradeep Thakurta}.} \bibinfo{year}{2019}\natexlab{}.
\newblock \showarticletitle{Amplification by shuffling: From local to central
  differential privacy via anonymity}. In \bibinfo{booktitle}{\emph{Proceedings
  of the Thirtieth Annual ACM-SIAM Symposium on Discrete Algorithms}}. SIAM,
  \bibinfo{pages}{2468--2479}.
\newblock


\bibitem[Erlingsson et~al\mbox{.}(2014)]%
        {Erlin14}
\bibfield{author}{\bibinfo{person}{Úlfar Erlingsson}, \bibinfo{person}{Vasyl
  Pihur}, {and} \bibinfo{person}{Aleksandra Korolova}.}
  \bibinfo{year}{2014}\natexlab{}.
\newblock \showarticletitle{RAPPOR: Randomized Aggregatable Privacy-Preserving
  Ordinal Response}. In \bibinfo{booktitle}{\emph{Proceedings of the 21st ACM
  Conference on Computer and Communications Security}}.
  \bibinfo{address}{Scottsdale, Arizona}.
\newblock


\bibitem[EU(2016)]%
        {GDPR}
\bibfield{author}{\bibinfo{person}{EU}.} \bibinfo{year}{2016}\natexlab{}.
\newblock \bibinfo{title}{Regulation 2016/679 of the {E}uropean {P}arliament
  and of the {C}ouncil of 27 April 2016 on the protection of natural persons
  with regard to the processing of personal data and on the free movement of
  such data, and repealing Directive 95/46/EC ({General Data Protection
  Regulation})}.
\newblock \bibinfo{numpages}{88}~pages.
\newblock


\bibitem[Fallah et~al\mbox{.}(2024)]%
        {Fallah24}
\bibfield{author}{\bibinfo{person}{Alireza Fallah}, \bibinfo{person}{Michael~I
  Jordan}, \bibinfo{person}{Ali Makhdoumi}, {and} \bibinfo{person}{Azarakhsh
  Malekian}.} \bibinfo{year}{2024}\natexlab{}.
\newblock \showarticletitle{On Three-Layer Data Markets}.
\newblock \bibinfo{journal}{\emph{arXiv preprint arXiv:2402.09697}}
  (\bibinfo{year}{2024}).
\newblock


\bibitem[Fallah et~al\mbox{.}(2022)]%
        {Asu22}
\bibfield{author}{\bibinfo{person}{Alireza Fallah}, \bibinfo{person}{Ali
  Makhdoumi}, \bibinfo{person}{Azarakhsh Malekian}, {and}
  \bibinfo{person}{Asuman Ozdaglar}.} \bibinfo{year}{2022}\natexlab{}.
\newblock \showarticletitle{Optimal and Differentially Private Data
  Acquisition: Central and Local Mechanisms}.
\newblock \bibinfo{journal}{\emph{arXiv preprint arXiv:2201.03968}}
  (\bibinfo{year}{2022}).
\newblock


\bibitem[Ferrando et~al\mbox{.}(2021)]%
        {Ferrando21}
\bibfield{author}{\bibinfo{person}{Cecilia Ferrando}, \bibinfo{person}{Jennifer
  Gillenwater}, {and} \bibinfo{person}{Alex Kulesza}.}
  \bibinfo{year}{2021}\natexlab{}.
\newblock \showarticletitle{Combining Public and Private Data}. In
  \bibinfo{booktitle}{\emph{NeurIPS 2021 Workshop Privacy in Machine
  Learning}}.
\newblock


\bibitem[Ghosh and Roth(2011)]%
        {Ghosh11}
\bibfield{author}{\bibinfo{person}{Arpita Ghosh} {and} \bibinfo{person}{Aaron
  Roth}.} \bibinfo{year}{2011}\natexlab{}.
\newblock \showarticletitle{Selling privacy at auction}. In
  \bibinfo{booktitle}{\emph{Proceedings of the 12th ACM conference on
  Electronic commerce}}. \bibinfo{pages}{199--208}.
\newblock


\bibitem[Hurst et~al\mbox{.}(2024)]%
        {hurst2024gpt}
\bibfield{author}{\bibinfo{person}{Aaron Hurst}, \bibinfo{person}{Adam Lerer},
  \bibinfo{person}{Adam~P Goucher}, \bibinfo{person}{Adam Perelman},
  \bibinfo{person}{Aditya Ramesh}, \bibinfo{person}{Aidan Clark},
  \bibinfo{person}{AJ Ostrow}, \bibinfo{person}{Akila Welihinda},
  \bibinfo{person}{Alan Hayes}, \bibinfo{person}{Alec Radford},
  {et~al\mbox{.}}} \bibinfo{year}{2024}\natexlab{}.
\newblock \showarticletitle{Gpt-4o system card}.
\newblock \bibinfo{journal}{\emph{arXiv preprint arXiv:2410.21276}}
  (\bibinfo{year}{2024}).
\newblock


\bibitem[Jorgensen et~al\mbox{.}(2015)]%
        {Jorg15}
\bibfield{author}{\bibinfo{person}{Zach Jorgensen}, \bibinfo{person}{Ting Yu},
  {and} \bibinfo{person}{Graham Cormode}.} \bibinfo{year}{2015}\natexlab{}.
\newblock \showarticletitle{Conservative or liberal? Personalized differential
  privacy}. In \bibinfo{booktitle}{\emph{2015 IEEE 31St international
  conference on data engineering}}.
\newblock


\bibitem[Kairouz et~al\mbox{.}(2021)]%
        {Kairouz21}
\bibfield{author}{\bibinfo{person}{Peter Kairouz},
  \bibinfo{person}{Monica~Ribero Diaz}, \bibinfo{person}{Keith Rush}, {and}
  \bibinfo{person}{Abhradeep Thakurta}.} \bibinfo{year}{2021}\natexlab{}.
\newblock \showarticletitle{({N}early) Dimension Independent Private ERM with
  AdaGrad Rates via Publicly Estimated Subspaces}. In
  \bibinfo{booktitle}{\emph{Conference on Learning Theory}}. PMLR.
\newblock


\bibitem[Kairouz et~al\mbox{.}(2015)]%
        {Kairouz15}
\bibfield{author}{\bibinfo{person}{Peter Kairouz}, \bibinfo{person}{Sewoong
  Oh}, {and} \bibinfo{person}{Pramod Viswanath}.}
  \bibinfo{year}{2015}\natexlab{}.
\newblock \showarticletitle{The Composition Theorem for Differential Privacy}.
  In \bibinfo{booktitle}{\emph{Proceedings of the 32nd International Conference
  on Machine Learning}} \emph{(\bibinfo{series}{Proceedings of Machine Learning
  Research}, Vol.~\bibinfo{volume}{37})},
  \bibfield{editor}{\bibinfo{person}{Francis Bach} {and} \bibinfo{person}{David
  Blei}} (Eds.). \bibinfo{publisher}{PMLR}, \bibinfo{address}{Lille, France},
  \bibinfo{pages}{1376--1385}.
\newblock


\bibitem[Kang et~al\mbox{.}(2023)]%
        {Kang23}
\bibfield{author}{\bibinfo{person}{Justin Kang}, \bibinfo{person}{Kannan
  Ramchandran}, {and} \bibinfo{person}{Ramtin Pedarsani}.}
  \bibinfo{year}{2023}\natexlab{}.
\newblock \showarticletitle{The Fair Value of Data Under Heterogeneous Privacy
  Constraints in Federated Learning}. In
  \bibinfo{booktitle}{\emph{International Workshop on Federated Learning in the
  Age of Foundation Models in Conjunction with NeurIPS 2023}}.
\newblock


\bibitem[Kasiviswanathan et~al\mbox{.}(2011)]%
        {Kasiviswanathan11}
\bibfield{author}{\bibinfo{person}{Shiva~Prasad Kasiviswanathan},
  \bibinfo{person}{Homin~K Lee}, \bibinfo{person}{Kobbi Nissim},
  \bibinfo{person}{Sofya Raskhodnikova}, {and} \bibinfo{person}{Adam Smith}.}
  \bibinfo{year}{2011}\natexlab{}.
\newblock \showarticletitle{What can we learn privately?}
\newblock \bibinfo{journal}{\emph{SIAM J. Comput.}} (\bibinfo{year}{2011}).
\newblock


\bibitem[Kulesza et~al\mbox{.}(2024)]%
        {Kulesza24}
\bibfield{author}{\bibinfo{person}{Alex Kulesza},
  \bibinfo{person}{Ananda~Theertha Suresh}, {and} \bibinfo{person}{Yuyan
  Wang}.} \bibinfo{year}{2024}\natexlab{}.
\newblock \bibinfo{title}{Mean estimation in the add-remove model of
  differential privacy}.
\newblock
\showeprint[arxiv]{2312.06658}~[cs.DS]
\urldef\tempurl%
\url{https://arxiv.org/abs/2312.06658}
\showURL{%
\tempurl}


\bibitem[Li et~al\mbox{.}(2012)]%
        {Li12}
\bibfield{author}{\bibinfo{person}{Hong~Ping Li}, \bibinfo{person}{Haibo Hu},
  {and} \bibinfo{person}{Jianliang Xu}.} \bibinfo{year}{2012}\natexlab{}.
\newblock \showarticletitle{Nearby friend alert: Location anonymity in mobile
  geosocial networks}.
\newblock \bibinfo{journal}{\emph{IEEE Pervasive Computing}}
  \bibinfo{volume}{12}, \bibinfo{number}{4} (\bibinfo{year}{2012}),
  \bibinfo{pages}{62--70}.
\newblock


\bibitem[Liu et~al\mbox{.}(2021)]%
        {Liu21}
\bibfield{author}{\bibinfo{person}{Terrance Liu}, \bibinfo{person}{Giuseppe
  Vietri}, \bibinfo{person}{Thomas Steinke}, \bibinfo{person}{Jonathan Ullman},
  {and} \bibinfo{person}{Steven Wu}.} \bibinfo{year}{2021}\natexlab{}.
\newblock \showarticletitle{Leveraging public data for practical private query
  release}. In \bibinfo{booktitle}{\emph{International Conference on Machine
  Learning}}. PMLR.
\newblock


\bibitem[McHugh(2013)]%
        {mchugh2013chi}
\bibfield{author}{\bibinfo{person}{Mary~L McHugh}.}
  \bibinfo{year}{2013}\natexlab{}.
\newblock \showarticletitle{The chi-square test of independence}.
\newblock \bibinfo{journal}{\emph{Biochemia medica}} \bibinfo{volume}{23},
  \bibinfo{number}{2} (\bibinfo{year}{2013}), \bibinfo{pages}{143--149}.
\newblock


\bibitem[McSherry and Talwar(2007)]%
        {McSherry07}
\bibfield{author}{\bibinfo{person}{Frank McSherry} {and} \bibinfo{person}{Kunal
  Talwar}.} \bibinfo{year}{2007}\natexlab{}.
\newblock \showarticletitle{Mechanism design via differential privacy}. In
  \bibinfo{booktitle}{\emph{48th Annual IEEE Symposium on Foundations of
  Computer Science (FOCS'07)}}. IEEE.
\newblock


\bibitem[Nandi and Bassily(2020)]%
        {Nandi20}
\bibfield{author}{\bibinfo{person}{Anupama Nandi} {and} \bibinfo{person}{Raef
  Bassily}.} \bibinfo{year}{2020}\natexlab{}.
\newblock \showarticletitle{Privately answering classification queries in the
  agnostic pac model}. In \bibinfo{booktitle}{\emph{Algorithmic Learning
  Theory}}. PMLR.
\newblock


\bibitem[Pace and Barry(1997)]%
        {pace1997sparse}
\bibfield{author}{\bibinfo{person}{R~Kelley Pace} {and} \bibinfo{person}{Ronald
  Barry}.} \bibinfo{year}{1997}\natexlab{}.
\newblock \showarticletitle{Sparse spatial autoregressions}.
\newblock \bibinfo{journal}{\emph{Statistics \& Probability Letters}}
  \bibinfo{volume}{33}, \bibinfo{number}{3} (\bibinfo{year}{1997}),
  \bibinfo{pages}{291--297}.
\newblock


\bibitem[Pandey et~al\mbox{.}(2023)]%
        {Pandey23}
\bibfield{author}{\bibinfo{person}{Shashi~Raj Pandey}, \bibinfo{person}{Pierre
  Pinson}, {and} \bibinfo{person}{Petar Popovski}.}
  \bibinfo{year}{2023}\natexlab{}.
\newblock \showarticletitle{Privacy-Aware Data Acquisition under Data
  Similarity in Regression Markets}.
\newblock \bibinfo{journal}{\emph{arXiv preprint arXiv:2312.02611}}
  (\bibinfo{year}{2023}).
\newblock


\bibitem[Sen et~al\mbox{.}(2024)]%
        {Sen24}
\bibfield{author}{\bibinfo{person}{Diptangshu Sen}, \bibinfo{person}{Jingyan
  Wang}, {and} \bibinfo{person}{Juba Ziani}.} \bibinfo{year}{2024}\natexlab{}.
\newblock \showarticletitle{Equilibria of Data Marketplaces with Privacy-Aware
  Sellers under Endogenous Privacy Costs}.
\newblock \bibinfo{journal}{\emph{arXiv preprint arXiv:2402.08826}}
  (\bibinfo{year}{2024}).
\newblock


\bibitem[Tram{\`e}r et~al\mbox{.}(2022)]%
        {Tramer22}
\bibfield{author}{\bibinfo{person}{Florian Tram{\`e}r}, \bibinfo{person}{Gautam
  Kamath}, {and} \bibinfo{person}{Nicholas Carlini}.}
  \bibinfo{year}{2022}\natexlab{}.
\newblock \showarticletitle{Position: Considerations for Differentially Private
  Learning with Large-Scale Public Pretraining}. In
  \bibinfo{booktitle}{\emph{Forty-first International Conference on Machine
  Learning}}.
\newblock


\bibitem[Wang et~al\mbox{.}(2021)]%
        {Wang19}
\bibfield{author}{\bibinfo{person}{Di Wang}, \bibinfo{person}{Huangyu Zhang},
  \bibinfo{person}{Marco Gaboardi}, {and} \bibinfo{person}{Jinhui Xu}.}
  \bibinfo{year}{2021}\natexlab{}.
\newblock \showarticletitle{Estimating smooth glm in non-interactive local
  differential privacy model with public unlabeled data}. In
  \bibinfo{booktitle}{\emph{Algorithmic Learning Theory}}. PMLR.
\newblock


\bibitem[Wasserman and Zhou(2010)]%
        {Wasserman10}
\bibfield{author}{\bibinfo{person}{Larry Wasserman} {and}
  \bibinfo{person}{Shuheng Zhou}.} \bibinfo{year}{2010}\natexlab{}.
\newblock \showarticletitle{A statistical framework for differential privacy}.
\newblock \bibinfo{journal}{\emph{J. Amer. Statist. Assoc.}}
  \bibinfo{volume}{105}, \bibinfo{number}{489} (\bibinfo{year}{2010}),
  \bibinfo{pages}{375--389}.
\newblock


\bibitem[Ye et~al\mbox{.}(2022)]%
        {ye2022one}
\bibfield{author}{\bibinfo{person}{Dayong Ye}, \bibinfo{person}{Sheng Shen},
  \bibinfo{person}{Tianqing Zhu}, \bibinfo{person}{Bo Liu}, {and}
  \bibinfo{person}{Wanlei Zhou}.} \bibinfo{year}{2022}\natexlab{}.
\newblock \showarticletitle{One parameter defense—defending against data
  inference attacks via differential privacy}.
\newblock \bibinfo{journal}{\emph{IEEE Transactions on Information Forensics
  and Security}}  \bibinfo{volume}{17} (\bibinfo{year}{2022}),
  \bibinfo{pages}{1466--1480}.
\newblock


\bibitem[Zhang et~al\mbox{.}(2012)]%
        {Zhang12functional}
\bibfield{author}{\bibinfo{person}{Jun Zhang}, \bibinfo{person}{Zhenjie Zhang},
  \bibinfo{person}{Xiaokui Xiao}, \bibinfo{person}{Yin Yang}, {and}
  \bibinfo{person}{Marianne Winslett}.} \bibinfo{year}{2012}\natexlab{}.
\newblock \showarticletitle{Functional Mechanism: Regression Analysis under
  Differential Privacy}.
\newblock \bibinfo{journal}{\emph{Proceedings of the VLDB Endowment}}
  \bibinfo{volume}{5}, \bibinfo{number}{11} (\bibinfo{year}{2012}).
\newblock


\bibitem[Zhang et~al\mbox{.}(2022)]%
        {zhang2022correlated}
\bibfield{author}{\bibinfo{person}{Tao Zhang}, \bibinfo{person}{Tianqing Zhu},
  \bibinfo{person}{Renping Liu}, {and} \bibinfo{person}{Wanlei Zhou}.}
  \bibinfo{year}{2022}\natexlab{}.
\newblock \showarticletitle{Correlated data in differential privacy: definition
  and analysis}.
\newblock \bibinfo{journal}{\emph{Concurrency and Computation: Practice and
  Experience}} \bibinfo{volume}{34}, \bibinfo{number}{16}
  (\bibinfo{year}{2022}), \bibinfo{pages}{e6015}.
\newblock


\end{thebibliography}

\appendix
\section{Multisets: Notations} \label{A:multisets}

A multiset $D$ containing elements from $\cX$ is represented as the tuple $D = (\cX,h)$ where $h: \cX \to \bbZ_{\geq 0}$ is the count of each element in $D$.

We occasionally use the denote a multiset using the notation $\{\{ \cdot \}\}$.
For example, for $a,b,c \in \cX$, the notation $\{\{a,a,b,c\}\}$ refers to a multiset with $h(a) = 2,\ h(b) = 1,\ h(c) = 1$.

\paragraph{Addition and Subtraction:} for $D = (\cX,h)$, $D' = (\cX,h')$, $D + D'$ is the multiset $(\cX,h + h')$. Similarly, for subtraction we have $D - D' = (\cX, (h-h')_+)$, where $(h-h')_+ (x) = \max\{0,h(x) - h'(x)\}$.

\paragraph{Size:} for $D = (\cX,h)$, define the size of the multiset as $|D| = \sum_{x \in \cX} h(x)$.

\section{Proof of Hypothesis Testing Property} \label{A:HT}
\thmHT*
\begin{proof}
Type I error rate is given by $e_1(R,D,D') = \Pr{M(D) \in R}$.
 By AHDP, for any measurable $R \in \subseteq \cY$, we have $ \Pr\{M(D) \in R^c\} \leq e^{d_{\alpha}(D,D')} \Pr\{M(D') \in R^c\}$.
Thus,
\begin{align}
    \text{AHDP} &\Leftrightarrow \Pr\{M(D) \in R^c\} \leq e^{d_{\alpha}(D,D')} \Pr\{M(D') \in R^c\} \\ 
    &\hspace{2cm} \forall D,D' \in \cS(\cW), R \subseteq \cY, \\
    &\Leftrightarrow 1 - e_1(R,D,D') \leq e^{d_{\alpha}(D,D')} e_2(R,D,D')\\ 
    &\hspace{2cm} \forall D,D' \in \cS(\cW), R \subseteq \cY. 
\end{align}
It may be noted that \cref{thm:HT} also implies $  e^{d_{\alpha}(D,D')}e_1(R,D,D') + e_2(R,D,D') \geq 1 \ \forall R \subseteq \cY$ by noting that $e_2(R,D,D') = e_1(\cY \setminus R,D',D)$.
\end{proof}

\section{Proof of Composition} \label{A:comp}
\BasicCom*
\begin{proof}
    We shall prove the proposition for the base of $M(D) = (M_1(D),M_2(D,M_1(D)))$. 
    The general version follows analogously; one can also use induction.
    For any measurable $S \in \cY_1 \times \cY_2$, let $S_1 = \{y_1 | \exists y_2 \in \cY_2, (y_1,y_2) \in \cS\}$ and $S_2(x) = \{y_2 | (x,y_2) \in S\}$ then we have 
    \begin{align}
        \Pr\{M(D) \in S\} &=  \Pr\{(M_1(D),M_2(D,M_1(D)) \in S\}\\
        &\hspace{-2cm}= \int_{y_1 \in S_1} \Pr\{M_2(D,M_1(D)) \in S_2(y_1) | M_1(D) = y_1 \} dP_{M_1(D)}(y_1) \\
        &\hspace{-2cm}= \int_{y_1 \in S_1} \Pr\{M_2(D,y_1) \in S_2(y_1)\} dP_{M_1(D)}(y_1) \\
        &\hspace{-2cm}\leq \int_{y_1 \in S_1} e^{d_{\alpha_2}(D,D')}\Pr\{M_2(D',y_1) \in S_2(y_1) \} dP_{M_1(D)}(y_1) \\
        &\hspace{-2.5cm}\leq \int_{y_1 \in S_1} e^{d_{\alpha_2}(D,D') + d_{\alpha_1}(D,D')}\Pr\{M_2(D',y_1) \in S_2(y_1) \} dP_{M_1(D')}(y_1) \\
        &\hspace{-2cm}= e^{d_{\alpha_2}(D,D') + d_{\alpha_1}(D,D')} \Pr\{M(D') \in S\}\\
        &\hspace{-2cm}= e^{d_{\alpha_2 + \alpha_1}(D,D')} \Pr\{M(D') \in S\}.
    \end{align}
\end{proof}

\section{Proof of Post-processing} \label{A:pp}
\PostProc*
\begin{proof}
For $S \in \cZ$, let $Y(S) = \{y \in \cY|K(y) \in S\}$.
Thus, we have
\begin{align}
    \Pr\{K(M(D)) \in S\} &= \Pr\{M(D) \in Y(S)\} \\
    &\leq e^{d_{\alpha}(D,D')} \Pr\{M(D') \in Y(S)\} \\
    &=  e^{d_{\alpha}(D,D')} \Pr\{K(M(D')) \in S\}.
\end{align}
\end{proof}

\section{Proofs Related to Power} \label{sec:cap}

\homoCap*

\begin{proof}
	(A) Without loss of generality, let $\cX = [k]$, $D_i = D_o \cup \{i\}$ for $i \in [k]$.
	Define $e_i = \Pr\{ \psi(M(D_i)) \neq i\}$, i.e., the probability of error under mechanism $M$ and adversary's classifier $\psi$ (which is based on knowledge of $M$).
	
	For a given mechanism $M$, and any $\psi$, by DP guarantee and post-processing, we have
	\begin{align}
		1-e_i \leq e^{\epsilon} \Pr\{\psi(M(D_j)) = i\}\ \forall i,j \in [k].
	\end{align}
	Adding over all $i \neq j$, we get
	\begin{equation}
		(k-1) \leq e^{\epsilon} e_j + \sum_{i \neq j} e_i.
	\end{equation}
	Letting $\pi_o(x) = \frac{1 + \bm1 \{x = j \}(e^{\epsilon} - 1) }{e^{\epsilon} + k - 1}$, we get
	\begin{align}
		\cP(M,\cH) &\leq \max_{\psi} \sum_{i = 1}^k \pi_o(i) (1- e_i) \label{eq:cvx} \\ 
		&\leq 1 - \frac{k-1}{e^{\epsilon} + k - 1} \\
		&= \frac{1}{1 + (k-1)e^{-\epsilon}},
	\end{align}
    where \eqref{eq:cvx} follows since $\min_{i} (1 - e_i) \leq \sum_{i} \pi(i)(1-e_i)$.
	
	(B) Now to prove the existence of $M'$ that leads to the power expression, consider the $\epsilon$ randomized response scheme.
	In particular, let $\cY = \cH$ and $\forall D \in \cH$, let
	\begin{align}
		\Pr\{M'(D) = D\} &= \frac{1}{1+ (k-1)e^{-\epsilon}} \\
		\Pr\{M'(D) = D'\} &= \frac{e^{-\epsilon}}{1+ (k-1)e^{-\epsilon}} \ \ D' \neq D.
	\end{align}
	The above mechanism is $\epsilon$-DP. Let $\psi$ be the identity map, then $\cP(\cM',\cH) \geq  \frac{1}{1 + (k-1)e^{-\epsilon}}$.
\end{proof}

\homoSwapCap*
\begin{proof}
	Let $D(S) = D_o + S$ where $S \in \cS(\cX)$ for $ |S| \leq t$ (including $S = \mset{\emptyset}$ to index $D_o$). 
	Define $e(S) = \Pr\{\psi(M(D(S))) \neq D(S)\}$.
	
	For a given mechanism $M$, and any $\psi$, we have
	\begin{equation}
		\Pr\{\psi(M(D(S))) = D(S) \} \leq  e^{|S|\epsilon} \Pr\{\psi(M(D_o)) = D(S)\},
	\end{equation}
	$\forall S \in \cS(\cX),  |S| \leq t$.
	Rearranging, we get
	\begin{align}
		e^{-|S|\epsilon} (1-e(S)) \leq \Pr\{\psi(M(D_o)) =  D(S)\}.
	\end{align}
	Adding over all choices of $S$ with $t \geq |S| \geq 1$, we get
	\begin{align}
		\sum_{S \in \cS(\cX): 0 < |S| \leq t}  e^{-|S|\epsilon} &\leq\sum_{S \in \cS(\cX): 0 \leq |S| \leq t}  e^{-|S|\epsilon} e(S).
	\end{align}

	Thus. similar to proof of \cref{claim:1}(A), we get
	\begin{align}
		\cP(\cM,\cH_t) &\leq \sum_{S \in \cS(\cX), |S| \leq t} (1-e(S))\pi_o(S) \\
		&= \frac{1}{ \sum_{S \in \cS(\cX), |S| \leq t} e^{-|S|\epsilon}}.
	\end{align}
	
	At this point, setting $t=1$, we get part (A2).
	For part (A1), note that
	\begin{align}
		\cP(\cM,\cH_t) &\leq \frac{1}{ \sum_{S \in \cS(\cX), |S| \leq t} e^{-|S|\epsilon}} \\
		&=  \frac{1}{  \sum_{S \in \cS(\cX), |S| \leq t} \prod_{x \in \cX} e^{-h_S(x)\epsilon}}.
	\end{align}
	Taking $\lim_{t \to \infty}$ and noting that 
	\begin{align}
		\sum_{S \in \cS(\cX), |S| \geq_0} \prod_{x \in \cX} e^{-h_S(x)\epsilon} &= \sum_{h_S(x) \geq 0 \forall x \in \cX} \prod_{x \in \cX} e^{-h_S(x)\epsilon} \\
		&=  \prod_{x \in \cX} \sum_{h_S(x) \geq 0}  e^{-h_S(x)\epsilon} \\
		&=  \prod_{x \in \cX} \frac{1}{1-e^{-\epsilon}}, 
			\end{align}
	we get the stated result. \\

	\textit{Converse direction:} (B1) follows similar to the converse proof of \cref{claim:1} using $\epsilon$ randomized response in add-remove model of neighbors. We focus on the converse of (A).
	
	Let $\cY = \cH_t$ and let $\cM'$ be such that 
	\begin{equation}
		\Pr\{\cM'(D) = D'\} \propto e^{-\epsilon d(D,D')},
	\end{equation}
	and let $\psi$ be the identity map.
	It is easy to verify that $M'$ is $\epsilon$-DP.
	Note that the mechanism $M'$ is $\epsilon$-DP on the whole of $\cS(\cX)$, not just $\cH_t$.
 	Thus, we have
	\begin{equation}
		\cP(\cM',\cH_t) \geq \min_{D \in \cH_t} \Pr_{D \sim \pi}\{\cM'(D) = D\}.
	\end{equation}
		Without loss of generality, denoting $\cX$ as $[k]$, we have $\forall D \in \cH_t$,
	\begin{align}
		\Pr\{M'(D) = D\} &= \frac{1}{\sum_{D' \in \cH_t} e^{-\epsilon d(D,D')}} \\
		&= \frac{1}{\sum_{D' \in \cH_t} \prod_{i=1}^k e^{-\epsilon |h_D (i) - h_{D'}(i)|}}.
	\end{align}
	Rewriting the above in a more convenient form : let $h_{D'}(i) = h_{D_o}(i) + v(i)$ then we can parameterize $D'$ by $v$ such that $v(i) \geq 0$, $\sum_{i=1}^k v(i) \leq t$. Let the set of all such $v$(s) be $\cV_t$.
	Then,
	\begin{align}
		\Pr\{M'(D) = D\} &= \frac{1}{\sum_{v \in \cV_t} \prod_{i=1}^k e^{-\epsilon |h_D (i) - h_{D_o}(i) - v(i)|}} \label{eq:B2}\\
		&\geq \frac{1}{\sum_{v \in \cV_{\infty}} \prod_{i=1}^k e^{-\epsilon |h_D (i) - h_{D_o}(i) - v(i)|}} \\
		&= \frac{1}{\prod_{i=1}^k \sum_{v(i) \geq 0} e^{-\epsilon |h_D (i) - h_{D_o}(i) - v(i)|}} \\		
&\geq \frac{1}{\prod_{i=1}^k \frac{1+e^{-\epsilon}}{1-e^{-\epsilon}}},  \label{eq:series}
	\end{align}
	where in \eqref{eq:series} we used $ \lim_{t \to \infty} \sum_{0 \leq v(i) \leq t} e^{-\epsilon |h_D (i) - h_{D_o}(i) - v(i)|} \leq  \lim_{t \to \infty}  \sum_{ -t \leq v(i) \leq t} e^{-\epsilon |v(i)|} = \frac{1+e^{-\epsilon}}{1-e^{-\epsilon}}$.
	Thus, we obtain $	\cP(\cM',\cH_t) \geq  \left( \frac{1+e^{-\epsilon}}{1-e^{-\epsilon}} \right)^k$. \\
	For (B2), use \eqref{eq:B2} with $t=1$ to get the stated result.
\end{proof}

\hetCap*
\begin{proof}
	Let $D(S) = D_o + S$ where $S \in \cS(\cW_o)$ for $ |S| \leq t$
	Note that we consider $S \in \cW_o$ in this proof, not $\cW$. 
	In essence, we shall come up with a prior on these hypotheses to upper bound the power.
	Define $e(S) = \Pr\{\psi(M(D(S))) \neq \Pi_{\cX}(D(S))\}$.
	Then, we AHDP guarantee, we have
	\begin{equation}
		e^{-d_{\alpha}(D(S),D_o)}(1-e(S)) \leq \Pr\{\psi(M(D_o)) = D(S)\} \ \forall S \in \cW_o, |S| \leq t.
	\end{equation}
	Summing over all $S \in \cW_o, 0 < |S| \leq t$, we get
	\begin{align}
		\sum_{S \in \cW_o, |S| \leq t, |S| \geq 1}e^{-d_{\alpha}(D_o, D_o + S)} \leq \sum_{S \in \cW_o, |S| \leq t}e^{-d_{\alpha}(D_o, D_o + S)} e(S).
	\end{align}
	Thus, upper bounding the power similar to \cref{claim:1}, we get
	\begin{align}
		\lim_{t \infty} \cP(M,\cH_t) &\leq 	\lim_{t \to \infty} \frac{1}{\sum_{S \in \cW_o, |S| \leq t}e^{-d_{\alpha}(D_o, D_o + S)}} \label{eq:cvx2} \\
		&= 	\lim_{t \to \infty} \frac{1}{\sum_{S \in \cW_o, |S| \leq t} \prod_{(x,\epsilon) \in \cW_o }e^{-\alpha(x,\epsilon)|h_{D_o +S}(x,\epsilon) - h_{D_o}(x,\epsilon)|}} \\
		&\leq 	\lim_{t \to \infty} \frac{1}{\sum_{S \in \cW_o, |S| \leq t} \prod_{(x,\epsilon) \in \cW_o }e^{-\epsilon |h_{D_o +S}(x,\epsilon) - h_{D_o}(x,\epsilon)|}} \label{eq:ade}\\
		&= \lim_{t \to \infty} \frac{1}{\sum_{S \in \cW_o, |S| \leq t} \prod_{(x,\epsilon) \in \cW_o }e^{-\epsilon |h_{S}(x,\epsilon)|}} \\
		&=  \frac{1}{ \prod_{(x,\epsilon) \in \cW_o } \sum_{h_S(x,\epsilon) \geq 0} e^{-\epsilon |h_{S}(x,\epsilon)|}} \\
		&=  \prod_{(x,\epsilon) \in \cW_o } (1- e^{-\epsilon}),
	\end{align}
	where in \eqref{eq:ade} we use the $\cW$-AHDP condition. 
    In \eqref{eq:cvx2}, setting $t=1$ leads to (A2). \\
    
	\textit{Converse direction:} \\ 
	  Let $\cY = \cH_{t,\cX} = \bigcup_{D \in \cH_t} \Pi_{\cX}(D)$.
	   Equivalently, $$\cH_{t,\cX} = \bigcup_{S \in \cS(\cW_o), |S| \leq t} \Pi_{\cX}(D_o + S).$$ 
	   Let $M'$ be such that
	  \begin{eqnarray}
	  	\Pr\{M'(D) = \Pi_{\cX}(D_o + S)\} \propto e^{-d'(\Pi_{\cX}(D),\Pi_{\cX}(D_o + S))},
	  \end{eqnarray}
	  where we define the distance $d'(F_1, F_2)$ for $F_1, F_2 \in \cS(\cX)$ as the distance using the most conservative possible privacy values for the given user data.
	  Concretely, let $\epsilon_l(x) = \min \{\epsilon| (x,\epsilon) \in \cW\}$, then $d'(F_1,F_2) = \sum_{x \in \cX} \epsilon_l(x) | h_{F_1 }(x) -  h_{F_2 }(x) |$
	  This distance measure can also be thought of as mapping $F_1, F_2$ to the space $\cS(\cW_o)$ and using $d_{\alpha}$ distance in that space with $\alpha(x,\epsilon) = \epsilon$.
	  
	  First, let's verify that this $M'$ is indeed $\cW$-AHDP. Observe that 
	  \begin{align}
	  	\frac{	\Pr\{M'(D_1) = \Pi_{\cX}(D_o + S)\}}{	\Pr\{M'(D_2) = \Pi_{\cX}(D_o + S)\}} &= \frac{e^{d'(\Pi_{\cX}(D_2),\Pi_{\cX}(D_o + S))}}{e^{-d'(\Pi_{\cX}(D_1),\Pi_{\cX}(D_o + S))}} \\
	  	&\leq e^{-\sum_{x \in \cX} r(x) |h_{\Pi_{\cX}(D_1)}(x) - h_{\Pi_{\cX}(D_2)}(x)|} \\
		&= e^{-\sum_{(x,\epsilon) \in \cX} r(x) |h_{D_1}(x,\epsilon) - h_{D_2}(x,\epsilon)|} \\ 
		&\leq e^{-\sum_{(x,\epsilon) \in \cX} \epsilon |h_{D_1}(x,\epsilon) - h_{D_2}(x,\epsilon)|}. 
	  \end{align}
	  Thus, $M'$ is $\cW$-AHDP. Note that this mechanism is $\cW$-AHDP over all inputs $D \in \cS(\cW)$, not just $\cH_t$.
	
	Let $\psi$ be the identity map. 
	We now analyze $\Pr\{M'(D) = \Pi_{\cX}(D)\}$. 
	Let $D = D_o + S$ for $S \in \cS(\cW), |S| \leq t$, we get
	\begin{align}
		&\Pr\{M(D_o + S) = \Pi_{\cX}(D_o + S)\} \\
		&= \frac{1}{\sum_{V \in \cS(\cW), |V| \leq t} e^{-d'(\Pi_{\cX}(D_o + S),\Pi_{\cX}(D_o + V))}} \\
	&= \frac{1}{\sum_{V \in \cS(\cW), |V| \leq t} \prod_{x \in \cX} e^{-\epsilon_l(x)|h_{D_o + S}(x) - h_{D_o + V}(x)|}} \\				
	&= \frac{1}{\sum_{V \in \cS(\cW), |V| \leq t} \prod_{x \in \cX} e^{-\epsilon_l(x)|h_{S}(x) - h_{V}(x)|}} \label{eq:cvx3} \\
	&\geq \frac{1}{\sum_{V \in \cS(\cW): h_V(x) \geq 0 \forall x \in \cX} \prod_{x \in \cX} e^{-\epsilon_l(x)|h_{S}(x) - h_{V}(x)|}} \\
	&= \frac{1}{ \prod_{x \in \cX} \sum_{h_V(x) \geq 0}  e^{-\epsilon_l(x)|h_{S}(x) - h_{V}(x)|}} \\
	&\geq  \frac{1}{ \prod_{x \in \cX} \sum_{h_V(x) = -\infty}^{\infty}  e^{-\epsilon_l(x)|h_{S}(x) - h_{V}(x)|}} \\
	&= \prod_{x \in \cX} \left( \frac{1 - e^{-\epsilon_l(x)}}{1 + e^{-\epsilon_l(x)}}\right) \\
	&= \prod_{(x,\epsilon) \in \cW_o} \left( \frac{1 - e^{-\epsilon}}{1 + e^{-\epsilon}}\right).
	\end{align}
	Thus, we get the stated lower bound on power.
    In \eqref{eq:cvx3}, setting $t=1$ leads to (B2).
\end{proof}

\hetCapSec*
\begin{proof}
	(A) By AHDP guarantee, we have 
	\begin{equation}
		1 - \Pr\{ \psi(M(D_o + \mset{(x,\epsilon)})) = D_o  \} \leq e^{-\epsilon}  \Pr\{ \psi(M(D_o)) =   D_o + \mset{(x,\epsilon)} \}.
	\end{equation}	
	We thus get the upper bound on power $\frac{1}{1+e^{-\epsilon}}$. \\	
	(B) Let $\cY = \cH(x,\epsilon)$, $\alpha(x,\epsilon) = \epsilon$, $\Pr\{M'(D) = D'\} \propto e^{-d_{\alpha}(D,D')}$, and $\psi$ is the identity function.
	Note that $M'$ is $\cW$-AHDP.
	Thus, 
	$\Pr\{M'(D_o) = D_o\} = \frac{1}{1+e^{-\alpha(x,\epsilon)}} = \frac{1}{1 + e^{-\epsilon}} $.
	Similarly, $\Pr\{M'(D_o + \mset{(x,\epsilon)} ) = D_o  + \mset{(x,\epsilon)} \} = \frac{1}{1 + e^{-\epsilon}} $.
	Therefore, $\cP(M',\cH(x,\epsilon)) \geq \frac{1}{1+e^{-\epsilon}}$. 
\end{proof}
\section{Sample Mechanism: Proof of Privacy} \label{sec:SM}

We provide the proof of privacy for Sample Mechanism. 
The proof is quite similar to \citet{Jorg15}.
Represent the sub-sampling mechanism as $C: \cS(\cW) \to \cS(\cW)$ and let $\phi_t: \cS(\cW) \to \cY$ be the homogeneous $t$-DP mechanism.

Without loss of generality, we may prove the AHDP guarantee for neighboring datasets $D,D'$.
In particular, let $D' = D + \{\{(x,\epsilon)\}\}$.
We shall show $|  \log \frac{\Pr\{\phi_t(C(D')) \in S\}}{\Pr\{\phi_t(C(D)) \in S\}} | \leq \alpha(x,\epsilon) \wedge t$ for any $S \subseteq \cY$.

\begin{align}
	\Pr\{\phi_t(C(D')) \in S\} &= \Pr\{\phi_t(C(D)) \in S\} \Pr\{(x,\epsilon) \text{ not sampled}\}\\ 
	&+  \Pr\{\phi_t(C(D) + \{\{(x,\epsilon)\}\}) \in S\} \Pr\{(x,\epsilon) \text{ sampled}\} \\
	&=  \frac{e^t - e^{\alpha(x,\epsilon) \wedge t}}{e^t - 1} \Pr\{\phi_t(C(D)) \in S\} \\ 
	&+   \frac{e^{\alpha(x,\epsilon) \wedge t}-1}{e^t - 1} \Pr\{\phi_t(C(D) + \{\{(x,\epsilon)\}\}) \in S\} \\
&\leq  \frac{e^t - e^{\alpha(x,\epsilon) \wedge t}}{e^t - 1} \Pr\{\phi_t(C(D)) \in S\} \\ 
&+   \frac{e^{\alpha(x,\epsilon) \wedge t}-1}{e^t - 1} e^t\Pr\{\phi_t(C(D)) \in S\} \\
&= e^{\alpha(x,\epsilon) \wedge t} \Pr\{\phi_t(C(D)) \in S\}.
\end{align}

To show $ \frac{\Pr\{\phi_t(C(D')) \in S\}}{\Pr\{\phi_t(C(D)) \in S\}} \geq e^{- (\alpha(x,\epsilon) \wedge t)}$, we note that similarly,
\begin{align}
	\Pr\{\phi_t(C(D')) \in S\} &= \Pr\{\phi_t(C(D)) \in S\} \Pr\{(x,\epsilon) \text{ not sampled}\}\\ 
	&+  \Pr\{\phi_t(C(D) + \{\{(x,\epsilon)\}\}) \in S\} \Pr\{(x,\epsilon) \text{ sampled}\} \\
	&=  \frac{e^t - e^{\alpha(x,\epsilon) \wedge t}}{e^t - 1} \Pr\{\phi_t(C(D)) \in S\} \\ 
	&+   \frac{e^{\alpha(x,\epsilon) \wedge t}-1}{e^t - 1} \Pr\{\phi_t(C(D) + \{\{(x,\epsilon)\}\}) \in S\} \\
	&\geq  \frac{e^t - e^{\alpha(x,\epsilon) \wedge t}}{e^t - 1} \Pr\{\phi_t(C(D)) \in S\} \\ 
	&+   \frac{e^{\alpha(x,\epsilon) \wedge t}-1}{e^t - 1} e^{-t}\Pr\{\phi_t(C(D)) \in S\} \\
	&=\left( \frac{e^t - e^{-t}}{e^t - 1} - \frac{e^{\alpha(x,\epsilon) \wedge t}}{e^t} \right) \Pr\{\phi_t(C(D)) \in S\}.
\end{align}
Now, we claim that 
\begin{equation}
	\frac{e^t - e^{-t}}{e^t - 1} - \frac{e^{y \wedge t}}{e^t} \geq e^{-(y \wedge t)} \ \ \forall y,t \geq 0, 
\end{equation}
which would prove the desired privacy guarantee.
For $y \geq t$, the claim is trivially true. 
For $0 \leq y < t$, we have 
\begin{align}
	\frac{e^t - e^{-t}}{e^t - 1} - \frac{e^{y \wedge t}}{e^t} &= \frac{e^t - e^{-t}}{e^t - 1} - \frac{e^{y}}{e^t} \\
	&=  1 + \frac{1}{e^t} - \frac{e^{y}}{e^t} \\
	&=  e^{y}(e^{-y} - e^{-t}) + e^{-t} \\
	&\geq 	e^{-y} - e^{-t} + e^{-t} \\
	&= e^{-(y \wedge t)},
\end{align}
where we used $e^y \geq 0$.

\section{Linear Regression - Proof of Privacy} \label{sec:LRP}

We shall show that the $(i,j)$-th element of $A$ is $\alpha/(d^2 + d)$-AHDP.

Suppose $D' = D + \mset{(z,\epsilon)}$ and let datapoint be the $n+1$-th element of $D'$.
Let output matrix $A$ be $A_D$ and $A_{D'}$ for input datasets $D$ and $D'$ respectively.
We have
\begin{align*}
    \frac{\Pr\{A_D(i,j) =s \} }{\Pr\{ A_{D'}(i,j) =s \} } &= \exp\{ -\frac{1}{d^2 + d}|s - \sum_{k=1}^n x_{ki}x_{kj}w_{k}| \} \\
    &\times \exp\{ \frac{1}{d^2 + d}|s - \sum_{k=1}^{n+1} x_{ki}x_{kj}w_{k}| \} \\
    &\leq \exp\{ \frac{1}{d^2 + d} | x_{n+1,i}x_{n+1,j}\alpha(z,\epsilon)|\} \\
    &\leq \exp\{ \frac{\alpha(z,\epsilon)}{d^2 + d}\}.
\end{align*}

Similarly, $i$-th element of $b$ is $\alpha/(d^2 + d)$-AHDP, and basic composition yields the overall AHDP guarantee.

\section{Experiment Details} \label{sec:exp-det}

We used OpenAI's GPT-4o \cite{hurst2024gpt} for the experiments.
For the first experiment, we queries the LLM with the following prompt.
\begin{quote}
    I am collecting user data which is your weight. 
    I will use this for my algorithm, but you are allowed to tell me the privacy level you want in terms of epsilon in epsilon-DP. 
    That is, if you ask for x level, I will provide your data (weight) with x-DP privacy protection in the framework of differential privacy.  
    Do this exercise 10 times: sample your weight in kg (2 decimal points) from a reasonable distribution. Based on your weight, think how much privacy you want - do you want more privacy or less? 
    Only return the tuples of (weight, epsilon) for the 10 samples, new line for each sample.
    Your privacy demand can range from 0 to 3.
\end{quote}
We then iterated the same experiment several times to generate the dataset. The presence of randomness in the LLM outputs along with setting the temperature higher ensured that we did not obtain duplicated data.
It was not feasible to get the LLM to generate hundreds of datapoints in one query as it would switch to making a python script to generate the dataset.

For the second experiment, we queried the LLM with the following prompt several times
\begin{quote}
    Imagine you are a human user and I am a service provider.
    I am collecting user data which is your education level. 
    The possible values are: 1 = no education, 2 =  primary school, 3 =  secondary school, 4 = bachelor, 5 =  master, 6 = PhD.
    I offer you 5 privacy levels in the framework of Differential Privacy.
    Recall that low epsilon means high privacy and high epsilon means low privacy. 
    The epsilon values are: 0.01, 0.1, 0.5, 1, and 5.
    Do this exercise 10 times: sample your education level from a reasonable distribution (location: random country in the world). Based on your education level, think how much privacy you want - do you want more privacy or less? 
    Only return the tuples of (education level, epsilon) for the 10 samples, new line for each sample.
\end{quote}

\textbf{Frequency estimation details:} Suppose $\cX = [k]$, then, for a dataset $D$, compute $S(D) = \sum_{(x,\epsilon) \in T(D)}\alpha_2(x,\epsilon) + \cL(1)$.
This is $\alpha_2$-AHDP estimate of the size of the dataset.
For each $i \in [k]$, independently compute $N_i(D) = \sum_{(x,\epsilon) \in T(D)}h_D(x,\epsilon)\alpha_1(x,\epsilon) \bbI\{x = i\} + \cL(1)$.
The mechanism output is $N_i(D)/S(D)$ for the $i$-th bin.

To show AHDP propert, consider $D' = D + \mset{(a,\epsilon)}$.
For any $t_1,\ldots,t_k$
\begin{align*}
    \frac{\Pr\{N(D)_i  = t_i \forall i \in [k] \}}{\Pr\{N(D')_i  = t_i \forall i \in [k] \}} &= \prod_i e^{|t_i - \sum_{(x,\epsilon) \in T(D')}h_{D'}(x,\epsilon)\alpha_1(x,\epsilon) \bbI\{x = i\}|} \\
    & \times \prod_i e^{- |t_i - \sum_{(x,\epsilon) \in T(D)}h_D(x,\epsilon)\alpha_1(x,\epsilon) \bbI\{x = i\}| } \\
    &\leq \prod_i e^{\sum_{(x,\epsilon) \in \cW}\alpha_1(x,\epsilon) \bbI\{x = i\}|h_D(x,\epsilon) - h_{D'}(x,\epsilon)|} \\
    &\leq e^{\alpha_1(a,\epsilon)}.
\end{align*}
Thus, releasing $N_1(D),\ldots,N_k(D)$ independently is $\alpha_1$-AHDP.
By composition, the proposed mechanism is $\alpha_1 + \alpha_2$-AHDP.

\end{document}